\begin{document}

\citestyle{acmauthoryear}

\title{Polyregular functions on unordered trees of bounded height}
\author{Miko\l{}aj Boja\'nczyk}
\orcid{0000-0002-7758-1072}
\affiliation{\institution{University of Warsaw} \country{Poland}}
\author{Bartek Klin}
\orcid{0000-0001-5793-7425}
\affiliation{\institution{University of Oxford} \country{UK}}

\begin{abstract}
We consider injective first-order interpretations that input and output trees of bounded height. The corresponding functions have polynomial output size, since a first-order interpretation can use a $k$-tuple of input nodes to represent a single output node. We prove that the equivalence problem for such functions is decidable, i.e.~given two such interpretations, one can decide whether, for every input tree, the two output trees are isomorphic. %This result is incomparable to the open problem of deciding equivalence for polyregular string-to-string functions.

We also give a calculus of typed functions and combinators which derives exactly injective first-order interpretations for unordered trees of bounded height. The calculus is based on a type system, where the type constructors are products, coproducts and a monad of multisets. Thanks to our results about tree-to-tree interpretations, the equivalence problem is decidable for this calculus.

As an application, we show that the equivalence problem is decidable for first-order interpretations between classes of graphs that have bounded tree-depth. In all cases studied in this paper, first-order logic and \mso have the same expressive power, and hence all results apply also to \mso interpretations.
\end{abstract}

\begin{CCSXML}
<ccs2012>
<concept>
<concept_id>10003752.10003766.10003773.10003774</concept_id>
<concept_desc>Theory of computation~Transducers</concept_desc>
<concept_significance>500</concept_significance>
</concept>
<concept>
<concept_id>10003752.10003766.10003772</concept_id>
<concept_desc>Theory of computation~Tree languages</concept_desc>
<concept_significance>500</concept_significance>
</concept>
<concept>
<concept_id>10003752.10003790</concept_id>
<concept_desc>Theory of computation~Logic</concept_desc>
<concept_significance>500</concept_significance>
</concept>
</ccs2012>
\end{CCSXML}

\ccsdesc[500]{Theory of computation~Transducers}
\ccsdesc[500]{Theory of computation~Tree languages}
\ccsdesc[500]{Theory of computation~Logic}

\keywords{polyregular functions, first-order interpretation, unordered trees, bounded height}

\maketitle

\section{Introduction}
This paper is about computational models that transform objects such as strings or trees, are powerful enough to describe interesting programs, but are weak enough to have a decidable halting problem. The point of departure is the class of \emph{polyregular functions}~\cite{polyregular-survey}. This is  a class of string-to-string functions that contains functions such as 
\begin{align*}
\myunderbrace{123 \mapsto 123123123}{squaring} 
\qquad 
\myunderbrace{123 \mapsto 321}{reverse},
\end{align*}
and that can be defined by many equivalent models of computation, including: pebble transducers~\cite[Section 3.1]{DBLP:journals/jcss/MiloSV03}, a certain imperative programming language~\cite[Section 1]{polyregular-survey}, several functional programming languages with the same expressive power~\cite[Sections 3 and 4]{polyregular-survey} and~\cite{polyregular-fold}, and logical interpretations~\cite[Theorem 7]{msoInterpretations}. The general idea is that the polyregular functions are those string-to-string functions which have polynomial output size, and which can be computed by devices similar to automata. Due to their similarity to finite automata, certain problems are known to be decidable for polyregular functions (see e.g.~\cite[Corollary 1.5]{polyregular-survey}). An outstanding open problem about polyregular functions is decidability of the equivalence problem, i.e.~it is not known if one can decide whether two devices (transducers, programs, interpretations) compute the same polyregular function. In this paper we make progress on this question by answering it in a different setup, where instead of strings the underlying data structure is nested multisets. 

Let us begin by explaining the objects studied here. One description (see Section~\ref{sec:multiset-types}) is that these are data types that are constructed from finite sets using the multiset constructor 
\begin{align*}
\multisets X = \text{finite multisets of elements from $X$}.
\end{align*}
(We also allow products and coproducts, but it is only the multiset constructor that can create infinite sets.)  We use the name \emph{multiset types} for such types\footnote{The idea to consider regular functions for multiset types was suggested to us by Marcelo Fiore.}. Another description of multiset types is that they describe trees, since nested multisets can be represented as trees, as explained in the following picture: 
\mypic{26} 
Importantly, the trees are unordered (there is no order on siblings, since multisets are unordered), and their height is bounded (by the nesting depth of the multiset constructor in a multiset type).

This paper is about functions on multiset types or, equivalently, unordered trees of bounded height. The contributions of this paper are:
\begin{enumerate}
    \item In Section~\ref{sec:multiset-types}-\ref{sec:combinators}, we introduce a notion of \emph{polyregular function} for such objects. These are functions between multiset types (or, equivalently, unordered trees of bounded height), that have polynomial output size, and that coincide with regular languages in the case of Boolean outputs. One example of such a polyregular function is the product operation $(\multisets \Sigma) \times (\multisets \Gamma) \to \multisets (\Sigma \times \Gamma)$, that inputs two multisets,  of cardinalities $n$ and $m$, respectively, and outputs a single multiset of cardinality $n \cdot m$. This function has quadratic output size. Another example, which has linear output size, is the multiset union operation of type $\multisets \multisets \Sigma \to \multisets \Sigma$. 
    In our definition of polyregular functions, we propose two models, and prove that they are equivalent. The first model is first-order interpretations, which are a classical model of describing functions using logical formulas~\cite[Chapter 5]{hodges1993model}. The second model uses certain prime functions (such as multiset union) and combinators (such as sequential composition of functions), inspired by a similar system for string-to-string polyregular functions~\cite[Section 3]{polyregular-survey}. The proof that these models describe the same functions on multiset types is inspired by a similar result for strings~\cite[Theorem 7]{msoInterpretations}, except that in the multiset case we can avoid many technicalities that appear in the string case. One of the useful features of multiset types is that first-order (\fo) logic has the same expressive power as monadic second-order (\mso) logic. 
    \item In Sections~\ref{sec:quantifier-free-tree-to-tree}-\ref{sec:equivalence-full},  we prove that the equivalence problem is decidable for the polyregular functions on multiset types. This means that one can decide if two programs represent the same function, i.e.~for every input (which is taken from some multiset type), the two possible outputs are equal. For strings, this problem remains open~\cite[Section 8]{polyregular-survey}, and the decidability result on multisets from this paper is the first significant progress on the problem (Section~\ref{sec:connections-with-strings}, we comment on the connection between the two problems). Our algorithm uses two main ingredients: a quantifier-elimination result, and an algorithm for the quantifier-free case that uses a data structure that involves trees with edge weights taken from a polynomial ring $\mathbb Z[X]$. The equivalence algorithm is the main technical contribution of this paper. 
    \item Finally, in Section~\ref{sec:tree-depth} we illustrate the strength of the equivalence result from the previous item by extending it to graphs of bounded tree-depth.
    %  To our best knowledge, this is the first algorithm deciding equivalence for graph-to-graph functions. The difficulty of the problem is that two functions can produce the same (i.e.~isomorphic) output graph in different ways, and the algorithm needs to be aware of that. Unlike the string case, where the equivalence problem is open for functions of polynomial output size but known to be decidable for linear output size, in the case of graphs no results were known even for linear output size~\cite{remarksGraphtoGraph}.
\end{enumerate}

\paragraph*{Context.}
   One could place the present paper within the wider context of Hilbert's Entscheidungsproblem~\cite{borger2001classical},  which is to decide if a first-order formula is true. This problem can be seen as deciding equality on (the semantics) of formulas, i.e whether a formula $\varphi$ is equal to another formula, such as ``true'' or ``false''. The  problem is famously undecidable, as shown by Church and Turing. One of the most successful methods of recovering decidability has been to use automata methods (tree automata, \mso logic, tree decompositions, etc). The scope 
of such methods is now well understood; they will work if and only if one considers structures that are similar to trees, see~\cite[Theorem 8]{seese1991structure}  and~\cite[Theorem 5.6]{courcelle2007vertex} for formalisations of this statement. 

In this paper, we address a functional version of the  problem: instead of deciding equality on logically defined languages (functions with Boolean outputs), we want to decide it on logically defined functions (which input and output structures). In the functional version, logical formulas are replaced by logical interpretations, and  one asks if for every input, the outputs produced by the two interpretations are isomorphic. This makes the problem nontrivial for  output structures with nontrivial isomorphisms, such as graphs, or even unordered trees.  This difficulty is not present in the language version, because isomorphism creates difficulties only for the output structures. Making progress on this problem seems to be very hard, even for tame structures such as unordered trees of bounded height or graphs of bounded tree-depth, as considered in this paper.

To our best knowledge, the results of this paper about interpretations on bounded tree-depth gives are the first known decidability results about equivalence for graph-to-graph interpretations. Previous results  about equivalence involved trees, with the state of the art being the results on tree-to-tree interpretations from~\cite[Corollary 8.2]{seidlManethKemper2018} and~\cite[Section 3]{boiretReducingTransducerEquivalence2018}. These papers deal with trees of unbounded depth, but the interpretations are restricted to have linear output size. The techniques are based on Hilbert's Basis Theorem and  seem to fail for interpretations of non-linear output size, and also have a certain fragility regarding the type of structures that can be modelled (for example, the results work for linear interpretations on trees with a distinguished root, but the case of trees without distinguished roots remains open~\cite[p.~7]{remarksGraphtoGraph}). The only known decidability results for equivalence of non-linear interpretations is for functions that output numbers~\cite[Corollary 19]{Doueneau-Tabot21}; the techniques used there are based on weighted automata and seem to apply only to outputs that are numbers.

Nested multisets have been studied in database theory, in the context of nested relational algebra~\cite{BUN19953,CLW14,CW19}. Calculi of typed functionals are used there to define transformations between nested collections, similarly to what we aim to do here. It appears that our approach is more restricted than most of the calculi considered there, with the advantage of decidable equivalence. Potential applications of our results in nested relational calculi remain to be investigated.

%Due to lack of space, some of the proofs have been elided. They can be found in appendices to the full version of this paper~\cite{BK24-full}.

\section{Multiset types}
\label{sec:multiset-types}
One way of describing trees of bounded height is to view them as values of certain algebraic datatypes. The essential ingredient will be the multiset type constructor, since an unordered tree can be seen as the multiset of its child subtrees. We begin the paper with this approach. Apart from the multiset constructor, we also allow products and coproducts, which will be useful for typing various helper functions. 

\begin{definition}[Multiset types]
 A multiset type is any type that can be generated using the following type constructors:
 \begin{align*}
 \myunderbrace{1}{a set with \\ \scriptsize one element}
 \qquad \qquad \myunderbrace{\Sigma + \Gamma}{disjoint union \\ \scriptsize of sets $\Sigma$ and $\Gamma$}
 \qquad \qquad \myunderbrace{\Sigma \times \Gamma}{product of \\ \scriptsize sets $\Sigma$ and $\Gamma$}
 \qquad \qquad \myunderbrace{\multisets \Sigma}{multisets \\ \scriptsize over set $\Sigma$}.
 \end{align*}
\end{definition}
Strictly speaking, this definition introduces two notions: a syntax of multiset types, and an obvious semantics of each type as a certain set of values. There is no need to be pedantic about this distinction for the moment, but in Section~\ref{sec:logic} we will introduce a more elaborate semantics of multiset types as sets of relational structures over certain vocabularies.

\begin{myexample}\label{ex:unlabelled-trees}
$\multisets^k 1$ is the type of unordered trees of height at most $k$. (The \emph{height} of a tree is defined to be the maximal number of edges on a root-to-leaf path.) The idea is that a tree is represented as the multiset of the representations of its child subtrees. For example, consider $k=2$ and the following tree
 \mypic{6}
 Assuming that we use double set brackets to represent multisets, the above tree is represented as the following multiset of multisets:
 \begin{align*}
 \multiset{\multiset{\ast,\ast}, \emptyset, \emptyset } \in \multisets^2 1,
 \end{align*}
 where $\ast$ denotes the unique element of the set $1$.
\end{myexample}
\begin{myexample} \label{ex:trees} We can extend the trees from the previous example by adding labels from some type $\Sigma$. There are two variants of such trees of height at most $k$:
\begin{itemize}
\item $\edgetrees k \Sigma$: edge-labeled trees, and
\item $ \nodetrees k \Sigma$: node-labeled tress.
\end{itemize}
 These constructors can be simulated using multiset types, by induction on $k$:
 \begin{align*}
 \edgetrees 0 \Sigma & = 1 & \nodetrees 0 \Sigma & = \Sigma \\
 \edgetrees {k+1} \Sigma & = \multisets \left(\Sigma \times \edgetrees k \Sigma\right) 
 & \nodetrees {k+1} \Sigma &= \Sigma \times \left(\multisets \nodetrees k \Sigma\right)
 \end{align*}
 \end{myexample}

We want to study a class of functions between multiset types which we call the \emph{polyregular functions}, meant to be an analogue of the string-to-string polyregular functions from~\cite{bojanczykPolyregularFunctions2018}. There will be two equivalent definitions of this class, one using \fo interpretations (Section~\ref{sec:logic}), and one using combinators (Section~\ref{sec:combinators}).

 \section{Logic}
\label{sec:logic}
We describe a logical approach to polyregular functions on multiset types. The idea is to view each multiset type as a class of structures, in the sense of model theory, and to use \fo interpretations to transform the structures. \fo interpretations are the usual kind of functions from structures to structures, where the universe of the output structure is defined using tuples of elements in the input structure, and the relations of the output structure are defined using \fo formulas. Precise definitions are given below. 

% One of the advantages of multiset types, as opposed to more complicated types such as types using the list constructor $\Sigma^*$, is that first-order logic and more expressive logics such as \mso have the same expressive power over multiset types.

\paragraph*{\fo interpretations.}
We assume that the reader is familiar with the basic notions of first-order (\fo) logic. We use the following terminology. A \emph{vocabulary} is a finite set of relation names, each one with an associated arity in $\set{0,1,2,\ldots}$. A \emph{structure} over such a vocabulary consists of a set, called its \emph{universe}, and for each relation name in the vocabulary a corresponding relation -- of same arity -- over the universe. We assume that the universe in a structure is nonempty.

When we talk about a \emph{class of structures}, we mean any class that contains structures, over the same vocabulary, which is closed under isomorphism. To transform structures of one class into structures of another class, we use \fo interpretations as defined below. This notion is commonly used in model theory (see e.g.~\cite[Chap.~5.3]{hodges1993model}). For a simple example, see Ex.~\ref{ex:numbers} below.

\begin{definition}\label{def:fo-interpretation}
 Let $\Sigma$ and $\Gamma$ be classes of structures. A function $f : \Sigma \to \Gamma$
is called an \fo interpretation if it can be described as follows: there is a finite set $I$, called the \emph{components} of the interpretation, with each component $i \in I$ associated to a \emph{dimension} $k_i \in \set{0,1,\ldots}$, and for every input structure $A\in\Sigma$ the output structure $f(A)$ is defined by: 
\begin{itemize}
\item {\bf Universe.} For every component $i$ there is an \fo formula $\varphi_i$ over the vocabulary of $\Sigma$ with $k_i$ free variables, such that the universe of $f(A)$ is 
\begin{align*}
\coprod_{i \in I} \setbuild {\bar a \in A^{k_i}}{$A \models \varphi_i(\bar a)$}.
% \coprod_{i \in I} \setbuild{\bar a \in A^{k_i}}{$A \models \varphi_i(\bar a)$}.
\end{align*}
Elements of the above disjoint union will be written as pairs $(i, \bar a)$.
\item {\bf Relations.} For every relation name $R$, say of arity $n$, in the vocabulary of $\Gamma$ and every tuple of components $i_1,\ldots,i_n \in I$, there is an \fo formula $\varphi$ over the vocabulary of $\Sigma$ with $k_{i_1} + \cdots + k_{i_n}$ free variables, such that 
\begin{align*}
f(A) \models R((i_i,\bar a_1),\ldots, (i_n,\bar a_n)) \qquad \iff \qquad A \models \varphi(\bar a_1 \cdots \bar a_n)
\end{align*}
for all $(i_1, \bar a_1),\ldots,(i_n, \bar a_n)$ in the universe of $f(A)$. 
\end{itemize}
\end{definition}

\begin{myremark}
\label{remark:non-injective}
This definition is what is sometimes called an \emph{injective} \fo interpretation. In model theory one typically uses a more general notion of interpretation, where the universe of the output structure is a quotient
\begin{align*}
 \coprod_{i \in A} \setbuild {\bar a \in A^{k_i}}{$A \models \varphi_i(\bar a)$}_{/\sim_i},
 % \coprod_{i \in I} \setbuild{\bar a \in A^{k_i}}{$A \models \varphi_i(\bar a)$}.
 \end{align*}
 where each $\sim_i$ is some equivalence relation that is defined by a formula with $2k_i$ free variables. The formulas defining the relations of the output structure are required to be invariant under these equivalences. This is a more powerful model, which we call \emph{non-injective} interpretations, and which will not be discussed in this paper.
\end{myremark}

\paragraph*{Multiset types as classes of structures.}
A multiset type can be viewed as a class of structures, as explained in the following inductive definition.

\begin{definition}\label{def:type-constructors-on-structures}
 The type constructors are extended to classes of structures as follows:
 \begin{description}
 \item[$1$:] This is the class of structures over the empty vocabulary, that contains only one structure: a universe with one element.
 \item[${\Sigma \times \Gamma}$:] If $\Sigma$ and $\Gamma$ are classes of structures, then $\Sigma\times\Gamma$ is the class over vocabulary
 \begin{align*}
 voc(\Sigma \times \Gamma) = voc(\Sigma) + voc(\Gamma) + \myunderbrace{R(x)}{one extra unary relation name}
 \end{align*}
(here $+$ means disjoint union of vocabularies) that is defined as follows. A structure in $\Sigma \times \Gamma$ is obtained by taking a structure $A \in \Sigma$ and a structure $B \in \Gamma$, and returning the following pair structure $(A,B)$: extend both $A$ and $B$ to the vocabulary $\voc\Sigma + \voc\Gamma$ by interpreting new relation names as empty sets, then take their disjoint union, and finally interpret $R$ as the set of elements that come from $A$. 
 \item[${\Sigma + \Gamma}$:] If $\Sigma$ and $\Gamma$ are classes of structures, then $\Sigma + \Gamma$ is the class over vocabulary
 \begin{align*}
 voc(\Sigma + \Gamma) = voc(\Sigma) + voc(\Gamma) + \myunderbrace{R}{one extra $0$-ary relation name}
 \end{align*}
 which contains structures that are obtained by either: 
 \begin{enumerate}
  \item taking a structure in $\Sigma$, extending it to $\voc \Sigma+ \voc \Gamma$ and interpreting $R$ as true; or
  \item taking a structure in $\Gamma$, extending it to $\voc \Sigma+ \voc \Gamma$ and interpreting $R$ as false.
 \end{enumerate} 
  \item[$\multisets \Sigma$:] If $\Sigma$ is a class of structures,  then $\voc{\multisets \Sigma}$ arises from $\voc \Sigma$  by replacing every $0$-ary relation by a unary relation, and then adding an extra binary relation $\sim$.
  A structure in $\multisets \Sigma$ is obtained by:
  \begin{enumerate}
  \item taking the disjoint union %\footnote{We use the natural notion of disjoint union of structures over a common vocabulary.} 
  of some structures $A_1,\ldots,A_n \in \Sigma$,
  \item replacing every $0$-ary predicate $R$ from $\voc\Sigma$ by a unary relation that holds for all elements of those $A_i$ for which $R$ is true,
   \item interpreting $\sim$ as the equivalence relation whose equivalence classes are the universes of the structures $A_1,\ldots,A_n$, 
   \item and adding an extra element (called the \emph{root}) that does not participate in any relations\footnote{The root serves two purposes: (a) it guarantees that the universe is nonempty even if $n=0$; and (b) it can be uniquely identified by a first-order formula whenever choosing a unique element is needed. This will come useful in Example~\ref{ex:choices}.}. 
  \end{enumerate}

 \end{description}
\end{definition}
 
In this way, every multiset type can be seen as a class of structures. Therefore one can use \fo sentences to define subsets of multisets types, and \fo interpretations to define functions between multiset types. 

\begin{myremark}
\label{remark:categories}
\fo interpretations are closed under composition, and the identity is a \fo interpretation. Therefore one can consider the following category: the objects are classes of structures, and the morphisms are \fo interpretations modulo equivalence. (This means that two interpretations describe the same morphism if they are equivalent in the sense that for every input structure, the two output structures produced by the two interpretations are isomorphic.) It is not difficult to prove that $1$, $\Sigma + \Gamma$ and $\Sigma \times \Gamma$ in Definition~\ref{def:type-constructors-on-structures} define respectively the terminal object, coproduct, and product in this category, and that $\multisets$ is a monad. We do not elaborate this categorical perspective in this paper, but it will inform our choice of prime functions and combinators in Definition~\ref{def:combinators}.
\end{myremark} 

\begin{myexample}
\label{ex:mk1}
An unordered tree of height at most $k$ can be seen as a relational structure over a vocabulary of $k$ binary relations $\sim_1,\ldots,\sim_k$, with $\sim_i$ interpreted as relating those nodes that have a common ancestor at depth at least $i$. It is easy to construct mutually inverse interpretations between this representation of trees and the more usual one based on a parent-child relation. By Definition~\ref{def:type-constructors-on-structures}, trees represented in this way are exactly structures of type $\multisets^k 1$.
\end{myexample}

 By Lemma~\ref{lem:embedding} below, all multiset types can be viewed as a special case of trees of bounded height. For such structures first-order logic has the same expressive power as monadic second-order logic, see~\cite[Theorem 1.1]{elberfeld2016first}. Therefore, \mso and \fo logics will define the same kind of interpretations between multiset types. We will therefore simply speak about \emph{interpretations} from now on, without specifying that they are \fo interpretations.

\begin{myexample}
\label{ex:numbers}
 Assume that we model the natural numbers as multisets 
 \begin{align*}
 \Nat \eqdef \multisets 1.
 \end{align*}
 Under this representation, which functions $f : \Nat \to \Nat$ can be obtained as interpretations? One example is all polynomials with non-negative coefficients. For example, the function 
 \begin{align*}
 f(n) = 3n^4 + n^2 + 7
 \end{align*}
 is defined by an interpretation with 3 components of dimension 4, 1 component of dimension 2, and 7 components of dimension 0. In each of these components, the universe formula is ``true''. 
 
 Another example is the function that counts the number of non-repeating $k$-tuples in the input, the output of this function is
 \begin{align*}
 n^{(k)} \eqdef n \cdot (n-1) \cdot (n-2) \cdots (n-k+1).
 \end{align*}
 As a polynomial, this function has some negative coefficients. The corresponding interpretation has 1 component of dimension $k$, with a universe formula that selects non-repeating tuples. As it turns out, non-repeating tuples are essentially the only thing that can be done.
 
 \begin{proposition}
 The following are equivalent for every function $f : \Nat \to \Nat$: 
 \begin{enumerate}
 \item $f$ is an interpretation, assuming the representation $\Nat = \multisets 1$; 
 \item there are non-negative co-efficients $a_0,\ldots,a_k \in \set{0,1,\ldots}$ such that
 \begin{align*}
  f(n) = a_0 \cdot n^{(0)} + a_1 \cdot n^{(1)} + \cdots + a_k \cdot n^{(k)},
 \end{align*}
 holds for all sufficiently large $n$. 
 \end{enumerate}
 \end{proposition}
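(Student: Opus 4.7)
For the implication (2) $\Rightarrow$ (1), I would construct the interpretation directly. For each $j \ge 1$, take $a_j$ copies of a component of dimension $j$ whose universe formula says ``$\bar x$ is a tuple of distinct non-root elements''; each such component contributes exactly $n^{(j)}$ outputs. For the constant term use $a_0 + 1$ dimension-$0$ components, one of them designated as the root by an appropriate definition of the output $\sim$-formula (self-loop on every output except that one). For the finitely many $n < N$ where the polynomial disagrees with $f$, use the fact that for any fixed $m$ the statement ``there are exactly $m$ non-root elements'' is expressible in \fo, and guard extra dimension-$0$ correcting components by such formulas in order to add or remove any bounded number of outputs.

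For (1) $\Rightarrow$ (2), the engine is an orbit analysis. The structure of value $n$ has automorphism group $S_n$ (permuting the non-root elements), so the orbits of $k$-tuples are parametrised by a partition of $\set{1,\ldots,k}$ together with at most one distinguished ``root block'' whose coordinates are all equal to the root. An orbit with $m$ non-root blocks has cardinality exactly $n^{(m)}$. Because \fo formulas are invariant under automorphisms, the universe formula $\varphi_i$ of each component either accepts or rejects an entire orbit, so the number of tuples it accepts has the form $\sum_j c_j \cdot n^{(j)}$ with non-negative integer coefficients. The set of accepted orbits could in principle depend on $n$, but by an Ehrenfeucht--Fra\"iss\'e argument---the structures of value $n$ and $n'$, with corresponding tuples in a fixed orbit, are $q$-equivalent whenever both $n$ and $n'$ are sufficiently large in terms of $q$ and $k$---it stabilises once $n$ exceeds some threshold $N$.

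Summing over all components, the output universe has size $\sum_j d_j \cdot n^{(j)}$ with $d_j \ge 0$ for $n \ge N$. Applying the same orbit analysis to the formula that selects the root outputs of each component (i.e., $\varphi_i(\bar x) \land \neg \psi_\sim^{i,i}(\bar x,\bar x)$) shows that the number of root outputs is again a non-negative falling-factorial polynomial; but it equals $1$ for every large $n$, so it is the constant polynomial $1$. Subtracting the root from the universe gives $f(n) = (d_0 - 1) + \sum_{j \ge 1} d_j \cdot n^{(j)}$, and the coefficient $d_0 - 1$ is non-negative because, component by component and orbit by orbit, ``is a root'' is a subset of ``is in the universe''. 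The main technical point is the Ehrenfeucht--Fra\"iss\'e stabilisation argument; the rest is bookkeeping.
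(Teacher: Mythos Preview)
Your argument is correct and follows essentially the same approach as the paper's: both observe that over a structure with $n$ interchangeable non-root elements, first-order formulas collapse to quantifier-free ones once $n$ is large (your Ehrenfeucht--Fra\"iss\'e stabilisation of orbits is exactly this), and counting tuples satisfying a quantifier-free formula yields a non-negative combination of falling factorials. You are more careful than the paper about the output root (the $d_0 - 1 \ge 0$ step fills in bookkeeping the paper leaves implicit); one small wording slip is that dimension-$0$ components can only \emph{add} outputs, so to correct downward for small $n$ one should first conjoin ``$n \ge N$'' to all universe formulas and then rebuild each of the finitely many small cases from scratch with dimension-$0$ components.
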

 \begin{proof}
 The implication $2 \Rightarrow 1$ is easy to show. For the converse, it is enough to prove the claim for interpretations with one component; since having several components corresponds to taking a sum of functions with non-negative co-efficients. Suppose then that $f$ is defined by an interpretation with one component of dimension $k$. The corresponding function inputs a number $n$, and returns the number of $k$-tuples that satisfy some first-order formula $\varphi(x_1,\ldots,x_k)$ in the structure that has $n$ elements and no nontrivial relations. Over such a structure, quantifiers are not useful, at least as long as $n$ exceeds some threshold. Counting tuples that satisfy a quantifier-free formula gives a function as in item (2).
 \end{proof}
 
 As a corollary, e.g.~the decrement function  $ f(n) = \max(n-1,0)$ is not an interpretation.
\end{myexample}

 \section{Derivable functions}
\label{sec:combinators}
In this section, we give an alternative definition of polyregular functions on multiset types, which uses combinators. Then we prove that this definition is equivalent to the interpretations from Section~\ref{sec:logic}. 

The idea is to start with certain prime functions, such as multiset union, and close them under certain combinators, such as composition of functions. A brutal approach would be to start with all functions that are interpretations; in this case, the combinators would not be needed. However, in the presence of combinators, only a small set of prime functions is needed. Two of these functions are explained in the following examples.

\begin{myexample}\label{ex:de-singleton}
For a multiset type $\Sigma$, consider the \emph{de-singleton function} of type 
 \begin{align*}
 \multisets \Sigma \to 1 + \Sigma
 \end{align*}
 that maps singleton multisets to their unique elements, and otherwise returns the unique element of~$1$. (Like almost all prime functions, this is not a single function but a family of functions indexed by multiset types.) We claim that this is a first-order interpretation. This has a component of dimension zero, used to produce the error value in the case of non-singleton inputs. The universe formula for this component says that the input multiset is not a singleton, i.e.~that it either consist solely of the root or it contains some non-root elements which are not related by the equivalence relation $\sim$ from the definition of $\multisets \Sigma$:
 \[
 	\forall x.\neg(x\sim x) \quad\lor\quad \exists xy.(x\sim x \land y\sim y \land \neg(x\sim y))
 \] 
For singleton input multisets, the input structure should simply be copied to the output, removing the root. This is done by using an additional component of dimension one, whose universe formula says that the argument is not the root of the multiset. 
\end{myexample}

\begin{myexample}
 \label{ex:choices} Let $\Sigma$ be a multiset type. One of our prime functions, and the only one whose growth is more than linear, is \emph{choices} of the type 
 \begin{align*}
 \multisets \Sigma \to \multisets(\Sigma \times \multisets \Sigma).
 \end{align*}
This function, given a multiset 
\begin{align*}
 \multiset{A_1,\ldots,A_n} \qquad \text{where $A_i \in \Sigma$},
 \end{align*}
outputs the following multiset with the same number of elements:
\begin{align*}
 \multiset{(A_1,B_{1}),\ldots,(A_n,B_{ n})} \qquad \text{where } B_i = \multiset{A_1,\ldots,A_n} - \multiset {A_i}.
\end{align*}
The idea is that the output contains all possible multisets that can be obtained from the input multiset by choosing some distinguished element.

Let us prove that this function is a first-order interpretation.
To represent an element of $A_i$ in the output structure, we simply use the same element in the input structure. These elements are represented using a component of dimension one. To represent an element of $B_i$ in the output structure, we use a pair $(a,b)$, where $a$ is the {\em root} of $A_i$ (see below) and $b$ is an element of some $A_j$ in the input structure with $j \neq i$. Formally, the universe formula for the corresponding component of dimension two is: \begin{align*}
root_\Sigma(a) \land \neg (a \sim b),
\end{align*}
where $\sim$ is the equivalence relation from the definition of $\multisets \Sigma$, and $root_\Sigma$ is a formula that is satisfied for a unique element of every structure in $\Sigma$. Such a formula is defined by induction on $\Sigma$, using the root in the induction step for multisets.
\end{myexample}

The above examples explain the only two non-obvious prime functions. 
The remaining prime functions and all the combinators are straightforward enough that we simply give their types and names, and we assume that the reader can guess their definitions. 

\begin{definition}\label{def:combinators}
 The \emph{derivable} functions are the least class of functions that:
 
 \begin{itemize}
 \item {\bf Prime functions:} contains the following functions for all types $\Sigma, \Sigma_1,\Sigma_2$: 
\begin{align*}
 \textit{union} \colon& \multisets \multisets \Sigma \to \multisets \Sigma && \text{(multiset union)} \\
 \textit{add}\colon& \Sigma \times \multisets \Sigma \to \multisets \Sigma && \text{(add one element)} \\
% \multisets \Sigma_1 \times \Sigma_2 \to \multisets(\Sigma_1\times\Sigma_2) &\quad \text{strength} \\
 \textit{choices}\colon& \multisets \Sigma \to \multisets( \Sigma \times \multisets \Sigma) && \text{(see Ex.~\ref{ex:choices})} \\ 
 \textit{de-singleton}\colon& \multisets \Sigma \to 1 + \Sigma && \text{(see Ex.~\ref{ex:de-singleton})} \\ 
 \textit{empty}\colon& 1 \to \multisets 1 && \text{(constant empty multiset)} \\ 
 \textit{id}\colon& \Sigma \to \Sigma && \text{(identity)} \\
 \pi_i\colon& \Sigma_1 \times \Sigma_2 \to \Sigma_i&& \text{(projection for $i=1,2$)} \\
 \iota_i\colon& \Sigma_i \to \Sigma_1 + \Sigma_2&& \text{(coprojection for $i=1,2$)} \\ 
 \textit{dist}\colon& \Sigma \times (\Sigma_1 + \Sigma_2) \to \Sigma \times \Sigma_1 + \Sigma \times \Sigma_2&& \text{(distribute $\times$ over $+$)} 
 % \myunderbrace{\Sigma_1 + \Sigma_2 \to \Sigma_2 + \Sigma_1}{commutativity of \\ \scriptsize co-product} \quad 
 % \myunderbrace{(\Sigma_1+\Sigma_2)+\Sigma_3 \to \Sigma_1 + (\Sigma_2 + \Sigma_3)}{associativity of \\ \scriptsize co-product} \quad 
 % \myunderbrace{\Sigma_i \to \Sigma_1 + \Sigma_2}{co-projection \\ \scriptsize for $i=1,2$}\\
 % \myunderbrace{\Sigma_1 \times \Sigma_2 \to \Sigma_2 \times \Sigma_1}{commutativity of \\ \scriptsize co-product} \quad 
 % \myunderbrace{(\Sigma_1\times \Sigma_2)+\Sigma_3 \times \Sigma_1 \times (\Sigma_2 \times \Sigma_3)}{associativity of \\ \scriptsize co-product} \quad 
 % \myunderbrace{\Sigma_1 \times \Sigma_2 \to \Sigma_i}{projection \\ \scriptsize for $i=1,2$}\\
 \end{align*}
 \item {\bf Combinators:} is closed under applying the following combinators:% (each combinator is written as a rule, with the arguments above the line, and the result of applying the combinator below the line):
 \begin{align*}
 \frac{f_1 : \Sigma \to \Gamma_1 \quad f_2 : \Sigma \to \Gamma_2}{\langle f_1, f_2 \rangle : \Sigma \to \Gamma_1 \times \Gamma_2}\ \text{pairing} \qquad 
 \frac{f_1 : \Sigma_1 \to \Gamma \quad f_2 : \Sigma_2 \to \Gamma}{[f_1, f_2] : \Sigma_1 + \Sigma_2 \to \Gamma} \ \text{co-pairing} \\[1ex] 
 \frac{ f : \Sigma \to \Gamma}{\multisets f : \multisets \Sigma \to \multisets \Gamma}\ \text{mapping} \qquad 
 \frac{f : \Sigma_1 \to \Sigma_2 \quad g : \Sigma_2 \to \Sigma_3}{f;g : \Sigma_1 \to \Sigma_3} \ \text{composition}
 \end{align*}
\end{itemize}
\end{definition}

\begin{myremark}
It is easy to prove that all prime functions are interpretations (see Theorem~\ref{thm:interpretations-equal-derivable}).
In fact, recalling categorical considerations from Remark~\ref{remark:categories}, most prime functions are natural transformations between the corresponding functors on the category of \fo interpretations. A notable exception is {\em choices}, whose \fo definition in Example~\ref{ex:choices} relies on {\em root} formulas that are defined only for multiset types and not for arbitrary structures. The need for {\em root}s would disappear if we considered non-injective \fo interpretations as in Remark~\ref{remark:non-injective}; then {\em choices} would become a natural transformation. It therefore seems natural to leave the full categorical framework to future work, until we extend our results to non-injective interpretations. For the present, we can use the flexibility afforded by our elementary framework to derive some other non-natural functions by induction on the structure of multiset types.
\end{myremark}

\begin{myexample} Although the list of prime functions has the identity for every multiset type $\Sigma$, it is only really needed in type $1 \to 1$. For the remaining types, it can be derived using combinators. Also, for every type $\Sigma$ we can derive the unique function $!_\Sigma$ of type $\Sigma \to 1$. The proof is by induction on $\Sigma$, in the induction step for $\Sigma = \multisets \Gamma$ one uses the de-singleton operation.

We can also derive functions of type $1\to\Sigma$ by induction on $\Sigma$. For coproducts, this is done with coprojections; for products, with pairing; for multiset types, with the mapping combinator and the constant empty multiset function.
\end{myexample}
\begin{myexample}
 Define the Boolean type to be 
 \begin{align*}
 %\bool \eqdef \myunderbrace{1}{false}+\myunderbrace{1}{true}. 
 \bool \eqdef 1+1.
 \end{align*}
 One can easily derive all Boolean operations, e.g.~conjunction $\bool^2 \to \bool$. In fact, for every finite types $\Sigma$ and $\Gamma$, i.e.~types that are built without the multiset constructor $\multisets$, one can derive every function of type $\Sigma \to \Gamma$. %By abuse of notation, we will think of finite nonempty sets as being types, by viewing a finite set with $n$ elements as a co-product of $n$ copies of $1$. 
\end{myexample}
\begin{myexample}
 \label{ex:emptiness-test}
 The emptiness test 
 \begin{align*}
 empty : \multisets 1 \to \bool,
 \end{align*}
 that returns ``true'' for the empty multiset and ``false'' for nonempty ones, is derived as the composition of the following operations:
  \[
 \xymatrix{
 \multisets 1 
 \ar[rr]^-{\langle {!_{\multisets 1}, id}\rangle}
 &&
 1 \times \multisets 1
 \ar[r]^-{\textit{add}}
 &
 \multisets 1
 \ar[rr]^-{\textit{de-singleton}}
 &&
 1 + 1 = \bool.
 }
 \]
The idea is that the empty multiset is the only one to which we can add and element to obtain a singleton. 
\end{myexample}
\begin{myexample}\label{ex:disjunction}
 We now derive the function 
 \begin{align*}
 \lor : \multisets(\bool) \to \bool
 \end{align*}
 that implements disjunction of unbounded arity: the function checks if the input multiset contains at least one ``true'' value. This function is derived as follows. To the input, which is a multiset of Booleans, we apply $\multisets f$, where $f : \bool \to \multisets 1$ is the function that maps ``false'' to the empty multiset and ``true'' to the singleton multiset. After applying multiset union to the result, we test the resulting value for emptiness.
 
Using the same idea, for any multiset types $\Sigma$ and $\Gamma$ we can derive a function
 \begin{align*}
 \dropp : \multisets(\Sigma+\Gamma) \to \multisets \Sigma
 \end{align*}
 that erases all elements of $\Gamma$ from the given multiset.
\end{myexample}

%\begin{myexample}\label{ex:size-test} In Example~\ref{ex:emptiness-test}, we derived a function that tells us whether a set is empty. We now show how to derive, for every $k \in \set{0,1,\ldots}$, the function 
% \begin{align*}
% f_k : \multisets 1 \to \bool,
% \end{align*}
% which tells us if the input has size at most $k$. For $k=0$, this is the emptiness test. For $k > 0$, we can define the function inductively, as the composition of the following functions
%\[
% \xymatrix{
% \multisets 1 
% \ar[r]^-{\text{choices}} &
% \multisets (1 \times \multisets 1) 
% \ar[rr]^-{\multisets (\pi_1;f_{k-1})}
% &&
% \multisets \bool
% \ar[r]^-{\land} 
% & 
% \bool.
% }
% \]
% where the last function is the dual of the function $\vee$ from the previous example.
%\end{myexample}

\begin{myexample}
\label{ex:strength}
For multiset types $\Sigma$ and $\Gamma$, consider the function {\em strength} of type
\[
 \multisets \Sigma\times \Gamma \to \multisets(\Sigma \times \Gamma)
\]
which, given an argument $B\in\Gamma$ and a multiset 
\begin{align*}
 \multiset{A_1,\ldots,A_n} \qquad \text{where $A_i \in \Sigma$},
 \end{align*}
 returns the multiset of pairs:
 \begin{align*}
 \multiset{(A_1,B),\ldots,(A_n,B)}
 \end{align*}
 with the same number of elements.  
 
 Let us prove that strength is derivable. Given a structure in $\multisets \Sigma\times\Gamma$, perform the following steps:
\begin{enumerate}
\item use suitable coprojections to get to \hfill $\multisets (\Sigma+\Gamma) \times (\Sigma+\Gamma)$;
\item use {\em add} to get to \hfill $\multisets (\Sigma+\Gamma)$;
\item use {\em choices} to get to \hfill $\multisets ((\Sigma+\Gamma)\times \multisets(\Sigma+\Gamma))$;
\item use {\em dist} to get to \hfill
$\multisets(\Sigma\times\multisets(\Sigma+\Gamma)+\Gamma\times \multisets(\Sigma+\Gamma))$;
\item use $\dropp$ from Example~\ref{ex:disjunction} twice, to get to \hfill
$\multisets (\Sigma\times \multisets\Gamma)$;
\item de-singleton $\multisets\Gamma$, distribute and use $\dropp$ again to get to \hfill $\multisets (\Sigma\times \Gamma)$.
\end{enumerate}
%Using strength, it is easy to derive various functions such as the cartesian product of multisets:
% \[
% \xymatrix{
% \multisets \Sigma\times \multisets\Gamma 
% \ar[r]^-{\text{strength}} &
% \multisets (\Sigma \times \multisets \Gamma) 
% \ar[rr]^-{\multisets(\text{strength})}
% &&
% \multisets \multisets(\Sigma\times\Gamma)
% \ar[r]^-{\text{union}} 
% & 
%\multisets(\Sigma\times\Gamma).
% }
% \]
 \end{myexample}

\paragraph*{Trees of bounded height.}
In Examples~\ref{ex:unlabelled-trees} and~\ref{ex:mk1} we showed how unlabelled trees of height at most $k$ can be modeled as structures of  type $\multisets^k 1$. 
The following lemma shows that such types are general, i.e.~all multiset types can be encoded in them. We use the following notion of encoding: we say that a type $\Sigma$ \emph{encodes} in a type $\Gamma$ if there are derivable functions
\[\xymatrix{
\Sigma\ar@/^/[rr]^{\textit{encode}} && \Gamma\ar@/^/[ll]^{\textit{decode}} 
}\]
 such that {\em encode;decode} is the identity on $\Sigma$. 
\begin{lemma}\label{lem:embedding}
 Every multiset type encodes into $\multisets^k 1$ for some $k \in \set{0,1,\ldots}$.
\end{lemma}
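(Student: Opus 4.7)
My plan is to induct on the structure of $\Sigma$, after isolating a \emph{padding} lemma that lifts an encoding at depth $k$ to one at any larger depth. For padding: given $(e,d)$ encoding $\Sigma$ into $\multisets^k 1$, compose $e$ with the singleton operation $t \mapsto \multiset{t}$ (itself derivable from \textit{add}, \textit{empty}, and pairing) and compose \textit{de-singleton} with the co-pairing $[\,!_1;c,\; d\,]$, where $c:1\to\Sigma$ is the derivable constant from the examples that handles non-singleton inputs. This lets me assume throughout the induction that the recursively obtained encodings share a common target depth.

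The base case $\Sigma = 1 = \multisets^0 1$ is immediate with $\textit{encode} = \textit{decode} = \textit{id}$. For $\Sigma = \multisets\Sigma'$, lift the inductive $(e',d')$ through the mapping combinator: functoriality of $\multisets$ gives $(\multisets e');(\multisets d') = \multisets(e';d') = \multisets\,\textit{id}$, which is the identity on $\multisets\Sigma'$, so $\multisets\Sigma'$ encodes into $\multisets^{k+1} 1$.

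For the coproduct $\Sigma = \Sigma_1 + \Sigma_2$, pad both sides to a common depth $k \ge 1$ and distinguish the two branches by cardinality: encode $\iota_1(s_1)$ as the singleton $\multiset{e_1(s_1)}$ and $\iota_2(s_2)$ as the two-element multiset $\multiset{e_2(s_2),\emptyset}$, both landing in $\multisets^{k+1} 1$. Decoding pairs \textit{de-singleton} with \textit{union}: in the $\iota_1$ case de-singleton already returns $e_1(s_1)$, while in the $\iota_2$ case multiset union collapses $\multiset{e_2(s_2),\emptyset}$ back to $e_2(s_2)$. The primitives \textit{dist}, co-pairing, and the inductive decoders $d_1, d_2$ route each tree to the correct branch, giving $\textit{encode};\textit{decode} = \textit{id}$.

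The product case $\Sigma = \Sigma_1 \times \Sigma_2$ is the main obstacle. Encode $(s_1,s_2)$ as $\multiset{\multiset{e_1(s_1)},\; \multiset{e_2(s_2),\emptyset}} \in \multisets^{k+2} 1$, using the same cardinality trick to order the two components. Decoding requires sorting the two outer children by shape, which I handle in stages: derive an \emph{is-singleton} predicate on $\multisets^{k+1} 1$ from \textit{de-singleton} and co-pairing; apply it elementwise through the mapping combinator to tag each outer child with its shape as an element of a coproduct; use $\dropp$ from Example~\ref{ex:disjunction} twice to split the outer multiset into the singleton-shaped child and the pair-shaped child; reduce each part using \textit{de-singleton} and \textit{union}; apply the inductive $d_1, d_2$; and pair the results. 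Once this shape filter is assembled from these derivable primitives, verifying that encode-then-decode is the identity on $\Sigma_1 \times \Sigma_2$ is routine.
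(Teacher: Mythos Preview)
Your proof is correct and follows essentially the same strategy as the paper: induction on the type, padding to a common depth, lifting through $\multisets$ via the mapping combinator, and distinguishing the two components of a product (or coproduct) by wrapping one of them so that the two children have different cardinalities, with decoding driven by a singleton test together with \textit{union}. The only noticeable differences are cosmetic: your product encoding lands in $\multisets^{k+2} 1$ rather than the paper's $\multisets^{k+3} 1$, and you spell out the decoding pipeline (is-singleton, $\dropp$, \textit{de-singleton}, \textit{union}) more explicitly than the paper, which leaves those steps to a picture and the phrase ``straightforward details''.
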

\begin{proof}
 Induction on the structure of the type. For the type $1$, there is nothing to do. For a type $\multisets \Sigma$, we simply apply the mapping combinator to the encoding for $\Sigma$. We are left with products and coproducts. We only describe the construction for products $\Sigma \times \Gamma$; coproducts can be treated in a similar way. By induction assumption, each of the types $\Sigma$ and $\Gamma$ encodes into trees of some height. We can assume that the heights are the same, because $\multisets^k 1$ encodes into $\multisets^{k+1}1$. We pair the encodings as follows: a pair in $\Sigma \times \Gamma$ is sent to the following tree
 \mypic{4}
 This tree lives in $\multisets^{k+3} 1$, where $k$ is the height of trees needed to encode $\Sigma$ and $\Gamma$.
 It remains to describe the decoding. We explain how to extract the $\Sigma$ component. In the argument below, a \emph{singleton tree} is a tree where the root has a single child. The projection to the $\Sigma$ coordinate is performed in five stages, summarised in the following picture:
\mypic{5}
We omit the straightforward details of the construction. A similar construction is used for extracting the $\Gamma$ component. 
\end{proof}

\subsection{Completeness of the derivable functions}
\label{sec:completeness}
Let us prove that on multiset types, first-order interpretations coincide with derivable functions.
\begin{theorem}\label{thm:interpretations-equal-derivable}
 Let $f : \Sigma \to \Gamma$ be a function between multiset types. Then $f$ is a first-order interpretation if and only if it is derivable.
\end{theorem}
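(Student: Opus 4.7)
The plan splits into two directions. The forward direction, that every derivable function is an \fo interpretation, proceeds by structural induction on the derivation. Each prime function is given an explicit \fo interpretation: \textit{de-singleton} and \textit{choices} are already treated in Examples~\ref{ex:de-singleton} and~\ref{ex:choices}, while \textit{union}, \textit{add}, \textit{empty}, \textit{id}, $\pi_i$, $\iota_i$, and \textit{dist} are routine manipulations of components and vocabulary (for instance, \textit{union} uses one unary component retaining every non-root element, with the new equivalence being the transitive closure of two levels of the input one). Each combinator preserves being an \fo interpretation: pairing and co-pairing combine components as tagged disjoint unions; mapping relativises all formulas of the inner interpretation to a single $\sim$-class; composition is the usual substitution of formulas for atomic predicates.

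For the converse, I first use Lemma~\ref{lem:embedding} to reduce to interpretations $g\colon \multisets^m 1 \to \multisets^k 1$ between canonical tree types. Given any \fo interpretation $f\colon \Sigma \to \Gamma$, set $g = \textit{decode}_\Sigma\,;\,f\,;\,\textit{encode}_\Gamma$; then $f = \textit{encode}_\Sigma\,;\,g\,;\,\textit{decode}_\Gamma$ because encoding followed by decoding is the identity, and $g$ is again an \fo interpretation by the already-established forward direction. So it suffices to derive $g$.

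To derive such a $g$, I build two general gadgets. First, a \emph{formula-testing} gadget: for every \fo formula $\varphi(x_1,\ldots,x_d)$ over $\multisets^m 1$, by induction on $\varphi$ derive a function that, given the multiset of all $d$-tuples of an input structure, filters out those that do not satisfy $\varphi$. Atomic tests on $\sim_i$ are direct, Boolean connectives use derivable Boolean operations (Example~\ref{ex:disjunction} and friends), and quantification over elements is implemented by combining mapping, \textit{choices}, and the $\lor$ of Example~\ref{ex:disjunction}. Second, a \emph{tuple enumeration} gadget: iterating \textit{choices} $d$ times produces the multiset of all $d$-tuples of an input. Combining the two gadgets, I assemble $g$ by enumerating, for each component $i$ with formula $\varphi_i$ and dimension $k_i$, the multiset of $k_i$-tuples satisfying $\varphi_i$, tagging the results with $i$, and taking the disjoint union over components. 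What remains is to reshape this flat multiset of output elements into a tree of height $k$ by iteratively grouping along the \fo-definable equivalences $\sim_k, \sim_{k-1},\ldots,\sim_1$ of the output $\multisets^k 1$ structure.

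The principal obstacle is this final group-by-equivalence step: given a multiset $M$ and an \fo-definable equivalence $\approx$ on its elements, derive a function that returns the quotient multiset of $\approx$-classes with correct multiplicities. The natural approach via \textit{choices} pairs each $a\in M$ with its class, but this causes every class $C$ to appear $|C|$ times in the result, and the calculus offers no direct way to divide by $|C|$. My plan is to break the symmetry by canonicalising each class to a unique \fo-definable representative, exploiting that on bounded-height tree structures the \fo type of every element within its class is a derivable invariant, and to keep only the pair where the chosen element is that representative. Verifying that this produces the correct quotient multiset, and that the recursive construction of the output tree across all levels $k, k-1, \ldots, 1$ composes correctly, will be the technical heart of the proof.
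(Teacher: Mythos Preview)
Your soundness direction and the reduction via Lemma~\ref{lem:embedding} match the paper's strategy (the paper in fact only tree-encodes the output side, keeping the input type general, but this is a minor difference). The real divergence is in how the output tree is assembled, and here your plan has a gap precisely where you flag it.

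The bottom-up route---enumerate a flat multiset of output tuples and then repeatedly quotient by the equivalences $\sim_i$---runs into the problem that \fo type alone does not pick out a unique representative: within a $\sim_i$-class, all sibling leaves share the same \fo type in the output, so ``keep only the pair where the chosen element is that representative'' keeps too many. What \emph{does} single out a unique element is depth: each $\sim_i$-class contains exactly one node at depth $i$ (the root of the corresponding subtree), and depth is \fo-definable in the output tree, hence via the interpretation's formulas also \fo-definable over the input. So your canonicalisation can be repaired, but the bottom-up recursion then has to carry the correspondence between each subtree-in-progress and its representing tuple across all $k$ levels, and verifying that the nested multisets assemble correctly is exactly the work your proposal defers.

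The paper sidesteps the quotient entirely by going top-down. It first normalises the interpretation so that there is at most one component per dimension and the ancestor order on output nodes coincides with the \emph{prefix order} on their representing tuples (replace each output node by its root-to-node path of tuples, then pad dimensions with dummy coordinates pinned to a definable element). After this normalisation, the output tree is simply a filtered sub-tree of the \emph{valuation tree}: the height-$k$ tree whose nodes are all tuples of length at most $k$ over the input, with parent given by dropping the last coordinate. The valuation tree is derivable by iterating an extension function (built from \textit{choices} and \textit{strength}) $k$ times; one then relabels each node by the Boolean value of the universe formula (your formula-testing gadget, which the paper isolates as Lemma~\ref{lem:derive-fo-boolean}), deletes the ``false'' nodes while preserving ancestry, and truncates to the target height. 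Because the tree shape is present from the start in the prefix order, no grouping or representative-choosing is ever needed.
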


The easier direction is \emph{soundness}: every derivable function is also an interpretation. This is proved by a straightforward induction on the derivation: all prime functions are easily seen to be first-order interpretations (see Examples~\ref{ex:de-singleton}-\ref{ex:choices}), and first-order interpretations are easily seen to be closed under applying the combinators.

We now move to the harder part of the theorem, which we call \emph{completeness}: every first-order interpretation can be derived. This is proved first for Boolean outputs, and then for general outputs.
\subsubsection{Boolean outputs}
We begin by deriving interpretations with Boolean outputs, which is the same as first-order sentences. (Indeed, all structures of type $\bool$ have a one-element universe with a single nullary relation, so the only ingredient of the interpretation is the sentence which is the interpretation of that relation.) This is done by inlining the semantics of \fo logic into our type system, with the {\em choices} function used to simulate a single quantifier. 

In the proof it will be convenient to think of structures with distinguished elements as being a multiset type. 
This is done using the following type constructor, where $k \in \set{0,1,\ldots}$:
\begin{align*}
\pointed k \Sigma \eqdef \setbuild { (A, \bar a)}{$A \in \Sigma$ and $\bar a$ is a tuple of $k$ distinguished elements}.
\end{align*}
We do not assume that the distinguished elements are pairwise distinct. 
This type constructor can be encoded using the existing type constructors:
\begin{align*}
\pointed 0 \Sigma &\equiv \Sigma &
\pointed 1 1 &\equiv 1 &
\pointed 1 (\Sigma \times \Gamma) &\equiv \pointed 1 \Sigma \times \Gamma + \Sigma\times\pointed 1 \Gamma \\
\pointed {k+1} \Sigma &\equiv V_1(V_k\Sigma) &
\pointed 1 (\Sigma + \Gamma) &\equiv \pointed 1 \Sigma + \pointed 1 \Gamma &
\pointed 1 \multisets \Sigma &\equiv (1+\pointed 1 \Sigma) \times \multisets\Sigma
\end{align*}
(the $1+$ in the multiset case is due to the fact that the root element might be distinguished). As a result, for any multiset type $\Sigma$, we can view $\pointed k \Sigma$ as a multiset type. The semantics of a first-order formula $\varphi$ over the vocabulary of $\Sigma$ that has $k$ free variables is an interpretation of type 
\begin{align*}
\sem \varphi : \pointed k \Sigma \to \bool.
\end{align*}
The following lemma shows that such functions are derivable.

 \begin{lemma}\label{lem:derive-fo-boolean}
 Let $\Sigma$ be a multiset type. For every first-order formula $\varphi$ over the vocabulary of $\Sigma$ with $k$ free variables, the function $\sem{\varphi}$ is derivable. 
 \end{lemma}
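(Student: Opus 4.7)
The plan is to proceed by structural induction on $\varphi$, treating atomic formulas, Boolean connectives and quantifiers separately.

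For Boolean connectives, the interpretations of the immediate subformulas can be combined using the Boolean operations on $\bool$ together with pairing and composition, exactly as for the derivation of functions between finite types discussed above.

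For atomic formulas I would derive the required test by induction on the type $\Sigma$, unfolding the type equivalences for $\pointed k \Sigma$ in order to locate the distinguished variables among the subparts of the structure and then check the relation or equality. The primitive relations of a multiset type, namely the ancestor relations $\sim_i$ from Example~\ref{ex:mk1}, the tags coming from coproducts, and equality between distinguished elements, all reduce to inspecting finitely many coproduct and product branches exposed by these equivalences.

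The central case is the quantifier step: given a derivable $\sem{\psi}\colon \pointed{k+1}\Sigma\to \bool$, I want to derive $\sem{\exists x.\psi}\colon \pointed k \Sigma\to \bool$. The idea is to enumerate every possible value of the quantified element, apply $\psi$ to each choice, and take the disjunction. For this I would first construct an enumeration operation $\textit{enum}_\Sigma\colon \Sigma \to \multisets \pointed 1 \Sigma$ by induction on $\Sigma$. For the unit type return a singleton. For coproducts use co-pairing after lifting the inductive enumerations through $\multisets \iota_1$ and $\multisets \iota_2$, matching the equivalence $\pointed 1 (\Sigma+\Gamma)\equiv \pointed 1 \Sigma + \pointed 1 \Gamma$. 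For products $\Sigma \times \Gamma$, enumerate on each coordinate, use \textit{strength} (Example~\ref{ex:strength}) to distribute the other coordinate inside the multiset, and union the two resulting multisets of $\pointed 1 \Sigma \times \Gamma + \Sigma \times \pointed 1 \Gamma$. For $\multisets \Gamma$, take the union of two contributions: one singleton multiset containing the pointing at the fresh root, and one obtained by applying \textit{choices} to get $\multisets(\Gamma \times \multisets \Gamma)$, then recursively enumerating pointings within each chosen $\Gamma$-component via the inductive hypothesis, and repackaging through the equivalence $\pointed 1 \multisets \Gamma \equiv (1+\pointed 1 \Gamma)\times \multisets \Gamma$. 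Since $\pointed k \Sigma$ is itself a multiset type, instantiating $\textit{enum}$ at it yields a derivable function $\pointed k \Sigma \to \multisets \pointed{k+1}\Sigma$; composing with $\multisets \sem{\psi}$ and with the disjunction $\lor \colon \multisets \bool \to \bool$ from Example~\ref{ex:disjunction} gives $\sem{\exists x.\psi}$. Universal quantification is obtained by dualising through negation.

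The main obstacle is the multiset case of $\textit{enum}$: \textit{choices} is indispensable there, because without it one cannot isolate a single sub-structure, recursively distinguish an element inside it, and simultaneously preserve all of the other sub-structures as the surrounding context required by $\pointed 1 \multisets \Gamma$. Everything else in the proof amounts to careful bookkeeping against the encoding of $\pointed k \Sigma$ as a multiset type.
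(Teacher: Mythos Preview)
Your proposal is correct and follows essentially the same approach as the paper: the paper also proceeds by induction on $\varphi$, handles atomic formulas by induction on $\Sigma$, Boolean connectives by composing with Boolean operations, and quantifiers via an \emph{extension function} $\ext_\Sigma : \pointed k \Sigma \to \multisets \pointed{k+1}\Sigma$ that is exactly your $\textit{enum}$. Like you, the paper reduces to the case $k=0$ using $\pointed{k+1}\Sigma = \pointed 1(\pointed k \Sigma)$, derives $\ext$ by induction on the type, and uses \textit{choices} together with \textit{strength} in the multiset case and \textit{strength} in the product case.
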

 \begin{proof}
 Induction on the size of $\varphi$. The induction base, when $\varphi$ is an atomic formula, is by straightforward induction on $\Sigma$. For example, the equality predicate $x=y$ is modeled as a function
 \[
 	eq_\Sigma : \pointed 2 \Sigma \to \bool,
 \]
easily derived by induction on $\Sigma$. 
%This is derived by induction on $\Sigma$. One derives $eq_{\multisets\Sigma}$ by:
%\[\xymatrix{
%\pointed 2 {\multisets \Sigma} = \multisets\Sigma \times \pointed 2 \Sigma + \multisets \Sigma \times (\pointed 1 \Sigma)^2
%\ar[rr]^-{[\pi_1;eq_\Sigma,\ \textit{false}]}
%&&
%\bool
%}\]
%where {\em false} is the constantly false function, and $eq_{\Sigma\times\Gamma}$ is derived by:
%\[\xymatrix{
%\pointed 2 (\Sigma\times\Gamma) = \pointed 2 \Sigma \times \Gamma + \pointed 1 \Sigma \times \pointed 1 \Gamma + \Sigma \times \pointed 2 \Gamma
%\ar[rrr]^-{[\pi_1;eq_\Sigma,\ \textit{false},\ \pi_2;eq_\Gamma]}
%&&&
%\bool;
%}\] 
%the remaining cases are easy. Other atomic formulas, based on unary and binary relations as introduced in Definition~\ref{def:type-constructors-on-structures}, are derived in a similar fashion.
 
For the induction step of negation $\neg \varphi$, we simply compose the function of $\varphi$ with the negation operation of type $\bool \to \bool$. The same works for disjunction and conjunction, except that we use pairing. 
 
 The last remaining induction step is that of a quantified formula of the form $\exists x \varphi$ or $\forall x \varphi$, where $\varphi$ is a formula with $k+1$ free variables. Consider the \emph{extension function} of type:
 \begin{align*}
\ext_\Sigma: \pointed k \Sigma \to \multisets \pointed {k+1} \Sigma
 \end{align*}
 that extends the input valuation in all possible ways by adding one element to the tuple. Since by definition $V_{k+1}\Sigma=V_1(V_k\Sigma)$, it is enough to derive this function for $k=0$. This is done by induction on the type $\Sigma$; for $\ext_{\multisets \Sigma}$ we use choices and {\em strength} from Example~\ref{ex:strength},
%\[\xymatrix{
%  \multisets \Sigma
% \ar[r]^-{\textit{choices}} &
% \multisets (\Sigma \times \multisets \Sigma) 
% \ar[rr]^-{\multisets (\ext_\Sigma \times id)}
% &&
% \multisets (\multisets \pointed 1\Sigma \times \multisets \Sigma) 
% \ar[rr]^-{\multisets(\textit{strength})}
% &&
% \multisets \multisets\pointed 1 \multisets\Sigma 
% \ar[r]^-{\textit{union}}
% &
%\multisets\pointed 1 \multisets\Sigma,
%}\]
and for $\ext_{\Sigma\times\Gamma}$ we use {\em strength} again.
 
  The functions $\sem{\exists x \varphi}$ and $\sem{\forall x \varphi}$ are derived by first applying the extension function, then applying  $\sem\varphi$ to each element of the resulting multiset, and then applying disjunction/conjunction from Example~\ref{ex:disjunction} to the resulting multiset of Booleans.
 \end{proof}

\subsubsection{General outputs}
\label{sec:completeness-general}
We now show how to derive first-order interpretations that have outputs which can be any multiset type, not just the Booleans. Thanks to Lemma~\ref{lem:embedding}, which says that every multiset type encodes into trees, the problem will reduce to functions with tree outputs.

\paragraph*{Valuation trees.} To deal with tree outputs, we introduce a data structure called valuation trees.
For a structure $A$ and a number $k \in \set{0,1,\ldots,}$, the \emph{valuation tree of height $k$ for $A$} is defined as follows. Nodes of the tree are pairs $(A, \bar a)$ where $\bar a$ is a list of at most $k$ distinguished elements in $A$; in particular, the structure $A$ is copied in each node in the tree. The ancestor relation is the prefix relation on the tuples of distinguished elements; in particular, the root of the tree uses the empty tuple of distinguished elements. The height of this tree is $k$, and the leaves are labelled by tuples of $k$ distinguished elements. 

For a multiset type $\Sigma$, the valuation trees of height $k$ for structures from $\Sigma$ can be seen as a type: 
\begin{align*}
\nodetrees k (\pointed {\leq k} \Sigma), \quad \text{ where } \quad \pointed {\leq k}\Sigma \quad \text{ is defined as } \quad \pointed 0 \Sigma + \cdots + \pointed k \Sigma\end{align*}
and $\nodetrees k$ is the constructor of node-labelled trees from Example~\ref{ex:trees}.
Not all elements of this type are valuation trees, since a valuation tree requires that the label of a parent node is obtained from the label of any child node by removing the last distinguished element. For every $k$ and multiset type $\Sigma$, there is a derivable function:
\begin{align*}
\valtree_\Sigma^k: \Sigma \to \nodetrees k (\pointed {\leq k} \Sigma)
\end{align*}
that maps a structure to its valuation tree. This is by induction on $k$, with $\valtree_\Sigma^{k+1}$ derived by:
\[\xymatrix{
\Sigma
 \ar[r]^-{\langle id,\ext_\Sigma\rangle} &
\Sigma\times \multisets \pointed 1 \Sigma
 \ar[rr]^-{id\times\multisets(\valtree_{\pointed 1 \Sigma}^k)}
 &&
\Sigma\times \multisets  \nodetrees k {\pointed {\leq k} {\pointed 1 \Sigma}}
\ar[r]
&
 \nodetrees {k+1} {\pointed {\leq {k+1}} \Sigma}
 } \]
 with the function on the right derived from obvious coprojections of $\Sigma=\pointed 0\Sigma$ and $\pointed {\leq k}\pointed 1\Sigma$ into $\pointed {\leq {k+1}}\Sigma$. The extension function $\ext$ was derived in the proof of Lemma~\ref{lem:derive-fo-boolean}.

\paragraph*{Completeness proof.}
Using valuation trees, we finish the completeness proof for Theorem~\ref{thm:interpretations-equal-derivable}. Suppose that we want to derive a first-order interpretation 
$f : \Sigma \to \Gamma$.
Apply Lemma~\ref{lem:embedding} to the output type, yielding derivable functions
\[\xymatrix{
 \Gamma 
 \ar@/^/[rr]^{\textit{encode}}
 &&
 \multisets^n 1.
 \ar@/^/[ll]^{\textit{decode}}
}\]
Since the decoding function is derivable, it is enough to show that the composition
\[\xymatrix{
\Sigma 
\ar[rr]^-{f;\textit{encode}}
&&
\multisets^n 1 = \nodetrees n 1
}\]
 is derivable. This function has tree outputs of bounded height, and it is a first-order interpretation. 
 
 For every such interpretation, there is an equivalent interpretation such that: 
 \begin{itemize}
 \item[(a)] for every dimension $i$ there is at most one component of this dimension; and 
 \item[(b)] the ancestor ordering on nodes of the output tree is the prefix relation on tuples. 
 \end{itemize} 
To ensure (b) only, build an interpretation where nodes of the output structure are sequences (paths in the tree) of output nodes in the original interpretation, with new components arising from sequences (of length up to $n$) of the original components. The original parent-child relation now becomes part of the universe condition, and the new ancestor relation is just the prefix order. To furthermore ensure (a) in this construction, extend the dimension of every component (arising from a sequence of the original components) by adding a certain number of dummy coordinates at the end (and also, necessarily, in the middle, to ensure that the parent-child relation is still a prefix relation). 
 
Assuming that $f;encode$ has properties (a) and (b), and the maximal dimension used in the interpretation is $k$ (typically $k$ will be larger than $n$), one can decompose $f;encode$ into four steps:
\begin{enumerate}
 \item Compute the valuation tree of height $k$, yielding a result in type $ \nodetrees k \pointed {\leq k}\Sigma$.
 \item Consider the universe formula of the first order-interpretation defining $f;encode$, viewed as a function of type
$\pointed {\leq k}\Sigma \to \bool$.
 Apply this function to the label of each node in the result of the previous step, 
yielding a result in type $\nodetrees k \bool$.
\item In the tree computed so far, keep only nodes with label ``true'', while preserving the ancestor relation, yielding a tree in $\nodetrees k 1$.
\item By the semantic properties of the first-order interpretation, we know that the output from the previous step will have height $n$. However, we need to explicitly cast it into the output type $\nodetrees n 1$. We therefore apply a derivable function that keeps only nodes at distance at most $n$ from the root, yielding a tree in the correct output type $\nodetrees n 1$.
\end{enumerate}
All these four steps are derivable. Indeed, as we have already remarked, the valuation tree can be computed using a derivable function. The second step is derivable by Lemma~\ref{lem:derive-fo-boolean}. Deriving the last two steps is straightforward.

\section{Deciding equivalence for quantifier-free interpretations on trees of bounded height}
\label{sec:quantifier-free-tree-to-tree}

In the previous section we presented a list of prime functions and combinators that together generate all interpretations  that input and output trees of bounded height. It is clear that these functions and combinators satisfy various equalities: pairing of the two projections from a product type is equal to the identity, the mapping combinator preserves function composition, and so on. One wonders whether there is an equational axiomatisation of the equivalence of  interpretations presented in this way. Ideally, that equational theory would be decidable. We do not attempt to provide such a theory in this paper. However, we will prove that equivalence of interpretations is indeed decidable. We shall work directly with the logical presentation of interpretations, without a direct reference to the prime functions and combinators.

The problem is to decide, given two \fo interpretations on multiset types, if they produce isomorphic outputs for every input. We first solve a special case of this problem, where the input and output types represent trees of bounded height, and -- more importantly -- the functions are quantifier-free interpretations. In Section~\ref{sec:equivalence-full} we will reduce the general problem to this special case. 

We begin by specifying the exact variant of trees that is used, and their representations as logical structures. We need to be quite specific here, since for quantifier-free interpretations the exact choice of vocabulary plays a role. In Section~\ref{sec:equivalence-full} these details will become unimportant, as quantified formulas allow easy translations between various representations of trees.

For the output trees, we consider unlabelled trees with a height bound $n$, where a tree is represented as a set of nodes with the parent-child relation. We will denote the class of such trees by $\multisets^n 1$. This is a slight abuse of notation: in Section~\ref{sec:logic} this indeed denotes the same class of trees, but under a different relational representation.

For the input, we use trees with edge labels from a finite set $\Sigma$ and with a height bound $k$. This corresponds to the type $\edgetrees k \Sigma$ described in Example~\ref{ex:trees}, but again, abusing the notation, we use a representation different from the one arising from Definition~\ref{def:type-constructors-on-structures}. In this \emph{edge represenation}:
\begin{itemize} 
\item the universe of the tree is the set of its edges;
\item for every label $a$, there is a unary relation that selects edges with label $a$;
\item there is a parent function, which maps an edge to its parent edge. For edges without parents, i.e.~edges that originate in the root of the tree, the parent function loops.
\end{itemize}
A useful property of this representation is that induced substructures have an intuitive meaning: they correspond to ``pruned'' trees that arise by removing some subtrees. The empty structure also has a meaning: it represents a root-only tree. This will be convenient when we apply generic results from Section~\ref{sec:patterns} to labelled trees.

We will consider functions of type 
\begin{align*}
\edgetrees k \Sigma \to \multisets^n 1,
\end{align*}
for a finite set $\Sigma$ of edge labels. The functions will be quantifier-free interpretations, assuming tree representations as above. This is the special case of interpretations as in Definition~\ref{def:fo-interpretation}, in which all formulas are quantifier-free. (Note that Definition~\ref{def:fo-interpretation} makes sense in the presence of functions in the input vocabulary.)

This section is devoted to proving the following theorem.

\begin{theorem}\label{thm:decidable-equivalence-qf-trees}
 The following problem is decidable:
\begin{description}
 \item[Instance.] Two quantifier-free interpretations $
f,g:
 \edgetrees k \Sigma \to
\multisets^n1.
 $
 \item[Question.] Are the functions equivalent in the following sense: for every input, the two outputs are isomorphic as trees?
\end{description}
\end{theorem}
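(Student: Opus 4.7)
The plan is to encode the isomorphism type of $f(t)$ as a finite schematic tree whose edges carry polynomial weights in $\mathbb{Z}[X_1,\ldots,X_N]$, and reduce equivalence to polynomial identity testing. The base observation is that unordered trees of height at most $n$ are isomorphic iff they agree as iterated multisets, so it suffices to reproduce the recursive multiset structure of the output symbolically.

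First I would classify the quantifier-free types of edge-tuples. In the edge representation of $t$, the only function symbol is $\text{parent}$, so every term is of the form $\text{parent}^m(x_i)$, and a quantifier-free formula with $d$ free variables is a Boolean combination of equalities between such terms together with label atoms on them. Consequently the $d$-tuples of edges of $t$ satisfying any quantifier-free formula split into finitely many \emph{patterns}, where a pattern is a small $\Sigma$-labelled tree with $d$ distinguished nodes $x_1,\ldots,x_d$; a tuple $\bar a$ realises pattern $\pi$ iff there is a label-preserving embedding $\pi \hookrightarrow t$ sending distinguished nodes to $\bar a$. For a fixed pattern $\pi$, the number of realising tuples in $t$ is a product, over the branching positions of $\pi$, of falling factorials $X_j^{(r)}$ in the multiplicities $X_j$ of subtree-shapes of $t$, and hence a polynomial in $\mathbb{Z}[X_1,\ldots,X_N]$ for a suitable finite set of input multiplicities bounded by $k$ and the arity of the interpretation.

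Next I would unfold the output. For each output node, its children under the quantifier-free parent relation again partition into finitely many patterns, with multiplicities that are polynomials in the same variables. Iterating down to depth $n$ produces a schematic tree $\mathcal{T}_f$ of height $n$ whose edges are labelled by polynomials in $\mathbb{Z}[X_1,\ldots,X_N]$, and the tree $f(t)$ is recovered, up to isomorphism, by substituting the multiplicities of $t$ for the $X_j$. Performing the same construction for $g$ yields $\mathcal{T}_g$, and the theorem reduces to deciding whether $\mathcal{T}_f$ and $\mathcal{T}_g$ evaluate to isomorphic trees for every valuation of the $X_j$ in $\mathbb{N}$.

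Finally I would decide this by bottom-up canonicalisation. At height $0$ the problem is trivial. Assuming canonical forms $\sigma$ at height $h$ and their decidable equality, the children of a pattern at height $h+1$ form a formal sum $\sum_\sigma p_\sigma(X)\cdot\sigma$ with $p_\sigma \in \mathbb{Z}[X_1,\ldots,X_N]$, and two such formal sums agree for all $X_j \in \mathbb{N}$ iff they agree coefficient-wise as polynomials, which is decidable. The main obstacle is this canonicalisation step: syntactically distinct patterns can give the same subtree shape on every input, so one cannot simply identify patterns on the nose, and the falling-factorial polynomials from different patterns must be collated under a common normal form before equality can be tested. Handling this rigorously is precisely where the pattern-based machinery alluded to by Section~\ref{sec:patterns} should enter; without careful bookkeeping one would spuriously conclude non-equivalence from a mere difference in presentation.
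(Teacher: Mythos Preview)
Your high-level plan---encode the output as a symbolic tree with polynomial edge weights and reduce to a comparison of such trees---matches the paper's approach. But there is a concrete gap in your second paragraph. You claim that the number of embeddings of a pattern $\pi$ into an \emph{arbitrary} input tree $t$ is a polynomial in finitely many ``multiplicities $X_j$ of subtree-shapes of $t$'', with $N$ bounded by $k$ and the interpretation's arity. This fails already for unlabelled inputs of height $2$: if $\pi$ is the tree with one child and two grandchildren and the root of $t$ has children of degrees $d_1,\ldots,d_m$, the embedding count is $\sum_i d_i(d_i-1)$, which cannot be written as a polynomial in any fixed finite tuple of integers valid for all $t$. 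Hence no single schematic tree $\mathcal{T}_f$ over $\mathbb{Z}[X_1,\ldots,X_N]$ recovers $f(t)$ for every input, and your reduction to polynomial identity over all valuations of the $X_j$ is not a reduction of equivalence over all inputs.

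The paper does not try to parameterise all inputs. Instead it fixes one finite tree $t$, chosen to be \emph{large} for the patterns at hand, regards it as a free symbolic tree with one variable per edge, and applies the pattern to that single symbolic input (Lemma~\ref{lem:extend-to-polynomial-trees}). The heart of the argument is then Lemma~\ref{lem:patterns-from-polynomials}: if $t$ is large enough, the symbolic output determines the pattern up to isomorphism, so the restricted family $\{t(\bar a)\}$ already distinguishes any two inequivalent interpretations. Your bottom-up canonicalisation would correctly decide equality of symbolic output trees, and the obstacle you flag (collating syntactically distinct patterns that yield the same subtree shape) is real; but it is downstream of the more basic completeness question---why agreement on a finite-dimensional family of inputs should imply agreement everywhere---which your proposal does not address and which is precisely what Lemma~\ref{lem:patterns-from-polynomials} supplies.
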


%Before continuing, we remark on the meaning of the words ``the same'' in the question from the above theorem. What does it mean that two output values are ``the same''? We slightly abuse notation, and we do not distinguish between the multiset type $\edgetrees \ell \Gamma$ and the corresponding class of edge labelled trees. If we think of the output as being a multiset type, then ``the same'' means equal. If we think of the output as being the class of edge labelled trees, then ``the same'' means isomorphic as a labelled tree. 

%The labels in the output trees are unimportant. This is because, using the same proof as in Lemma~\ref{lem:embedding}, for every height bound $\ell$ and finite set of labels $\Gamma$, one can find an injective quantifier-free interpretation 
%\[
% \begin{tikzcd}
% [column sep=1.5cm]
% \edgetrees \ell \Gamma
% \ar[r,hook,"g"]
% &
% \edgetrees {\ell'} 1
% \end{tikzcd} 
% \]
%in which the output trees are unlabelled.
%Composing an instance of the equivalence problem from Theorem~\ref{thm:decidable-equivalence-qf-trees} with this embedding yields an instance where the output trees do not have edge labels. For this reason, in our algorithm for the equivalence problem, we will only consider edge labels on the input trees.
%

We will even show that the problem is in polynomial time, assuming a suitable representation of quantifier-free interpretations, which we shall call patterns.

\subsection{Patterns}
\label{sec:patterns}
In our solution to the equivalence problem, we use a more combinatorial representation of quantifier-free interpretations, which is defined by counting embeddings.

% \begin{lemma}\label{lemma:functional-structures}
% There are mutually inverse quantifier-free interpretations between $\monad_k \Sigma$ seen as a class of structures according to Definition~\ref{def:type-constructors-on-structures}, and $\monad_k \Sigma$ seen as a class of structures according to the functional represenation.
% \end{lemma}

% Thanks to the above lemma, when talking about quantifier-free tree-to-tree interpretations, we do not need to specify which of the two representations of trees we use, since both represenations yield the same classes of tree-to-tree functions. In the proofs, we will use the functional represenation.

% We phrase definitions at a slightly higher level of generality, which will allow us to describe quantifier-free interpretations of arbitrary structures to unlabelled trees. 

%\paragraph*{Embeddings and patterns.}
Recall that an \emph{embedding} between structures over the same vocabulary is an injective map from the universe of the source structure into the universe of the target structure, that preserves and reflects all relations. 
We will use embeddings mainly for trees modelled using the functional edge representation as described above.
Under that representation, an embedding of one tree into another is an injective map from edges in the source tree to edges in the target tree, that preserves the labels and the parent function. Since the parent function loops for edges without parents, such edges must be mapped to edges without parents. In other words, the root node must be mapped to the root node. Here is a picture:
\mypic{9}

The following definition is the key notion of this section. We will apply it to labeled trees under the edge representation, but  the definition makes sense for any class of structures that is closed under induced substructures. Here an induced substructure is obtained by restricting the universe to some (possibly empty) subset that is closed under the functions in the structure. 

\begin{definition}[Pattern]\label{def:pattern}
 Let $\structclass$ be a class of structures that is closed under induced substructures. 
 A \emph{pattern for $\structclass$} is a finite tree $\Phi$ where every node $x$ is labelled by a structure $\Phi(x)$ in $\structclass$, such that:
\begin{itemize}
\item the root node is labelled by the empty structure, and
\item if a node $x$ is the parent of a node $y$, then $\Phi(x)$ is an induced substructure of $\Phi(y)$. 
\end{itemize} 
 \end{definition}

\begin{myexample}
 Here is a picture of a pattern, where the class $\structclass$ is trees of height at most one with edges labelled red or blue:
 \mypic{19}
 The dotted lines in the picture represent the inclusions of induced substructures between a node of the pattern and its children. 
\end{myexample}

Two patterns are isomorphic if there is: (i) a bijection between the underlying trees of the two patterns, that preserves the tree structure; and (ii) a family of isomorphisms, one for each node $x$ in the first pattern, that isomorphically maps the structure labelling $x$ in the first pattern to the structure labelling the corresponding node in the second pattern. The family of isomorphisms in (ii) must be consistent with the inclusion of labels, i.e.~the isomorphism for a node $x$ must extend the isomorphism for its parent.
 % An equivalent definition is that a pattern is a functor 
 % \begin{align*}
 % \mathsf F : \Cc \to \text{category of structures from $\Sigma$ with embeddings}
 % \end{align*} 
 % where $\Cc$ is a category that is a finite tree order, and such that the root of $\Cc$ is mapped to the empty structure. In other words, this is a diagram where the indexing category is a finite tree order of the forest-shaped (i.e.~the category indexing the diagram is a tree order). This equivalent definition immediately yields concise notion of isomorphism on patterns. 
 
Every pattern $\Phi$ determines a function which inputs structures from $\structclass$ and outputs unlabelled trees, and is defined as follows. For an input structure $A \in \structclass$, the nodes of the output tree are pairs $(x,\alpha)$ where $x$ is a node of the pattern and 
\begin{align*}
\alpha : \Phi(x) \hookrightarrow A
\end{align*}
is an embedding from the structure that labels node $x$ in the pattern to the input structure $A$. The children of a node $(x,\alpha)$ are nodes $(y,\beta)$ such that $y$ is a child of $x$ and the embedding $\beta$ extends $\alpha$. 

By definition, the height of the output tree is bounded by the height of the pattern. Moreover, the function defined by a pattern $\Phi$ is easily seen to be a quantifier-free interpretation. Indeed, one can take a component for each node $x$ in $\Phi$, with dimension equal to the size of the structure $\Phi(x)$, and the universe formula determined by the structure $\Phi(x)$ itself. The parent-child relation is then defined by a suitable sub-tuple formula. All these formulas are quantifier-free.

The following lemma shows that every function of this kind arises from a pattern. 

\begin{lemma}\label{lem:patterns-complete}
 Let $\structclass$ be a class of structures that is closed under induced substructures. Then a function from $\structclass$ to (unlabelled) trees is defined by a pattern if and only if it
is a quantifier-free interpretation that produces outputs of bounded height.
\end{lemma}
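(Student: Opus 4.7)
The forward direction (pattern $\Rightarrow$ quantifier-free interpretation) is direct, and was already sketched in the paragraph before the lemma: for each pattern node $x$ I introduce an interpretation component of dimension $|\Phi(x)|$ whose universe formula is the quantifier-free atomic diagram of $\Phi(x)$, asserting that the tuple of variables is an enumeration of an induced copy of $\Phi(x)$ inside the input; the parent-child relation between components records the inclusion $\Phi(x)\subseteq\Phi(y)$ as agreement on the shared coordinates. The height of the output is bounded by the height of $\Phi$.

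For the converse, let $f:\structclass\to\multisets^n 1$ be a quantifier-free interpretation. The first step is to refine the components by atomic type. Since every universe formula $\varphi_i$ is quantifier-free, the $k_i$-tuples it accepts form a union of finitely many atomic types over the vocabulary of $\structclass$, where each atomic type is the set of tuples $\bar a$ such that some fixed enumeration of a fixed pointed finite structure $(S_i,\bar s_i)$ matches $\bar a$ coordinatewise via an embedding $\alpha:S_i\hookrightarrow A$. Splitting each component into one sub-component per atomic type puts the accepted tuples on every input $A$ in canonical bijection with the set of such embeddings. I refine the parent-child formula analogously, so that every non-root component $i$ has a uniquely determined parent component $j$ and a uniquely determined parent tuple for each accepted child tuple.

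The second, and main, step is to show that the parent embedding $\beta:S_j\hookrightarrow A$ of $(i,\alpha)$ factors through the child embedding $\alpha$. Because $\bar b$ is singled out from $\bar a$ by a quantifier-free formula and $\structclass$ is closed under induced substructures, each coordinate of $\bar b$ must lie in the induced substructure of $A$ generated by $\bar a$: otherwise one could replace the witness for some $b_l$ by a different element of the same atomic type inside an induced sub-structure still belonging to $\structclass$, contradicting uniqueness of the parent. This yields a canonical inclusion $\iota_i:S_j\hookrightarrow S_i$ with $\beta=\alpha\circ\iota_i$. Applied to the root—of which each $f(A)$ has exactly one—the same argument forces the root to come from a single dimension-zero component with universe formula $\mathit{true}$, whose associated structure is $\emptyset$.

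The pattern $\Phi$ is then defined by taking its nodes to be the refined components, its tree structure to be the parent-component map with the above dimension-zero root at the top, and its labels $\Phi(i)=S_i$—choosing representatives top-down so that every $\iota_i$ becomes a literal induced inclusion, thereby satisfying the substructure condition of Definition~\ref{def:pattern}. Combining the canonical bijections from the first step with the factorisation from the second, one obtains for every input $A$ an isomorphism between $f(A)$ and the tree defined by $\Phi$ on $A$. The hardest part is the factorisation step: turning the semantic uniqueness of the parent tuple into the structural fact that the parent embedding is literally a restriction of the child's embedding is what separates pattern-defined functions from arbitrary quantifier-free interpretations, and is where closure of $\structclass$ under induced substructures is used.
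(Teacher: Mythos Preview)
Your approach to the converse direction is genuinely different from the paper's. The paper does not attempt any factorisation argument at all; instead it first replaces $f$ by an equivalent quantifier-free interpretation $\hat f$ whose components are \emph{sequences} of components of $f$ (of length up to the height bound $n$), so that an output node is represented by the concatenated tuple along its entire root-to-node path in $f(A)$, and the parent relation in $\hat f$ is the prefix relation by construction. After this transformation, each component's universe formula is rewritten as a disjunction of complete atomic descriptions, each of which literally \emph{is} a finite structure in $\structclass$ together with a generating enumeration; the substructure condition in Definition~\ref{def:pattern} then holds automatically, because the parent's label is obtained by dropping a suffix of the child's tuple. No separate argument that the parent tuple lies inside the child's generated substructure is needed.

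Your direct route can be made to work, but the justification you give for the key step does not. The sentence ``replace the witness for some $b_l$ by a different element of the same atomic type inside an induced sub-structure'' presupposes that such a replacement element exists, which it need not. The argument that does go through is: pass to the induced substructure $A'$ generated by $\bar a$; since quantifier-free formulas are absolute between $A$ and $A'$, the node $(i,\bar a)$ still appears in $f(A')$, and any parent it has there is also a parent in $f(A)$, forcing $\bar b\subseteq A'$ by uniqueness. However, this leaves open the possibility that $(i,\bar a)$ becomes the \emph{root} of $f(A')$, and you dispatch that case by invoking ``the same argument'' for the root---which is circular as written. One has to first pin down the root (e.g.\ by running the restriction argument with $A'$ the substructure generated by the empty tuple), and only then handle the non-root nodes. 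The paper's path trick sidesteps this entire detour: once parents are prefixes, the inclusion $\Phi(x)\subseteq\Phi(y)$ is syntactic rather than something to be proved.
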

\begin{proof}
Consider a quantifier-free interpretation $f$ that, given an input structure in $\structclass$, produces a tree of height at most $n$. As we observed in the completeness proof in Section~\ref{sec:completeness-general}, there exists an equivalent interpretation $\hat{f}$ whose components are sequences of components of $f$ of length up to $n$; the dimension of each such component is the sum of dimensions of the original components in the sequence. The universe formula for such a component combines the universe formulas for the original components with the parent relation formula of the output tree of $f$; this is quantifier-free if all the constituent formulas are quantifier-free. Finally, the parent relation is simply the prefix relation of a suitable length; again, this is described by a quantifier-free formula. 
 
Now, $\hat{f}$ is an interpretation of the same type as $f$, but with the parent relation in the output tree defined as a prefix formula. Every quantifier-free interpretation with this property arises from a pattern. Indeed, the universe formula for each component is (equivalent to) a disjunction of conjunctive formulas that fully describe relations between a finite set of variables. Such a conjunctive formula is essentially a structure from $\structclass$ together with a valuation of the variables that generates the entire structure. A pattern whose nodes are the conjunctive formulas present in $\hat{f}$, with the parent relation inherited from $\hat{f}$, defines the interpretation $\hat{f}$, which is equivalent to $f$.
\end{proof}

Importantly, the above proof is constructive: a pattern that defines a quantifier-free interpretation to unlabelled trees can be effectively reconstructed from (the number $n$ and) a logical description of the interpretation.

\begin{myexample}
Let $\structclass$ be the class of trees of height at most two with edges labelled red or blue, and let $f$ be the function that simply returns the input tree, forgetting the colors. This is a quantifier-free interpretation, defined by a single component of dimension one, with the universe formula ``true'' and an evident parent relation. It is defined by the pattern:
\mypic{24}
In particular, the root node of this pattern has exactly one embedding into any input tree, and that embedding is the root of the output tree.
\end{myexample}

\subsection{Patterns with tree inputs}
Although patterns make sense for general classes of input structures, we focus on patterns where the input class is trees of bounded height labelled by some finite alphabet, as in Theorem~\ref{thm:decidable-equivalence-qf-trees}. By Lemma~\ref{lem:patterns-complete}, %and the observation that output trees can be assumed to be unlabelled, 
all functions from the decision problem in Theorem~\ref{thm:decidable-equivalence-qf-trees} can be described using patterns, and these patterns can be computed from the original representation in terms of quantifier-free formulas. Therefore, Theorem~\ref{thm:decidable-equivalence-qf-trees} boils down to deciding equivalence for tree-to-tree functions defined by patterns. We will show that the latter problem is decidable, even in polynomial time, for a very simple reason: different patterns define different tree-to-tree functions. 

\begin{theorem}\label{thm:patterns-must-be-isomorphic}
 Consider two patterns $\Phi_1$ and $\Phi_2$ that define tree-to-tree functions. If the patterns are non-isomorphic, then the corresponding tree-to-tree functions are non-equivalent, i.e.~for some input tree, the two output trees are non-isomorphic.
\end{theorem}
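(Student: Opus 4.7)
The plan is to prove the contrapositive by induction on the height of the patterns, reconstructing $\Phi$ up to isomorphism from the tree-to-tree function $f_\Phi$. For the base case (height $0$), both patterns are single root nodes labelled by the empty structure, hence trivially isomorphic. For the inductive step, assume the claim for all patterns of strictly smaller height; given two patterns $\Phi_1, \Phi_2$ of height at most $h$ inducing the same function, it suffices to recover the multiset of pairs (label, subpattern) at depth $1$, since the induction hypothesis can then be invoked on matched subpatterns. To make this work cleanly, I would phrase the induction hypothesis in terms of ``generalised patterns'' whose root may carry an arbitrary label $S$ from $\structclass$, applied to input structures $A$ equipped with a base embedding $\alpha : S \hookrightarrow A$; ordinary patterns are the special case $S$ empty.

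The first task is to recover the multiset of depth-$1$ labels. The number of depth-$1$ children of $f_\Phi(A)$ equals $\sum_y |\mathrm{Emb}(\Phi(y), A)|$, summed over depth-$1$ nodes $y$ of $\Phi$, where $\mathrm{Emb}(S, A)$ denotes the set of embeddings of $S$ into $A$. Equality of functions yields equality of these sums for every input $A$. A standard argument shows that the functions $A \mapsto |\mathrm{Emb}(S, A)|$, for distinct isomorphism classes $S \in \structclass$, are linearly independent over $\mathbb{Q}$: ordering $\structclass$ by number of elements, the matrix $|\mathrm{Emb}(S, S')|$ is upper-triangular with positive diagonal (via the identity embedding when $S = S'$), hence nonsingular. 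This forces the multisets of depth-$1$ labels in $\Phi_1$ and $\Phi_2$ to coincide.

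For the second task, fix a label $S$ appearing at depth $1$ and let $P_1, \dots, P_m$ (respectively $Q_1, \dots, Q_m$) be the subpatterns rooted at the $S$-labelled depth-$1$ children in $\Phi_1$ (respectively $\Phi_2$). I would process labels in increasing order of size. For each $S$, evaluating on inputs $A$ into which no label strictly larger than $S$ embeds (for instance $A = S$ itself) ensures that only labels of size at most $|S|$ contribute depth-$1$ children to $f_\Phi(A)$. Subtracting the contributions of strictly smaller labels, which are already known by the inductive matching, isolates the multiset of subtrees coming from the $S$-labelled depth-$1$ nodes. Matching these between $\Phi_1$ and $\Phi_2$ then feeds into the generalised induction hypothesis applied to each $P_i$ vs. $Q_j$, yielding pairwise isomorphisms that lift to the desired $\Phi_1 \cong \Phi_2$.

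The main obstacle is to make this ``subtraction and matching'' rigorous at the level of tree isomorphism rather than mere cardinality. After isolating the $S$-labelled contributions, one obtains the multiset $\bigsqcup_{\beta \in \mathrm{Emb}(S, A)} \{T(P_i, A, \beta)\}_{i=1}^m$, where $T(P, A, \beta)$ denotes the subtree generated by subpattern $P$ applied to $A$ with base embedding $\beta$; to invoke the induction hypothesis on the $P_i$ one really wants the per-$\beta$ multiset $\{T(P_i, A, \beta)\}_i$ to match between $\Phi_1$ and $\Phi_2$, whereas a priori only the union over $\beta$ is known to match. Resolving this likely requires varying $A$ over carefully chosen parameterised families (for example, structures indexed by an integer $n$ in which embedding counts become polynomials of degree $|S|$ in $n$) and extracting the per-$\beta$ contributions by comparing leading coefficients. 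Carrying this out while tracking full tree isomorphism at every level, rather than numerical invariants, is the technical crux.
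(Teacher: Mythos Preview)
Your proposal correctly identifies the contrapositive strategy and the high-level inductive structure, and you are right that the crux is separating the per-embedding contributions at the level of tree isomorphism rather than mere cardinality. However, the proposal as written does not close this gap; you explicitly leave it as the ``technical crux'', and the concrete suggestions you make towards resolving it do not work without substantial additional machinery.

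Two specific issues. First, the idea of evaluating on $A=S$ to isolate the $S$-labelled depth-$1$ contributions discards almost all information about the subpatterns $P_i$: since every label in $P_i$ contains $S$ as an induced substructure and the base embedding $\beta:S\hookrightarrow S$ is already a bijection, no strictly larger label of $P_i$ can embed into $A=S$ extending $\beta$. Hence $T(P_i,S,\beta)$ only sees the spine of $P_i$ consisting of nodes labelled exactly $S$, which is far from enough to recover $P_i$. Second, even granting that for every $A$ and every $\beta$ the multisets $\{T(P_i,A,\beta)\}_i$ and $\{T(Q_j,A,\beta)\}_j$ coincide, you still need a single bijection $\sigma$ with $T(P_i,A,\beta)\cong T(Q_{\sigma(i)},A,\beta)$ for \emph{all} $A,\beta$ simultaneously before you can invoke the induction hypothesis; a priori the matching could vary with $A$ and $\beta$.

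The paper resolves precisely this difficulty with the device of \emph{symbolic trees}: trees whose edges carry polynomials in formal variables. One chooses a single \emph{free} symbolic input tree $t$ (each edge labelled by a distinct variable) that is \emph{large} for the labels appearing in the patterns, and shows (Lemma~\ref{lem:extend-to-polynomial-trees}) that applying a pattern to $t$ yields a well-defined symbolic output tree whose edge polynomials count, for each child profile, the number of extensions of a given embedding. This packages your ``parameterised family'' idea into a single algebraic object, so that equality of output symbolic trees is one equation rather than a family indexed by $A$. The reconstruction (Lemma~\ref{lem:patterns-from-polynomials}) then reads off the pattern from the leading monomials of these polynomials, which record the multiset $m(g)-m(h)$ of newly used edges and hence pin down child profiles up to the needed shallow isomorphism; an induction on pattern size handles the consistency of matchings. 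Your intuition about ``comparing leading coefficients'' is exactly right, but making it work requires the symbolic-tree formalism (or something equivalent) that your proposal does not supply.
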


From this we can immediately conclude Theorem~\ref{thm:decidable-equivalence-qf-trees}. Indeed, consider two tree-to-tree functions $f_1$ and $f_2$, as in Theorem~\ref{thm:decidable-equivalence-qf-trees}. By Lemma~\ref{lem:patterns-complete}, these functions are defined by patterns, say $\Phi_1$ and $\Phi_2$. By Theorem~\ref{thm:patterns-must-be-isomorphic}, the functions are equivalent if and only if the patterns are isomorphic. Isomorphism of patterns can be decided in polynomial time using a straightforward dynamic programming algorithm. Note, however, that there is an exponential overhead in translating a quantifier-free interpretation $f_i$ to a pattern $\Phi_i$, hence the entire algorithm is not polynomial time.

The rest of Section~\ref{sec:quantifier-free-tree-to-tree} is devoted to proving Theorem~\ref{thm:patterns-must-be-isomorphic}. The general idea is that any two non-isomorphic patterns are distinguished by an input tree where every node, apart from leaves, has a large number of children of every color. %, and moreover the numbers of these children are very different for every node.
%
%We begin by discussing a natural idea, which is to randomly choose the input tree which is meant to distinguish the two patterns.
%
%
%
%
%\subsection{A probabilistic approach}
%\label{sec:distribution-on-trees}
% A randomly chosen tree should be complicated enough so that any differences in the patterns will manifest themselves in the output tree. This idea will turn out to be correct, although our final proof of Theorem~\ref{thm:patterns-must-be-isomorphic} will not use the probabilistic method in a significant way, and in particular, the distinguishing tree will be described in an explicit way. However, probabilistic intuition might be useful to understand the main idea, so we begin with a discussion that uses this intuition.
%
To generate a distinguishing input, first a finite labeled tree $t$ of the appropriate depth will be chosen, for example:

%If we want to choose an input tree randomly, we need to specify a probability distribution on input trees. The distribution that we use, which will explain the ideas further in the proof, is described as follows. The distribution is indexed by some finite tree $t$, called the \emph{shape} of the distribution, and some number $n \in \set{1,2, \ldots}$ which is called the \emph{magnitude}. The distribution is defined as follows: for every edge in the shape, we choose independently uniformly at random a number in $\set{1,\ldots,n}$, we then duplicate the edges according to the chosen numbers, and finally, we unfold the tree. For example, suppose that the shape is the following tree: 
\mypic{16}
%Note that the edges are coloured, with red and blue being the edge colours in the above example.
%Suppose that the magnitude is $n=8$. For each of the three edges in the shape, we randomly choose a number in $\set{1,\ldots,8}$. Say, we choose the numbers 6, 4 and 3. We next duplicate each edge according to the randomly chosen numbers, while keeping the colour:
Then every edge in $t$ will be cloned into several copies:

\mylabelledpic{17}{pic:thick-tree}
Finally, the resulting graph will be unfolded into a tree, with the result being:
\mylabelledpic{18}{pic:unfolded-tree}

%The following proposition shows that if the shape is correctly chosen, then the probability of distinguishing will tend to one as the magnitude increases.
%
%\begin{proposition}
% \label{prop:probabilistic-equivalence}
% Let $\Phi_1$ and $\Phi_2$ be non-isomorphic patterns. There is some shape $t$ such that the probability 
% \begin{align*} \Phi_1(s) \text{ is isomorphic to } \Phi_2(s)
% \end{align*}
% tends to $1$ as $n$ tends to infinity, 
% assuming that $s$ is sampled according to the probability distribution with shape $t$ and magnitude $n$.
%\end{proposition}

%The above result, which implies Theorem~\ref{thm:patterns-must-be-isomorphic}, will follow from the analysis in the rest of this section. 

We will show that, for an appropriate $t$, cloning every edge into sufficiently many copies will be enough to distinguish any two given patterns. Actually $t$ will not be particularly complicated, it will simply be a large complete tree of the appropriate depth.

To make these ideas precise, we shall consider trees edge-labeled with algebraic expressions that will denote the varying number of copies for every edge.

\subsection{Symbolic trees}
\label{sec:symbolic-trees}
Our proof will rely on a representation of sets of input trees, which is called \emph{symbolic trees}. Intuitively speaking, a symbolic tree is a tree, where each edge is labelled by a pair consisting of a colour from a finite set and a multivariate polynomial with integer coefficients. In the following picture the colours are red and blue, and the polynomials use variables $x$ and $y$:
\mypic{12}
The formal definition of symbolic trees, given below, defines them as the least set closed under taking linear combinations of pairs (colour, smaller symbolic tree). One should think of such a linear combination as representing a tree where the elements of the linear combination are child subtrees, and the coefficients describe the polynomials; in particular, the zero linear combination represents a symbolic tree with a root node and no children.

\begin{definition}\label{def:symbolic-tree}
 For finite sets $\Sigma$ and $X$, the set of \emph{symbolic trees with edge colours in $\Sigma$ and variables $X$} is the least solution $T$ to the inequality
 \begin{align*}
T \supseteq \text{finite formal linear combinations of $\Sigma \times T$ with coefficients in $\Int[X]$}.
\end{align*}
\end{definition}

As mentioned before, a symbolic tree can be represented by a tree with edges labelled by pairs (colour, polynomial). This representation is not unique. For example, here are two representations of the same symbolic tree: 
\mypic{20}

The representation does become unique if we require that one cannot have two edges that have the same source, same colour (but possibly different polynomials), and isomorphic subtrees.
Symbolic trees are similar to trees modulo bisimulation; the latter would correspond to Definition~\ref{def:symbolic-tree} if we used the Boolean semiring instead of the ring of polynomials $\Int[X]$.

A \emph{weighted tree} is the special case of a symbolic tree where the set of variables $X$ is empty. This is a tree with possibly negative integer weights on edges. A weighted tree is called \emph{positive} if all of its weights are positive. Such a tree is the same as an isomorphism class of usual, non-weighted, trees. 

The purpose of symbolic trees is to generate weighted trees. This is done by substituting integer values for the variables. If $t$ is a symbolic tree and $\bar a \in \Int^X$ is a choice of integer parameters for its variables such that the value of every polynomial in $t$ is positive, then we write $t(\bar a)$ for the tree that arises by substituting the parameters in each polynomial, and then unfolding the resulting weighted tree as illustrated in~\eqref{pic:unfolded-tree}.

For our proof, we only care about positive weighted trees. One way to generate such trees is to use positive parameters and symbolic trees where all polynomials use only positive coefficients. However, this restriction would be too strong. For example, a typical polynomial that will arise in our proof will be
$x^{(k)}$ (see Example~\ref{ex:numbers}),
which counts non-repeating $k$-tuples. Although this polynomial has negative coefficients, it is \emph{ultimately positive}: there is some $N$ such that if all parameters are larger than $N$ then the value of the polynomial is positive. This is the same as saying that all monomials of maximal degree have positive coefficients. In the proof below, we will only work with symbolic trees that are ultimately positive, i.e.~all polynomials used in them are ultimately positive. For such symbolic trees, all sufficiently large parameters generate positive weighted trees.

Our first observation is that different symbolic trees will generate non-isomorphic trees for many choices of parameters. 

 \begin{lemma}
 \label{lem:symbolic-must-differ-on-parameters} If two symbolic trees $s$ and $t$ over the same variables are not equal, then for every integer $N$ there is a combination $\bar a$ of parameters larger than $N$ such that $s(\bar a) \neq t(\bar a)$.
 \end{lemma}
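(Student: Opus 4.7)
I would prove this by induction on the maximum depth of $s$ and $t$, but the induction requires strengthening. The strengthened claim is: for any finite family $\mathcal F$ of pairwise non-equal symbolic trees over a common variable set $X$ and every integer $N$, there exist integer parameters $\bar a$, each coordinate strictly greater than $N$, such that every $u(\bar a)$ is defined and the resulting trees $\{u(\bar a) : u \in \mathcal F\}$ are pairwise non-isomorphic. The lemma is the special case $\mathcal F = \{s, t\}$; the base case $d = 0$ is vacuous, since up to the canonical form there is only one symbolic tree of depth $0$.

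\textbf{Inductive step.}
For the step from depth $< d$ to depth $\leq d$, I write each $u \in \mathcal F$ in its canonical form $u = \sum_i p_i^u(X)\,(c_i^u, s_i^u)$, with the pairs $(c_i^u, s_i^u)$ pairwise non-equal and each subtree $s_i^u$ of depth $< d$. Let $\mathcal F'$ be the finite set of all subtrees $s_i^u$ appearing in these canonical forms. The induction hypothesis, applied to $\mathcal F'$, supplies polynomial-nonvanishing conditions whose satisfaction guarantees that distinct members of $\mathcal F'$ evaluate to non-isomorphic trees. Under that condition, the root of any $u(\bar a)$ has, for every colour $c$ and isomorphism class $\tau$ of subtree, a child-multiplicity equal to $p_i^u(\bar a)$ for the unique index $i$ (if any) with $c_i^u = c$ and $s_i^u(\bar a) \cong \tau$, with no cross-term collapses.

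\textbf{Finishing the step.}
For $u \neq u'$ in $\mathcal F$, uniqueness of the canonical form supplies a pair $(c, s)$ with $p^u_{c,s} \neq p^{u'}_{c,s}$ (treating absent pairs as having coefficient $0$), so the difference $P_{u,u'} := p^u_{c,s} - p^{u'}_{c,s}$ is a nonzero element of $\Int[X]$. If $P_{u,u'}(\bar a) \neq 0$, the multiplicities of $(c, s(\bar a))$-children at the roots of $u(\bar a)$ and $u'(\bar a)$ differ, so $u(\bar a) \not\cong u'(\bar a)$. I therefore need to satisfy finitely many nonvanishing conditions (the $P_{u,u'}$ plus the witnesses from the IH) while keeping $\bar a$ in the box $\{\bar a : a_j > N' \text{ for all } j\}$ for some $N' \geq N$ large enough to force ultimate positivity of all polynomials involved. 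A routine induction on $|X|$ shows that a nonzero polynomial in $\Int[X]$ cannot vanish on every integer point of such a box, hence neither can a finite product of nonzero polynomials, so the required $\bar a$ exists.

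\textbf{Main obstacle.}
The delicate point is the strengthening of the induction hypothesis. Without it, two distinct sub-symbolic-trees $s_i^u, s_j^u$ appearing inside a single $u$ might accidentally coincide when evaluated at $\bar a$, collapsing their multiplicities at the root of $u(\bar a)$ and potentially wiping out the polynomial difference that is supposed to separate $u$ from $u'$. Forcing ``faithful'' evaluation of all subtrees via the stronger IH is what lets the root-level combinatorial argument go through uniformly across all pairs in $\mathcal F$; once that is arranged, the rest is a bookkeeping argument using the standard fact that nonzero integer polynomials have non-roots in any infinite box.
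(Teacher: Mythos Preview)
Your proof is correct, but the paper's argument is considerably shorter and uses a different idea. Instead of an induction on depth with a strengthened hypothesis about finite families, the paper simply lists all the (finitely many, pairwise distinct) polynomials $p_1,\ldots,p_n$ occurring anywhere in $s$ or $t$, and picks $\bar a$ with all coordinates larger than $N$ such that the values $p_1(\bar a),\ldots,p_n(\bar a)$ are pairwise distinct. Under this single genericity condition, the symbolic tree $s$ can be read off from $s(\bar a)$ by looking at degrees bottom-up: at each level the multiplicity of an edge class is one of the $p_i(\bar a)$, and since these are all different the corresponding polynomial is uniquely determined. Hence $s(\bar a)=t(\bar a)$ forces $s=t$.

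The relationship between the two arguments is that your accumulated nonvanishing conditions (the $P_{u,u'}$ together with the recursive conditions from the inductive hypothesis) are all implied by the paper's single condition ``all $p_i(\bar a)$ distinct'': once distinct polynomials take distinct values, the collapse of subtrees you guard against in your \emph{Main obstacle} paragraph cannot happen, and the root-multiplicity differences you exploit are automatic. So the paper's proof is essentially a one-shot version of your inductive bookkeeping. Your approach has the virtue of making the structure of the argument completely explicit (in particular, it isolates exactly the condition that prevents accidental subtree coincidences), at the cost of a longer proof; the paper's approach trades that explicitness for brevity by asking for more than is strictly necessary.
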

% In the lemma, we do not assume that the symbolic trees are ultimately positive, although we will only use the lemma for such trees. Note that if $s$ is ultimately positive, then the probability that $s(\bar a)$ is positive also tends to one. 
 \begin{proof}
Let $p_1,\ldots, p_n$ be all the polynomials present in $s$ and $t$, listed without repetition. For any $N$ there is a combination $\bar{a}$ of parameters larger than $N$ such that all values $p_i(\bar a)$ are pairwise different. 
Then $s$ can be reconstructed from $s(\bar a)$ by looking at degrees of all nodes, starting from the leaves, and similarly for $t$ and  $t(\bar a)$.
 \end{proof}
From the well-known Schwartz-Zippel lemma a stronger statement follows: if the parameters $\bar a$ are randomly chosen from $\set{1,\ldots,M}^X$ then the probability that $s(\bar a) \neq t(\bar a)$ approaches $1$ with $M\to\infty$. However, in the following we will only need the crude Lemma~\ref{lem:symbolic-must-differ-on-parameters}.
 
%
% Thanks to the above lemma, for symbolic trees there is no difference between the syntax (a tree labelled by polynomials) and the semantics (the function from parameters to trees defined by it). 

 \subsection{Applying patterns to symbolic trees}  
Consider a symbolic tree $t$, and a pattern $\Phi$. If we choose some parameters $\bar a$ sufficiently large for all polynomials in $t$ being positive, and apply $\Phi$ to $t(\bar a)$, then we get some output tree. The following lemma shows that the dependence of the output tree on the parameter $\bar a$ can be described using a symbolic tree. In other words, the lemma gives a way to apply patterns to symbolic trees.

 \begin{lemma}\label{lem:extend-to-polynomial-trees}
 Let $\Phi$ be a pattern and let $t$ be a symbolic tree with variables $X$. There exists a symbolic tree $s$ with the same variables $X$, such that 
 \begin{align*}
 \Phi(t(\bar a)) = s(\bar a)
 \end{align*}
for all sufficiently large parameters $\bar a$.
 \end{lemma}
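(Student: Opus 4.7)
The plan is to proceed by induction on the depth of the pattern $\Phi$, constructing $s$ recursively. The key intermediate notion is the \emph{type} of an embedding $\alpha \colon \Phi(x) \hookrightarrow t(\bar a)$, which records, for each element of $\Phi(x)$, the position in the symbolic tree $t$ at which it lands (without recording the specific choice among the many copies of that position present in the unfolded tree $t(\bar a)$). Only finitely many types exist for each $\Phi(x)$, and a type is \emph{valid}, i.e.\ corresponds to at least one actual embedding, if and only if it respects the edge labels and parent function of $\Phi(x)$.

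The first technical lemma I would prove is that, for each valid type $f$, the number of embeddings of type $f$ is a polynomial in $\bar a$: it is a product, over positions $p$ of $t$ and over frozen parent-copies, of falling factorials of the form $m_p(\bar a)^{(j)}$, where $m_p$ is the polynomial counting copies of $p$ under a fixed copy of its parent, and $j$ counts the elements of $\Phi(x)$ that land at $p$ sharing a common parent. These expressions are ultimately positive exactly as in Example~\ref{ex:numbers}. The second technical lemma is that any two embeddings $\alpha, \alpha'$ of the same type yield isomorphic subtrees at the corresponding output nodes of $\Phi(t(\bar a))$. This uses the abundant automorphisms of $t(\bar a)$: permuting copies of any single position is an automorphism, and the group generated by such permutations acts transitively on embeddings of each fixed type.

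With these two lemmas in hand, the symbolic tree $s$ is assembled by recursion on the depth of $\Phi$. The root of $s$ is the empty linear combination; for each child $y$ of the root of $\Phi$ and each valid type $f$ of embedding $\Phi(y) \hookrightarrow t(\bar a)$, attach to the root of $s$ a child whose edge polynomial is the count supplied by the first lemma and whose subtree is produced by a recursive call. For the recursion to go through one must strengthen the statement to cover patterns whose root carries a non-empty label $B$ together with a fixed type $f$ of root embedding; the recursive call then asks for a symbolic description of the output subtree below $(y, \alpha)$, where $\alpha$ is any embedding of type $f$, and proceeds by enumerating extensions of $f$ to the children of $y$ in the pattern.

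The main obstacle is precisely the bookkeeping required by this strengthening: once $f$ has fixed certain elements of $t(\bar a)$, embeddings extending $f$ must avoid re-using them, and this injectivity constraint must be propagated through the recursion. One must keep track, at each recursive step, of which positions of $t$ have been partially consumed and by how many elements, so that the falling-factorial counts in the first lemma are computed against the remaining unconsumed copies. Once this bookkeeping is in place, the identity $s(\bar a) = \Phi(t(\bar a))$ for all sufficiently large $\bar a$ follows by combining the count of embeddings of each type with the isomorphism-invariant subtree attached to it, together with the usual caveat that ``sufficiently large'' is needed to keep every $m_p(\bar a)$ and every falling factorial positive.
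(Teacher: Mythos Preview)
Your proposal is correct and follows essentially the same approach as the paper: your ``type'' of an embedding is exactly what the paper calls a \emph{profile} (a node $x$ of the pattern together with a homomorphism $\Phi(x)\to t$), your second lemma is the paper's claim that nodes of the output with the same profile have isomorphic subtrees (proved via automorphisms of $t(\bar a)$), and your first lemma with its falling-factorial count is the paper's edge-polynomial claim. The paper organises the recursion as an induction on the \emph{tree of profiles} rather than on the depth of $\Phi$ with a fixed root type, but this is the same induction repackaged; your bookkeeping worry about already-consumed copies is exactly what the paper's formula $\prod_e\prod_{i=k_e}^{\ell_e-1}(p_e-i)$ handles, where $k_e$ and $\ell_e$ count the edges of $\Phi(x)$ and $\Phi(y)$ landing on $e$.
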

 % We will only apply the lemma in the special case where all polynomials in the symbolic tree $t$ are pairwise distinct variables, as in the following picture:
% \mypic{13}
% Such symbolic trees are called \emph{free}.
% As far as this paper is concerned, the real purpose of the polynomials in a symbolic tree is to describe the output tree.

 \begin{proof} See Appendix~\ref{app:symbolic-pattern-application}. %of~\cite{BK24-full}. 
 Note that thanks to Lemma~\ref{lem:symbolic-must-differ-on-parameters}, the symbolic tree $s$ is unique.
\end{proof}
 
\begin{myexample}
\label{ex:symb-patterns1}
Consider a simple scenario where the input trees and the patterns are of height one, and there is only one input label (i.e., input trees are effectively unlabelled). For the very simple symbolic input tree $t$ and the pattern $\Phi$ as seen below, the symbolic tree that arises from Lemma~\ref{lem:extend-to-polynomial-trees}:
\mypic{27}
has only one edge labelled with the polynomial
\[
	p(x) = x^4-6x^3+13x^2-7x = x(x-1)(x-2)(x-3) + x(x-1) + x(x-1) + x.
\]
Indeed, for $x>3$, this polynomial counts embeddings of the trees that label the nodes of $\Phi$ into the tree of height one with $x$ children of the root. Note that the polynomial retains full information about the pattern, assuming that both the pattern and all labels of its nodes are of height one. In other words, the single-edge input symbolic tree distinguishes all patterns of this shape. Lemma~\ref{lem:patterns-from-polynomials} below says that this is a general phenomenon, although more complex input trees will be needed to distinguish more complex patterns.
\end{myexample} 

\begin{myexample}
\label{ex:symb-patterns2}
Remaining in the unlabelled setting, consider the following symbolic input tree $t$ and patterns $\Phi_1$ and $\Phi_2$ which are identical except for the embedding between the two bottom nodes:
\mypic{28}
The symbolic output trees that arise from Lemma~\ref{lem:extend-to-polynomial-trees} are the same for both patterns. In other words, the symbolic input $t$ does not distinguish the patterns. The slightly larger tree $t'$ below does:
\mypic{29} 
\vspace*{-2ex}
\end{myexample}

\begin{myexample}
\label{ex:symb-patterns3}
Consider a setting with two labels for edges of input trees. For the symbolic input tree $t$ and the pattern $\Phi$ as seen below, the tree that arises from Lemma~\ref{lem:extend-to-polynomial-trees} is shown on the right:
\mypic{25}
Indeed, substituting any positive numbers for the $x_i$'s (with $x_4,x_5>2$), the output tree instance is the result of applying $\Phi$ to the input tree instance.
\end{myexample}

We will only apply the Lemma~\ref{lem:extend-to-polynomial-trees} to symbolic trees which are {\em free}, meaning that all polynomials in them are pairwise distinct variables, as in Examples~\ref{ex:symb-patterns1}-\ref{ex:symb-patterns3}. We will also require the symbolic trees to be sufficiently large for $\Phi$, in the following sense:
\begin{definition}
 A tree $t$ is \emph{large} for a class of trees $S$ if 
 for every tree $s\in S$ and its substructure $s'$ (that is, a ``pruned'' tree that arises from $s$ by removing some subtrees),
 every embedding of $s'$ into $t$ extends to an embedding of $s$ into $t$.
\end{definition}
Every finite set of trees admits some large tree; it is enough to take a complete tree of large height and large degree. 

We now state the main technical result in this section, which is that if we apply a pattern to a symbolic input tree that is free and large, then the output symbolic tree uniquely identifies the pattern. Using the notation from the statement of Lemma~\ref{lem:extend-to-polynomial-trees}, this lemma says that $\Phi$ can be reconstructed from $s$.

\begin{lemma}\label{lem:patterns-from-polynomials}
 Let $\Phi_1$ and $\Phi_2$ be patterns, and let $t$ be a tree that is large for the set of trees that are present as node labels in either of the two patterns. View $t$ as a free symbolic tree, with the variables being the edges of $t$, and consider the two output symbolic trees that arise by applying $\Phi_1$ and $\Phi_2$ to it. If these output symbolic trees are equal, then the patterns $\Phi_1$ and $\Phi_2$ are isomorphic. 
\end{lemma}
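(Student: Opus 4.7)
I would argue by induction on the (common) height $h$ of the two patterns. The base case $h=0$ is trivial, since both patterns consist of a single root labelled with the empty structure. For the inductive step, I first translate the hypothesis using Lemma~\ref{lem:symbolic-must-differ-on-parameters}: equality of the two output symbolic trees is equivalent to $\Phi_1(t(\bar a)) \cong \Phi_2(t(\bar a))$ as trees for all sufficiently large positive tuples $\bar a$.

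Next I examine the root-level structure of the output. The children of the root of $\Phi_i(t(\bar a))$ are pairs $(y,\alpha)$, where $y$ is a root-child of the pattern $\Phi_i$ and $\alpha\colon \Phi_i(y)\hookrightarrow t(\bar a)$ is an embedding. Grouping them by the isomorphism type $\tau$ of the subtree of $\Phi_i(t(\bar a))$ below $(y,\alpha)$, I obtain, for each $\tau$, a subset $G_{i,\tau}$ of root-children of $\Phi_i$ and a polynomial
\[
 p_\tau(\bar a) \;=\; \sum_{y\in G_{i,\tau}} E(\Phi_i(y),t(\bar a)),
\]
where $E(s,t(\bar a))$ denotes the embedding-count polynomial of $s$ into $t(\bar a)$. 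The symbolic output tree records exactly the data $\{(\tau,p_\tau)\}$ together with the sub-symbolic-tree below each $\tau$, so reconstructing a pattern reduces to recovering each $G_{i,\tau}$ from this data.

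Two ingredients drive the induction. First, for each $\tau$, the sub-symbolic-tree below it together with a strengthened induction hypothesis --- formulated for patterns whose root carries a prescribed non-empty label and which are applied to $t$ equipped with a distinguished embedding of that label --- determines the sub-pattern rooted at each $y\in G_{i,\tau}$ up to isomorphism. In particular, all elements of $G_{i,\tau}$ share the same root label, call it $s_\tau$. Second, the polynomials $E(s,t(\bar a))$ for pairwise non-isomorphic labels $s$ that embed into $t$ are linearly independent over $\Int$. Granting this, $p_\tau$ equals $|G_{i,\tau}|\cdot E(s_\tau,t(\bar a))$, and the equality of $p_\tau$ across $\Phi_1$ and $\Phi_2$ forces $|G_{1,\tau}|=|G_{2,\tau}|$ and $s_{1,\tau}\cong s_{2,\tau}$; the strengthened induction hypothesis then gives the isomorphism of sub-patterns, and assembling over all $\tau$ yields $\Phi_1\cong\Phi_2$.

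The main obstacle is the polynomial-independence claim: showing that non-isomorphic labelled trees embed into a free, large $t$ with linearly independent polynomial counts. I would attack this by an inductive peeling argument, or by tracking the leading monomial under a suitable ordering on the variables of $t$, generalizing the falling-factorial decomposition highlighted in Example~\ref{ex:symb-patterns1}. A secondary technicality is formulating the induction hypothesis for patterns with non-trivial root labels and verifying that the largeness of $t$ is preserved when the remaining structure is embedded around a prescribed image, but this should be a routine strengthening of the present lemma.
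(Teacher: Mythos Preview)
Your plan has the right overall shape, but there is a genuine gap at the grouping step. You write that grouping the root-children $(y,\alpha)$ of $\Phi_i(t(\bar a))$ by the isomorphism type $\tau$ of the subtree below them yields a subset $G_{i,\tau}$ of root-children of the \emph{pattern}, and that $p_\tau=\sum_{y\in G_{i,\tau}}E(\Phi_i(y),t(\bar a))$. But the subtree below $(y,\alpha)$ is determined not by $y$ alone but by the pair $(y,h)$ where $h=\alpha;\text{origin}\colon\Phi_i(y)\to t$ is the induced homomorphism into the symbolic tree---what the paper calls a \emph{profile}. A single pattern node $y$ typically has several profiles $(y,g)$, and these can have pairwise different symbolic subtrees; conversely, different nodes can contribute to the same symbolic subtree. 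So the coefficient $p_s$ attached to a symbolic subtree $s$ is a sum, over profiles $(y,g)$ with that subtree, of the edge-polynomials of Claim~\ref{claim:edge-polynomial}, not a sum of full embedding counts $E(\Phi_i(y),-)$ over pattern nodes. Example~\ref{ex:symb-patterns2} shows exactly why this matters: two patterns with identical node labels but different inclusions between them are only separated once $t$ has enough edges to create distinct profiles. Your conclusion that ``all elements of $G_{i,\tau}$ share the same root label $s_\tau$'' therefore does not follow, and the linear-independence step has nothing to act on.

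The paper repairs this by working with profiles throughout. From the leading monomial of $p_s$ one reads off (Claim~\ref{claim:parse-coefficients}) the multiset $m(g)$ of an \emph{injective} child profile $(y,g)$ contributing to $s$; largeness of $t$ guarantees that a child $y$ with maximal label admits such an injective profile. Since $m(g)$ determines an injective profile up to shallow isomorphism (Claim~\ref{claim:shallow-isomorphism}), one matches it across the two patterns and recurses, with the induction measured by the size of the pattern after peeling off $y$. Your ``strengthened induction hypothesis for patterns with a prescribed root label and embedding'' is essentially the paper's Claim~\ref{claim:deep-isomorphism}, but it must be stated at the level of profiles from the outset, not only inside the recursive call.
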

\begin{proof}
See Appendix~\ref{app:pattern-extraction}.
%B of~\cite{BK24-full}.
\end{proof}

To complete the proof of Theorem~\ref{thm:patterns-must-be-isomorphic}, suppose that $\Phi_1$ and $\Phi_2$ are non-isomorphic patterns. Choose some large finite tree $t$ and apply Lemma~\ref{lem:patterns-from-polynomials}. Since the symbolic trees $\Phi_1(t)$ and $\Phi_2(t)$ are not equal, by Lemma~\ref{lem:symbolic-must-differ-on-parameters}, there must be some choice of parameters $\bar a$ such that applying $\Phi_1$ and $\Phi_2$ to the input tree $t(\bar a)$ yields non-isomorphic outputs, thus proving Theorem~\ref{thm:patterns-must-be-isomorphic}. %Note that Lemma~\ref{lem:symbolic-must-differ-on-parameters} also gives a stronger conclusion, namely Proposition~\ref{prop:probabilistic-equivalence}: if the values for the parameters $\bar a$ are chosen unioformly at random from $\set{1,\ldots,n}$, then the probability that the two output trees are non-isomorphic will tend to 1, as $n$ tends to infinity. 

\section{Deciding equivalence for polyregular functions on multiset types}
\label{sec:equivalence-full}
So far we have solved the equivalence problem for quantifier-free tree-to-tree interpretations on trees of bounded height. We shall now generalize that to polyregular functions on arbitrary multiset types. The more important part of the generalization is the ability of polyregular functions to use quantifiers; in this sense, the proof of the theorem below will amount to a quantifier elimination procedure that will reduce the problem to the quantifier-free case from Section~\ref{sec:quantifier-free-tree-to-tree}.

 \begin{theorem}\label{thm:decidable-equivalence}
 The following problem is decidable:
 \begin{description}
 \item[Instance.] Two polyregular functions between multiset types $f,g:\Sigma\to\Gamma$
 given as first-order interpretations, or as derivations.
 \item[Question.] Are the functions equivalent in the following sense: for every input, the two outputs are isomorphic?
 \end{description}
 \end{theorem}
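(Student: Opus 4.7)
The strategy is a two-step reduction to the quantifier-free tree-to-tree setting handled by Theorem~\ref{thm:decidable-equivalence-qf-trees}: first reshape both types to bounded-height trees, then eliminate the quantifiers from the interpretations. For the first step, Lemma~\ref{lem:embedding} supplies derivable encode/decode pairs between each of $\Sigma$ and $\Gamma$ and some $\multisets^k 1$. Since derivable functions preserve isomorphism and $\textit{encode};\textit{decode}=\mathrm{id}$, the equivalence of $f,g:\Sigma\to\Gamma$ is equivalent to the equivalence of the corresponding sandwich compositions between the tree types; and the relational representation of Definition~\ref{def:type-constructors-on-structures} can be converted to the edge representation of Section~\ref{sec:quantifier-free-tree-to-tree} by mutually inverse \fo interpretations. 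This reduces the problem to deciding equivalence of two \fo interpretations $f,g:\edgetrees k\Sigma\to\multisets^n 1$ with $\Sigma$ a finite set of edge labels.

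For the quantifier elimination, let $r$ bound the quantifier ranks of all formulas in $f$ and $g$. The composition method for \fo on trees of bounded height (Feferman--Vaught for disjoint unions, applied inductively through the multiset constructor) yields finitely many $r$-types of pointed subtrees, each computable from local data. I would build an \fo interpretation $\tau:\edgetrees k\Sigma\to\edgetrees k\Sigma^+$ for a finite $\Sigma^+$, which relabels every edge with its original label together with its bottom-up $r$-type and the top-down $r$-type of its context. On an enriched tree, every formula of rank at most $r$ becomes equivalent to a quantifier-free formula obtained by replacing each quantifier with a lookup on the enriched labels; this produces quantifier-free interpretations $\tilde f,\tilde g:\edgetrees k\Sigma^+\to\multisets^n 1$ with $\tau;\tilde f\equiv f$ and $\tau;\tilde g\equiv g$. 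Theorem~\ref{thm:decidable-equivalence-qf-trees} then decides whether $\tilde f$ and $\tilde g$ are equivalent.

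The main obstacle is showing $f\equiv g \iff \tilde f\equiv\tilde g$ rather than just one direction. Pre-composing with $\tau$ handles $\tilde f\equiv\tilde g\Rightarrow f\equiv g$ for free. For the converse, enriched trees outside the image of $\tau$ could in principle be treated inconsistently by $\tilde f$ and $\tilde g$, and this image is not quantifier-free definable in the edge representation, since verifying an edge's claimed bottom-up type requires inspecting all its children. I would resolve this via the pattern machinery of Section~\ref{sec:patterns}: when the patterns of $\tilde f$ and $\tilde g$ produced by Lemma~\ref{lem:patterns-complete} are non-isomorphic, the distinguishing input from the proof of Theorem~\ref{thm:patterns-must-be-isomorphic} can be chosen as the enrichment $\tau(T^\ast)$ of a sufficiently bushy genuine tree $T^\ast$, since enrichments of large genuine trees are themselves large enriched trees realising every enriched label appearing in the two patterns (the patterns only mention labels that arise as $r$-types of some real subtree, because they were derived from \fo interpretations operating on real trees). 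Non-isomorphism of the patterns therefore forces $f$ and $g$ to disagree on $T^\ast$, completing the reduction.
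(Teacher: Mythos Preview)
Your overall architecture (encode the types as bounded-height trees, then eliminate quantifiers, then invoke Theorem~\ref{thm:decidable-equivalence-qf-trees}) matches the paper, but the quantifier-elimination step has a genuine gap, and your proposed patch does not close it.

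The issue is the direction of the enrichment. Your $\tau$ is an \emph{injective} map from genuine $\Sigma$-trees into $\Sigma^+$-trees, so you must argue that any witness of $\tilde f\not\equiv\tilde g$ can be pulled back through $\tau$. Your two supporting claims both fail. First, the patterns produced from $\tilde f,\tilde g$ by Lemma~\ref{lem:patterns-complete} are obtained by decomposing the quantifier-free universe formulas into complete conjunctive types over the vocabulary of $\Sigma^+$; nothing forces those types to be realisable as enrichments of actual subtrees. Second, and more seriously, the distinguishing inputs in the proof of Theorem~\ref{thm:patterns-must-be-isomorphic} are not single large trees but instantiations $t(\bar a)$ of a free symbolic tree with large parameters: one clones every edge of $t$ into many copies. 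If $t=\tau(T_0)$ carries an $r$-type label encoding, say, ``this subtree has exactly two children'', then after cloning each edge $\bar a$-many times the subtree has many more children while still carrying that label. So $t(\bar a)$ is essentially never in the image of $\tau$, and you get no information about $f$ and $g$ on genuine trees from the disagreement of $\tilde f,\tilde g$ on $t(\bar a)$.

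The paper sidesteps the whole difficulty by running the enrichment in the opposite direction. Lemma~\ref{lem:quantifier-elimination} produces a \emph{surjective} non-copying interpretation $h:\Gamma'\twoheadrightarrow\Sigma$ from a richer multiset type, such that every rank-$r$ formula on $\Sigma$ becomes quantifier-free after pulling back along $h$. Surjectivity makes the equivalence $f\equiv g \iff h;f\equiv h;g$ immediate, with no need to restrict to any image or to control which inputs the pattern machinery selects. A final surjective quantifier-free map from a sum of edge-represented tree types onto $\Gamma'$ then lands the problem in the scope of Theorem~\ref{thm:decidable-equivalence-qf-trees}. If you want to repair your argument, the cleanest fix is to replace the injection $\tau$ by such a surjection.
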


% \begin{theorem}
% The following three variants of the equivalence problem are equi-decidable: 
% \begin{enumerate}
% \item First-order interpretations where
% \begin{itemize}
% \item inputs are graphs of tree depth bounded by some $k \in \set{0,1,\ldots}$;
% \item outputs are graphs of tree depth bounded by some $\ell \in \set{0,1,\ldots}$.
% \end{itemize}
% \item First-order interpretations where
% \begin{itemize}
% \item inputs are elements of some multiset type $\Sigma$; 
% \item outputs are elements of some multiset type $\Gamma$.
% \end{itemize}
% \item Quantifier-free interpretations where
% \begin{itemize}
% \item inputs are trees of height bounded by some $k \in \set{0,1,\ldots}$ and with nodes labelled by some finite alphabet $\Sigma$;
% \item outputs are unlabelled trees of height bounded by some $\ell \in \set{0,1,\ldots}$.
% \end{itemize}
% \end{enumerate}
% \end{theorem}

Since the proof of Theorem~\ref{thm:interpretations-equal-derivable} is effective, i.e.~for every first-order interpretation we can compute a corresponding derivation and {\em vice versa}, the representation of the input is not important as long as decidability is concerned. 

We will prove Theorem~\ref{thm:decidable-equivalence} by reducing it to the equivalence problem for quantifier-free tree-to-tree interpretations that was shown decidable in Theorem~\ref{thm:decidable-equivalence-qf-trees}. The reduction is non-elementary, i.e.~both its running time and the size of the output instance are towers of exponentials whose height is polynomial in the input size. %In Lemma~\ref{lem:lower-bound} we will prove a non-elementary lower bound for the problem; the proof of the lemma assumes that the input is given as a first-order interpretation, but the same proof would work for derivable functions.

\subsection{Quantifier elimination}

The most important part of the reduction is a quantifier-elimination result, given in Lemma~\ref{lem:quantifier-elimination} below. The idea is that every structure of a multiset type can be labelled with extra information so that first-order formulas can be replaced by quantifier-free formulas. 

To formalise the notion of labelling, we use \emph{non-copying interpretations}. These are interpretations in which there is only one component, and it has dimension one. For such an interpretation, the universe of the output structure can be regarded as a subset of the universe of the input structure. 

Recall that the \emph{quantifier rank} of a first-order formula is the maximal nesting depth of its quantifiers. For a rank $r \in \set{0,1,\ldots}$, the {\em $r$-theory} of a structure $A$ with distinguished elements $\bar a \in A^n$ is the set of all first-order rank-$r$ formulas with $n$ free variables that are satisfied in it. (This is often called the $r$-type, but we should not overload the word ``type''.) 

\begin{lemma}\label{lem:quantifier-elimination}
For every multiset type $\Sigma$ and numbers $r,n\geq 0$, there is a multiset type $\Gamma$ and a surjective\footnote{Meaning that every structure in $\Sigma$ arises as $f(A)$ for some $A\in\Gamma$, up to isomorphism.}
  non-copying interpretation $f : \Gamma\to \Sigma$,
 such that for every rank-$r$ formula $\varphi(x_1,\ldots,x_n)$ over the vocabulary of $\Sigma$ there is a quantifier-free formula $\psi(x_1,\ldots,x_n)$ 
 over the vocabulary of $\Gamma$ such that for every structure $A \in \Gamma$ and every tuple $\bar a\in A^n$:
 \begin{align*}
 A, \bar a \models \psi \quad \text{iff} \quad \text{all elements of $\bar a$ are in the universe of $f(A)$ and }f(A), \bar a\models \varphi.
 \end{align*}
\end{lemma}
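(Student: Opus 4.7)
The plan is to proceed by induction on the structure of $\Sigma$. The base case $\Sigma = 1$ is trivial (every rank-$r$ formula is a Boolean combination of equalities among its free variables), and the cases $\Sigma = \Sigma_1 \times \Sigma_2$ and $\Sigma = \Sigma_1 + \Sigma_2$ reduce to standard Feferman--Vaught composition arguments for first-order logic applied to the two inductively obtained pairs $(\Gamma_i, f_i)$, using the fact that a rank-$r$ formula on a product or coproduct is a Boolean combination, indexed by the side each free variable lives on, of rank-$r$ formulas on each component.

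The main case is $\Sigma = \multisets \Sigma_0$. First apply induction to $\Sigma_0$ at the same rank $r$ and the same $n$ to obtain $(\Gamma_0, f_0)$. Let $T$ be the finite set of rank-$r$ first-order theories of $\Sigma_0$-structures with no distinguished elements, fix a saturation threshold $N$ large enough that the composition theorem for disjoint unions of up to $n$ tagged substructures depends only on theory multiplicities capped at $N$, and let $\mathcal{M}$ be the finite set of saturated multisets $T \to \{0,\ldots,N\}$. Set $L = T \times \mathcal{M}$ and define
\[
\Gamma \;=\; \multisets\left(\sum_{l \in L} \Gamma_0\right).
\]
The $0$-ary predicates arising from the inner coproduct then become unary predicates under the outer $\multisets$, so every element of a substructure carries its substructure's label $l = (t,M)$ in a quantifier-free way. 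The non-copying interpretation $f$ keeps the outer root, applies $f_0$ inside each substructure, and discards the labels; it is surjective because for each $A \in \Sigma$ one can choose the intended label $(t,M)$ per substructure, with $t$ the true rank-$r$ theory of its $\Sigma_0$-image and $M$ the true saturated multiset of these theories across $A$.

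The quantifier-elimination clause then follows from the Feferman--Vaught theorem for multiset structures: the rank-$r$ theory of $(A,\bar{x})$ in $\multisets \Sigma_0$ is determined by (i) the $\sim$-partition of $\bar{x}$, (ii) for each substructure containing some $x_j$'s, its rank-$r$ theory together with those distinguished elements, and (iii) the saturated multiset of rank-$r$ theories of the substructures that contain no distinguished element. Item (i) is quantifier-free in the vocabulary of $\Gamma$; item (ii) is quantifier-free on $\Gamma_0$ by induction; item (iii) is computed by reading the globally-stored $M$ off any available label and subtracting the $t$-coordinates of the labels of the (at most $n$) substructures that contain a distinguished element, which is well-defined by the choice of $N$. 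Since $\varphi$ is determined by this rank-$r$ theory, $\psi$ may be written as the disjunction over theories $\tau$ satisfying $\varphi$ of the quantifier-free predicates expressing (i)--(iii), conjoined with the check that each $x_j$ lies in the universe of $f$.

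The main obstacle is guaranteeing that the labels agree with the underlying structure for every $A \in \Gamma$, not only for those in the image of the intended labelling, since the lemma demands the equivalence for all $A \in \Gamma$. I would resolve this by strengthening the induction hypothesis to also produce, for each rank-$r$ theory $\tau \in T$, a multiset subtype of $\Gamma_0$ whose $f_0$-image consists exactly of the $\Sigma_0$-structures with theory $\tau$, and by assembling $\Gamma$ so that the $l = (t,M)$-summand of the inner coproduct draws its $\Gamma_0$-part only from the subtype corresponding to $t$; consistency of the shared $M$-coordinate across substructures can then be forced by an additional quantifier-free check on labels. With labels aligned with the structure by construction, the $\psi$ built above computes $\varphi \circ f$ correctly on every $A \in \Gamma$, and the remaining work is the concrete multiset-vocabulary instantiation of the composition theorem together with the routine arithmetic of saturated multisets over the finite alphabet $T$.
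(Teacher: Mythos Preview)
Your overall plan---structural induction on $\Sigma$, Feferman--Vaught composition for $\times$ and $+$, and a strengthened hypothesis that splits $\Gamma_0$ along rank-$r$ theories---matches the paper. The divergence, and the gap, is in how you encode the global profile in the multiset case.

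Storing the saturated theory-multiset $M$ as part of a per-substructure label $l=(t,M)$ does not work, and the proposed ``quantifier-free check on labels'' cannot repair it. The type $\Gamma=\multisets\big(\sum_{l\in L}\Gamma_{0,t}\big)$ contains structures $A$ in which different substructures carry different $M$-labels, and also structures where all $M$-labels agree but do not match the actual multiset of theories (nothing in the type constructors constrains the \emph{number} of substructures of each theory). A quantifier-free $\psi$ with $n$ free variables sees the labels of at most $n$ substructures, so it cannot detect either defect; in particular, for $n=0$ it sees no labels at all. On such an $A$, reading item~(iii) off a label gives the wrong saturated multiset, and $\psi$ disagrees with $\varphi(f(A),\bar a)$. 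Since the lemma demands correctness for \emph{every} $A\in\Gamma$, not only for the intended labellings, this is a genuine failure, not a technicality.

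The paper's fix is to move the profile to an outer finite coproduct and, crucially, to realise each profile by a \emph{product} rather than a multiset: for each $\alpha:I\to\{0,\ldots,r\}$ one takes
\[
\Gamma_\alpha \;=\; \prod_{i\in I}(\Gamma_i)^{\alpha(i)}\ \times\ \prod_{i:\alpha(i)=r}\multisets\Gamma_i,
\]
where $\Gamma_i$ is the theory-$i$ slice from the strengthened hypothesis, and then $\Gamma=\coprod_\alpha\Gamma_\alpha$ with $g_\alpha$ collecting all the $\Gamma_i$-parts into one multiset. By construction every $A\in\Gamma_\alpha$ has profile exactly $\alpha$: the product fixes the counts below threshold, and the extra $\multisets\Gamma_i$ only adds further copies of theory $i$ when $\alpha(i)$ is already saturated. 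No inconsistent structures exist, so the quantifier-free $\psi$ can read the profile from the $0$-ary coproduct tag and the theory of each inhabited substructure from its $\Gamma_i$-slot. Your strengthened hypothesis is the right ingredient; what is missing is this product-based encoding that makes the profile a structural invariant of the type rather than an annotation to be checked.
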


\begin{proof}
See Appendix~\ref{app:quantifier-elimination}.
%~\ref{app:quantifier-elimination} in~\cite{BK24-full}. %The proof uses standard techniques for proving quantifier elimination, and is similar to the proof of Theorem II.5 from~\cite{bojanczyk2023growth}.
\end{proof}

\subsection{Reduction to Theorem~\ref{thm:decidable-equivalence-qf-trees}}

We now use quantifier elimination from Lemma~\ref{lem:quantifier-elimination} to reduce the problem in Theorem~\ref{thm:decidable-equivalence}, about equivalence for polyregular functions on multiset types, to the already solved problem from Theorem~\ref{thm:decidable-equivalence-qf-trees}, about equivalence for quantifier-free interpretations on tree types.

Consider an instance 
of the equivalence problem in Theorem~\ref{thm:decidable-equivalence}. Apply Lemma~\ref{lem:embedding} to the output type $\Gamma$, yielding some encoding: 
\[\xymatrix{
 \Sigma \ar@/^/[r]^{f_1} \ar@/_/[r]_{f_2}
 &
 \Gamma
 \ar@{^(->}[r]_-{\text{encode}}
 &
 \multisets^k 1
} \]
The encoding is injective, since it admits an inverse decoding. As a result, appending it to both functions does not change the answer to the equivalence problem. 

Let $r$ be the maximal quantifier rank among all of the first-order formulas used by the above two interpretations. Apply Lemma~\ref{lem:quantifier-elimination} yielding some surjective function $g$ as in the following diagram:
\[\xymatrix{
\Gamma'\ar@{->>}[r]^g 
&
 \Sigma \ar@/^/[r]^{f_1} \ar@/_/[r]_{f_2}
 &
 \Gamma
 \ar@{^(->}[r]_-{\text{encode}}
 &
 \multisets^k 1
} \]
Because $g$ is surjective, prepending it to the diagram does not change the answer to the equivalence problem. 
By the quantifier-elimination properties in Lemma~\ref{lem:quantifier-elimination}, the two functions of type $\Gamma' \to \multisets^k 1$ in the above diagram are quantifier-free interpretations. 

So far, we have managed to reduce the equivalence problem to the special case of quantifier-free interpretations that output unlabelled trees. The last remaining step in the reduction to Theorem~\ref{thm:decidable-equivalence-qf-trees} is turning the input type into trees. This works thanks to the following fact:

\begin{lemma}
 For every multiset type $\Gamma$ there is a finite multiset type $\Delta$, numbers $k,n \in \set{0,1,\ldots}$ and a surjective quantifier-free interpretation 
 \begin{align*}
  \myunderbrace{
    \edgetrees {k} \Delta
  + \cdots + 
    \edgetrees {k} \Delta
  }
  {$n$ {\em times, with trees using edge representation}}
  \to \Gamma.
 \end{align*}
 \end{lemma}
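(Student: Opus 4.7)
The plan is to proceed by induction on the structure of $\Gamma$, at each step constructing the data $(\Delta, k, n)$ together with a surjective quantifier-free interpretation into $\Gamma$. The guiding intuition is that a structure in a multiset type is essentially a tree whose depth is bounded by the nesting of type constructors, so an entire structure can be laid out as an edge-labelled tree of height $k$; every pertinent query about such an encoding is quantifier-free because bounded-depth ancestor traversal is expressible by iterating the parent function, which is a term of the edge-representation vocabulary.

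The base case $\Gamma = 1$ is a one-line choice: $n = 1$, $k = 0$, any $\Delta$, and a single dimension-$0$ component with universe formula ``true''. For $\Gamma = \Sigma_1 + \Sigma_2$, one applies the inductive data for each summand and takes a disjoint union: $n = n_1 + n_2$ copies, with the first $n_1$ routed to $F_1$ and the remaining $n_2$ to $F_2$, and with the distinguishing $0$-ary relation of the output set to a constant in each copy, readable from the input's own tags. For $\Gamma = \Sigma_1 \times \Sigma_2$, one adds one extra level of tree depth, takes $n = n_1 \cdot n_2$ copies indexed by pairs $(j_1, j_2)$, and introduces two fresh labels $L, R$ used on root-edges: in the $(j_1, j_2)$-th copy, the $L$-subtree encodes via $F_1$ in its $j_1$-th copy the left factor, and the $R$-subtree encodes analogously the right factor. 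The unary relation of the output that distinguishes the two factors is true of an element iff its unique root-ancestor-edge is labelled $L$, which is quantifier-free via iterated applications of $\mathrm{parent}$.

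The heart of the argument is $\Gamma = \multisets \Sigma$. Given inductive data $(\Delta_\Sigma, k_\Sigma, n_\Sigma, F_\Sigma)$, one sets $k = k_\Sigma + 1$, $\Delta = \Delta_\Sigma + \{1, \ldots, n_\Sigma\}$, and $n = 1$. The idea is that an input tree represents a multiset whose elements are encoded by the subtrees below the meta-root, and a label $j \in \{1, \ldots, n_\Sigma\}$ on a given root-edge tags which of the $n_\Sigma$ copies of $F_\Sigma$'s source the subtree inhabits. The interpretation has a dimension-$0$ component for the root of the output multiset and, for each dimension-$d$ component of $F_\Sigma$ lying in its $j$-th copy, a dimension-$d$ component whose universe formula requires (a) all coordinates share a common meta-root-edge labelled $j$ (a quantifier-free test via $k_\Sigma$-fold applications of $\mathrm{parent}$) and (b) the original universe formula of $F_\Sigma$ holds, with the parent function reinterpreted so as to loop on subtree root-edges --- a quantifier-free substitution controlled by the term condition $\mathrm{parent}(x) = \mathrm{parent}^2(x)$ --- and with the $0$-ary source tags of $F_\Sigma$ replaced by the Boolean constants fixing the $j$-th copy. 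The equivalence relation $\sim$ and the unary relations inherited from $0$-ary relations of $\voc\Sigma$ are handled the same way. Surjectivity is straightforward: any target multiset $\multiset{A_1, \ldots, A_m}$ arises by attaching, below a fresh meta-root-edge labelled $j_i$, an inductively guaranteed input tree witnessing $A_i$ in the $j_i$-th copy of $F_\Sigma$'s source.

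The main obstacle is the multiset case, where one must simulate the sub-interpretation $F_\Sigma$ on every subtree of a larger input structure without introducing quantifiers. The key enabler is that the change of parent function --- from meta-tree parent to sub-tree parent, which must loop on subtree root-edges instead of pointing to the meta-root-edge --- is expressible as a conditional on a term equation, and so remains quantifier-free; the same observation allows the $0$-ary relations of the source vocabulary to be replaced by the Boolean constants appropriate to each copy.
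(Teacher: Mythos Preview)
Your proof is correct and follows the same inductive approach as the paper, whose own proof is a single sentence (``Induction on $\Gamma$. For types of the form $\multisets\Sigma$, use the fact that if an interpretation $f$ is surjective and quantifier-free then so is $\multisets f$''); your multiset case is an explicit unpacking of this fact, including the passage from an edge-representation tree of height $k_\Sigma+1$ back to a multiset of height-$k_\Sigma$ trees via the root-edge tag and the parent-looping substitution, which the paper leaves implicit. Your product and coproduct cases fill in details the paper omits entirely.
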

\begin{proof}
Induction on $\Gamma$. For types of the form $\multisets\Sigma$, use the fact that if an interpretation $f$ is surjective and quantifier-free then so is $\multisets f$. 
\end{proof}

\subsection{Connections with the string-to-string case}
\label{sec:connections-with-strings}
In Theorem~\ref{thm:decidable-equivalence}, we proved decidability of equivalence for tree-to-tree interpretations, with the trees being unordered and of bounded height. Decidability of the equivalence problem for string-to-string interpretations remains open. We finish this section with some comments about the connection between these two problems. As we explain below, these two problems are incomparable, because (1) as inputs, strings are more general than trees of bounded height; and
(2) as outputs, trees of bounded height (in fact, height two) are more general than strings.
In particular, a common generalisation of both problems would be interpretations that input strings and output trees of bounded height. Below, we give informal explanations for (1) and (2) above. Consider two interpretations:
  \mypic{30}
  The injective interpretation shows that trees of height two are harder than strings as outputs. The surjective one shows that strings are harder than trees of bounded height as inputs;  the height bound on trees is needed there so that the interpretation can match brackets.

We believe that the ideas developed in this paper, together with above representation of output strings as trees, might be useful in the solution of the string-to-string problem.
\section{Applications to graphs of bounded tree depth}
\label{sec:tree-depth}
\newcommand{\treedepth}{\mathsf G}
In this section, we apply the results on tree-to-tree transducers to graph-to-graph transducers. The graphs in this section are undirected and without parallel edges. %When using first-order logic to describe such graphs, 
We use a straightforward representation of a graph as a structure: the universe is the vertices, and there is one binary relation describing the edges. Choosing another representation, such as having edges in the universe, would give the same results. We use \fo interpretations to transform graphs into other graphs.

We would like to decide equivalence for \fo interpretations that describe graph-to-graph functions. 
% Since arbitrary structures can easily be encoded in graphs using first-order logic, this is the same problem as equivalence for first-order interpretations between arbitrary classes of structures. 
The problem is undecidable in general, even for interpretations with Boolean outputs. This is because the following \emph{satisfiability problem} is undecidable: given a first-order sentence, decide if it is true in some finite graph~\cite[Theorem 5.5]{courcelleGraphStructureMonadic2012}. A classical approach to working around this limitation is to consider graphs that are similar to trees. Several graph parameters can be used to formalize similarity to trees, with three important parameters being: tree-depth~\cite[Section 6.2]{nevsetvril2012sparsity}, tree-width~\cite[Section 4.1]{courcelleGraphStructureMonadic2012} and clique-width~\cite[Section 4.3]{courcelleGraphStructureMonadic2012}. The parameters are related as follows: every class of graphs satisfies the following implications 
\begin{align*}
\text{bounded tree-depth} 
\quad \Rightarrow \quad 
\text{bounded tree-width} 
\quad \Rightarrow \quad 
\text{bounded clique-width.} 
\end{align*}
%In this sense, tree-depth is the weakest of the three, and clique-width is the strongest.
For graphs of bounded clique-width (and therefore also for graphs of bounded tree-depth or tree-width), satisfiability is decidable, even if \fo logic is extended to \mso logic~\cite[Theorem 5.80]{courcelleGraphStructureMonadic2012}. It is not known if one can decide equivalence for \fo interpretations between graphs of bounded clique-width. In fact, the problem is open already for tree-width, even for tree-width 1 and dimension 1, see~\cite[p.7]{remarksGraphtoGraph}. We present the first progress on this problem: we show that the equivalence problem is decidable for bounded tree-depth, for interpretations of arbitrary dimension. We begin by defining tree-depth. %The remaining parameters -- tree-width and clique-width -- will be neither defined nor used, and they were only mentioned as context.
\begin{definition}
 A graph has tree-depth $k=1$ if it has no edges. 
 A graph has tree-depth at most $k > 1$ if there is a vertex, such that after removing that vertex, all connected components have tree depth at most $k-1$. 
\end{definition}

% \begin{myexample}
% Let $k \in \set{1,2,\ldots}$. Consider the following first-order interpretation which inputs a graph and outputs a tree. The nodes of the output tree are paths of in the graph that have length at most $k$. Here a path is an ordered list of vertices where consecutive vertices are connected by an edge. The root of the tree is the empty path, and the children of a path are paths that extend it by one vertex. This function is easily seen to be a first-order interpretation.
% \end{myexample}

%The purpose of this section is to show that equivalence is decidable for first-order interpretations between graphs of bounded tree-depth, as stated in the following theorem.

\begin{theorem}\label{thm:decidable-equivalence-tree-depth}
 The following problem is decidable:
 \begin{description}
 \item[Instance.] Two \fo interpretations, which input graphs of tree-depth at most $k$ and output graphs of tree-depth at most $\ell$. 
%   \[
%  \begin{tikzcd}
%  \txt{graphs of tree \\depth at most $k$}
%  \ar[r,bend left,"f_1"]
%  \ar[r,bend right,"f_2"']
%  &
%  \txt{graphs of tree\\ depth at most $\ell$}
%  \end{tikzcd}
%  \]
 \item[Question.] Are the interpretations equivalent in the following sense: for every input, the two output graphs are isomorphic?
 \end{description}
 \end{theorem}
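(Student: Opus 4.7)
The plan is to reduce Theorem~\ref{thm:decidable-equivalence-tree-depth} to Theorem~\ref{thm:decidable-equivalence} by encoding bounded tree-depth graphs as structures in suitable multiset types. The key tool is the classical characterization: a graph has tree-depth at most $d$ if and only if it is a subgraph of the ancestor closure of a rooted forest of height at most $d$. This yields natural encodings of such graphs as labelled trees of bounded height, hence as structures in a multiset type.

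For the input, I would define a multiset type $T_k$ whose structures are rooted forests of height at most $k$ in which every non-root node carries a Boolean vector of length at most $k$ selecting which of its ancestors are graph-connected to it. Concretely, $T_k$ is built from $\multisets^k 1$ by iterating products with finite Boolean types at the appropriate levels. There is then an evident surjective quantifier-free interpretation $\alpha_k$ from $T_k$ onto the class of graphs of tree-depth at most $k$, which reads the forest structure together with the Boolean labels and produces the corresponding graph. Surjectivity follows from the elimination-forest characterization of tree-depth. By a standard argument, $f_1 \equiv f_2$ if and only if $\alpha_k; f_1 \equiv \alpha_k; f_2$, reducing the problem to the equivalence of two FO interpretations defined on the multiset type $T_k$.

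For the output, I would construct an \emph{iso-faithful} FO interpretation $\beta_\ell$ from graphs of tree-depth at most $\ell$ to the multiset type $T_\ell$, meaning that two graphs are isomorphic if and only if their images under $\beta_\ell$ are isomorphic as $T_\ell$-structures. The idea is to FO-define a canonical elimination forest on each input graph. The construction would proceed by induction on $\ell$: the edgeless case (corresponding to $\ell=1$) is handled by encoding the vertex set as a multiset over a singleton; for larger $\ell$, one identifies an isomorphism-invariant, FO-definable top layer of vertices (for instance, vertices whose removal strictly decreases the tree-depth of each resulting component, together with auxiliary bookkeeping to handle ambiguity), removes it, and recursively encodes the connected components as a multiset of $T_{\ell-1}$-structures. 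Here one crucially uses the equivalence of FO and MSO on structures of bounded tree-depth (\cite[Theorem~1.1]{elberfeld2016first}), which allows any MSO-definable canonicalization to be carried out in FO.

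With $\alpha_k$ and $\beta_\ell$ in hand, the composition $\alpha_k; f_i; \beta_\ell$ is an FO interpretation between the multiset types $T_k$ and $T_\ell$. By the surjectivity of $\alpha_k$ together with the iso-faithfulness of $\beta_\ell$, one has $f_1 \equiv f_2$ if and only if $\alpha_k; f_1; \beta_\ell$ and $\alpha_k; f_2; \beta_\ell$ are equivalent, and the latter is decidable by Theorem~\ref{thm:decidable-equivalence}. The main obstacle is the construction of the iso-faithful interpretation $\beta_\ell$: a graph of bounded tree-depth may admit many non-isomorphic elimination forests, so a canonical choice must either pick a distinguished representative or aggregate all options via the multiset constructor. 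Verifying that this canonicalization is expressible as a genuine FO interpretation, rather than merely a decidable procedure, is the technical crux, and it relies essentially on the FO-MSO collapse for bounded tree-depth structures.
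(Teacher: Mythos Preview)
Your overall reduction is the same as the paper's: precompose with a surjective interpretation from a multiset type onto $\treedepth_k$, postcompose with an injective (iso-faithful) interpretation from $\treedepth_\ell$ into a multiset type, and invoke Theorem~\ref{thm:decidable-equivalence}. Your $\alpha_k$ is essentially the paper's surjective construction.

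The gap is in your primary proposal for $\beta_\ell$. A canonical elimination forest cannot be defined by an interpretation, in \fo or even in \mso: in the triangle $K_3$ all three vertices are automorphism-equivalent, so no formula can single out a root, and the \fo-\mso collapse buys you nothing here since \mso is equally incapable of breaking symmetries. More generally, any isomorphism-invariant interpretation must send automorphic vertices to automorphic outputs, so ``picking a distinguished representative'' is hopeless whenever the candidate top vertices form a nontrivial orbit.

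Your fallback---aggregate all options via the multiset constructor---is exactly what the paper does, and it is worth seeing how cleanly it works. The injective interpretation sends a graph $G$ to the multiset, ranging over all vertices $v$ whose removal leaves only components of tree-depth at most $\ell-1$, of pairs $(\text{label of }v,\ \multiset{H^v})$, where $H$ runs over the connected components of $G\setminus v$ and $H^v$ is $H$ with an extra bit on each vertex recording adjacency to $v$. Injectivity is immediate: applying the surjective decoding to any element of the outer multiset recovers $G$. This is a dimension-two \fo interpretation; the only nontrivial ingredient is that connectivity and the tree-depth bound on components are \fo-definable on bounded tree-depth graphs, which follows directly from the fact that paths in such graphs have length at most $2^\ell$. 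No appeal to the \fo-\mso collapse is needed.
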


 Before proving the theorem, let us make two remarks. First, the above problem would remain decidable if \mso logic was used instead of \fo logic. This is because, as we have mentioned before, \fo and \mso have the same expressive power on graphs of bounded tree-depth~\cite[Theorem 1.1]{elberfeld2016first}. The second remark is about how the instances of the decision problem are represented. We use the following representation: we are given the bounds $k$ and $\ell$ on the tree-depth, as well as a first-order interpretation that uses the vocabulary of graphs. The interpretation should have the property that if the input graph has tree-depth at most $k$, then the output graph has tree-depth at most $\ell$. This property is decidable, since \fo logic is decidable on graphs of tree-depth at most $k$, and there is a \fo formula that checks if a graph has tree-depth at most $\ell$. 

 We now proceed to prove Theorem~\ref{thm:decidable-equivalence-tree-depth}.
The difficulty in the theorem is that there is an implicit isomorphism check, i.e.~the two interpretations might produce isomorphic graphs in different ways. This difficulty was already present in Theorems~\ref{thm:decidable-equivalence-qf-trees} and~\ref{thm:decidable-equivalence}, because bounded-height trees also have non-trivial isomorphisms. As it turns out, that difficulty has already been addressed: Theorem~\ref{thm:decidable-equivalence-tree-depth} is proved by a relatively straightforward reduction to the case of multiset types from Theorem~\ref{thm:decidable-equivalence}. The key step is to represent graphs of bounded tree-depth using multiset types:

\begin{lemma}\label{lem:inj-surj}
 For every $k \in \set{1,2,\ldots}$ there are:
 \begin{enumerate}
    \item a surjective interpretation from some multiset type to graphs of tree-depth at most $k$;
    \item an injective interpretation from graph of tree-depth at most $k$ to some multiset type.
 \end{enumerate}
%  multiset types $\Sigma_k$ and $\Gamma_k$, and interpretations 
%  \[
%  \begin{tikzcd}
%  \Sigma_k \ar[r,two heads]
%  &
%  \text{graphs of tree-depth at most $k$}
%  \ar[r,hookrightarrow]
%  &
%  \Gamma_k,
%  \end{tikzcd}
%  \]
%  which are, respectively, surjective and injective.
\end{lemma}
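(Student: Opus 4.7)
The plan is to build both interpretations by exploiting the structural characterization of tree-depth via elimination forests.

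For part (1), recall that a graph $G$ has tree-depth $\leq k$ if and only if it admits an \emph{elimination forest} of depth $\leq k$: a rooted forest on $V(G)$ in which every edge of $G$ joins an ancestor--descendant pair. Such a forest, annotated at each vertex with the set of ancestor-depths to which that vertex is adjacent, is naturally a structure in the node-labelled forest multiset type $M_k \eqdef \multisets \nodetrees{k-1}{\Lambda}$, where $\Lambda \eqdef 2^{\{1,\ldots,k-1\}}$ is a finite label alphabet. The \fo interpretation from $M_k$ to graphs of tree-depth $\leq k$ outputs, on input forest $F$, the graph whose universe is the set of (non-root) forest nodes and whose edges join pairs $u,v$ such that $u$ is an ancestor of $v$ at depth $d_u$ and $d_u$ lies in the label at $v$. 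Ancestry on nested multisets is \fo-definable using the equivalence relations from Definition~\ref{def:type-constructors-on-structures}, so the interpretation is indeed \fo. Surjectivity is immediate since every graph of tree-depth $\leq k$ admits such a labelled elimination forest.

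For part (2), proceed by induction on $k$, using throughout that \fo and \mso coincide on graphs of bounded tree-depth~\cite[Theorem 1.1]{elberfeld2016first}. For $k=1$ the graph is edgeless and is encoded as a structure in $\multisets 1$. For $k > 1$, first decompose the input into its connected components (connectivity is \mso-definable, hence \fo-definable on this class); this reduces the task to encoding \emph{connected} graphs of tree-depth $\leq k$ in some type $X_k$, after which the whole graph is encoded in $\multisets X_k$. For a connected $G$ of tree-depth $\leq k$, identify a canonical root layer $R(G) \subseteq V(G)$, e.g.~the set of vertices whose removal yields a graph of tree-depth strictly less than $k$; this set is non-empty and \fo-definable. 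The components of $G - R(G)$ have strictly smaller tree-depth and are encoded by induction. Then $G$ itself is encoded as a structure containing $R(G)$ as an explicit sub-universe together with the induced edges among its vertices, together with a multiset of inductive encodings of the smaller components, each decorated with indicator relations recording its adjacencies to the elements of $R(G)$. Treating $R(G)$ as visible in the output structure ensures that graph isomorphisms correspond to isomorphisms of encodings, and $G$ is reconstructible from its encoding, so the interpretation is injective.

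The main obstacle is the canonical choice of root layer and the bookkeeping of its internal and external adjacencies: an optimal elimination tree has, in general, no canonical single root, so one must use the entire \fo-definable set $R(G)$ (not a single vertex) and include it explicitly in the output structure along with the necessary adjacency-indicator relations to the inductively encoded subcomponents. The fact that such an $R(G)$ exists and is \fo-definable relies crucially on the coincidence of \fo and \mso on graphs of bounded tree-depth, which is what lets connectivity, tree-depth, and the root layer be defined within the interpretation.
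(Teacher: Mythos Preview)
Your argument for part~(1) is fine and is essentially the paper's inductive construction unrolled into a single step: both encode an elimination forest whose nodes carry, for each vertex, the information ``which ancestors am I adjacent to''. The paper threads this through the induction by doubling the label alphabet (the $\bool$ factor) at every level, while you record the whole set $2^{\{1,\ldots,k-1\}}$ at once; these are equivalent.

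Your argument for part~(2), however, has a genuine gap. The set $R(G)$ you define --- vertices whose removal drops the tree-depth below $k$ --- can be all of $V(G)$. For instance, in $K_n$ (tree-depth $n$) every vertex lies in $R(G)$, so $G-R(G)$ is empty and the ``explicit sub-universe $R(G)$ with its induced edges'' is the whole graph $G$. There is no reduction. More generally, the induced graph on $R(G)$ is an arbitrary graph of tree-depth $\le k$, so encoding it is the very problem you are trying to solve; and since $|R(G)|$ is unbounded, the ``indicator relations recording adjacencies to the elements of $R(G)$'' cannot be stored in any fixed finite label alphabet, nor can arbitrary edge relations on $R(G)$ be accommodated in a multiset type.

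The paper avoids this by not trying to make a canonical choice at all. Instead of removing the whole layer $R(G)$, it removes a \emph{single} vertex $v$ at a time: for each $v$ whose removal leaves components of tree-depth $<k$, it records the pair (label of $v$, multiset of components of $G-v$ with one extra bit per vertex for adjacency to $v$). All such pairs are then collected into an outer multiset. This multiset is canonical because it ranges over \emph{all} valid $v$, the induction applies because $G-v$ has strictly smaller tree-depth, and only one adjacency bit per vertex is ever needed. Injectivity follows since any element of the outer multiset already determines $G$.
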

  
Before proving the lemma, we use it to complete the proof of Theorem~\ref{thm:decidable-equivalence-tree-depth}. Consider an instance of the problem in the theorem. Apply Lemma~\ref{lem:inj-surj}, yielding a surjective interpretation onto the input graphs, and an injective interpretation from the output graphs, as shown in the following diagram:
\[
 \begin{tikzcd}
 \Sigma 
 \ar[r,two heads]
 &
\treedepth_k
 \ar[r,bend left,"f_1"]
 \ar[r,bend right,"f_2"']
 &
\treedepth_\ell
 \ar[r,hookrightarrow]
 &
 \Gamma
 \end{tikzcd}
 \]
 where $\treedepth_k$ (and $\treedepth_\ell$) denote the class of graphs of tree-depth at most $k$ (and $\ell$).
The equivalence of $f_1$ and $f_2$ is the same as the equivalence of the two paths in the diagram that go from $\Sigma$ to $\Gamma$, and these paths represent interpretations between multiset types. For such interpretations, equivalence is decidable by Theorem~\ref{thm:decidable-equivalence}. It remains to prove the lemma. 

\begin{proof}[Proof of Lemma~\ref{lem:inj-surj}]
 In the proof, it will be convenient to work with vertex-labelled graphs. For a finite set $\Sigma$, let us write $\treedepth_k \Sigma$ for the class of graphs that have tree-depth at most $k$, with vertices labelled by elements of $\Sigma$. Such graphs are viewed as structures in the same way as unlabelled graphs, with an additional unary predicate $a(x)$ for each possible label $a \in \Sigma$. Ultimately we care about unlabelled graphs, i.e.~the case of $\Sigma=1$, but the induction proofs will use other choices of $\Sigma$.
 
 \paragraph*{Surjective.} 
 We begin describing a surjective interpretation from some multiset type to $\treedepth_k \Sigma$, for every finite set $\Sigma$. The interpretation is defined by induction on $k$. For $k=1$, there is nothing to do, since a graph of tree-depth $1$ is nothing but a multiset of vertices, and therefore $\treedepth_1 \Sigma = \multisets \Sigma$. 
 
 Consider now the induction step, where we  already have a surjective interpretation for tree-depth $k$, and we now want to define it for $k+1$. 
 By definition, a graph of tree-depth at most $k+1$ has some distinguished vertex $v$, such that after removing this vertex, every connected component has strictly smaller tree-depth. Therefore, in order to represent such a graph, it is enough to give: (a) the label of the removed vertex; (b) the multiset of graphs that represent the connected components after removing the vertex; and (c) for each vertex that is not removed, one bit of information that says if it was connected to the removed vertex by an edge. This is formalized in the following claim:

 \begin{claim}\label{claim:surjective-rep}
 For every finite set $\Sigma$ of labels, there is a surjective interpretation 
 \[
 \begin{tikzcd}
% \myunderbrace{\Sigma}{label of \\ removed vertex} \times \quad \multisets ( \treedepth_{k}(\myunderbrace{\Sigma \times \set{0,1}}{original label and one extra bit of \\ information that says if it was \\ connected to the removed vertex}))
\Sigma\times \quad \multisets ( \treedepth_{k}(\Sigma \times \bool))
 \qquad 
 \ar[r,two heads]
 & 
 \qquad 
 \treedepth_{k+1} \Sigma.
 \end{tikzcd}
 \]
 \end{claim}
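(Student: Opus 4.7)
The plan is to construct the interpretation by explicitly inverting the inductive definition of tree-depth. Given an input $(\ell, M) \in \Sigma \times \multisets(\treedepth_k(\Sigma \times \bool))$, with $M = \multiset{G_1, \ldots, G_m}$, I would produce an output graph $H \in \treedepth_{k+1}\Sigma$ whose vertex set is the disjoint union of a fresh vertex $v$ with all vertices of the $G_i$'s. The vertex $v$ receives label $\ell$; every $u \in V(G_i)$ with input label $(a, b) \in \Sigma \times \bool$ receives output label $a$; the edges inside each $G_i$ are copied verbatim; and an edge $\{v, u\}$ is added whenever the $\bool$-coordinate of $u$'s input label is $\mathsf{true}$.

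To realize this as an \fo interpretation, I would use two components. The first has dimension $0$ and universe formula ``true'', producing the single vertex $v$; the label predicates on $v$ are determined by a quantifier-free check on the $\Sigma$-side of the product (recall that, in the encoding of Definition~\ref{def:type-constructors-on-structures}, elements of that side are distinguished by the extra unary predicate $R$, and inside $\Sigma$ the finitely many possible values of $\ell$ can be tested by quantifier-free formulas since $\Sigma$ is a finite coproduct of copies of $1$). The second component has dimension $1$, with universe formula selecting the non-root elements of the top-level multiset (using the $\sim$ relation from Definition~\ref{def:type-constructors-on-structures}); these are precisely the vertices of the inner graphs $G_i$. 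Edges inside the $G_i$'s are read off the input edge relation, while edges incident to $v$ are determined by a finite disjunction over the unary labels $(a, \mathsf{true})$ with $a \in \Sigma$. That the output lies in $\treedepth_{k+1}\Sigma$ is then immediate: after deleting $v$ what remains is the disjoint union of the $G_i$'s with labels projected to $\Sigma$, each having tree-depth at most $k$ by hypothesis.

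For surjectivity I would argue as follows. Let $H \in \treedepth_{k+1}\Sigma$. By definition there exists a vertex $v \in V(H)$ whose removal leaves connected components $C_1, \ldots, C_m$ each of tree-depth at most $k$. Setting $\ell$ to be the label of $v$ in $H$, and relabelling each $u \in V(C_i)$ by the pair $(\mathrm{label}_H(u),\, [\{u,v\} \in E(H)]) \in \Sigma \times \bool$, produces structures $C_i' \in \treedepth_k(\Sigma \times \bool)$. The input $(\ell, \multiset{C_1', \ldots, C_m'})$ is mapped by the interpretation to a graph isomorphic to $H$, establishing surjectivity. The main point requiring care is translating the informal construction above into \fo formulas over the relational encoding of the input multiset type, but this is routine bookkeeping once the structural correspondence just described is in place.
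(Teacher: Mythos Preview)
Your proposal is correct and follows essentially the same approach as the paper: both define the interpretation as ``take the disjoint union of the graphs in the multiset, add a fresh vertex labelled by the $\Sigma$-coordinate, connect it to those vertices whose $\bool$-bit is true, and project labels back to $\Sigma$,'' and both argue surjectivity directly from the inductive definition of tree-depth. You spell out the component structure and the relational bookkeeping in more detail than the paper (which simply says the function ``is easily seen to be a first-order interpretation''), but there is no substantive difference.
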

 \begin{proof}
 The interpretation maps an element of the input type to a graph in the natural way: it takes the disjoint union of the graphs in the multiset from the second coordinate, adds a new vertex with the label from the first coordinate, connects this new vertex to the vertices in the disjoint union that have a $1$ value on the extra bit, and removes this extra bit. This function is surjective by definition of tree-depth, and it is easily seen to be a first-order interpretation. 
%  The interpretation is also linear, i.e.~of dimension 1.
 \end{proof}

Combining the above claim with an inductively defined interpretation from some multiset type to $\treedepth_{k}(\Sigma \times \set{0,1})$, we get the desired surjective interpretation from a multiset type to $\treedepth_{k+1} \Sigma$. 
%  This surjective interpretation will be linear, because it is obtained by nesting $k$ linear interpretations as obtained from Claim~\ref{claim:surjective-rep}.

 \paragraph*{Injective.} We now present an injective interpretation from graphs of tree-depth at most $k$ to a multiset type. The rough idea is to use the inverse of the surjective interpretation described above. However, the rough idea needs some work, since there can be several choices for the removed vertex, and there might be no way of choosing a unique removed vertex. The solution is to output all possible choices, aggregated using a multiset. This is expressed in the following claim. 
 
 \begin{claim}\label{claim:representation-inj}
 For every finite set $\Sigma$ of labels, there is an injective interpretation 
 \[
 \begin{tikzcd}
 \treedepth_{k+1} \Sigma
 \ar[r,hookrightarrow]
 \qquad 
 & 
 \qquad 
 \multisets(
 \Sigma \times \multisets ( \treedepth_{k}(\Sigma \times \bool)))
 \end{tikzcd}
 \]
 \end{claim}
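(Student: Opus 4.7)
The plan is to send a graph $G\in\treedepth_{k+1}\Sigma$ to the multiset whose entries, ranging over all vertices $v\in G$ whose removal leaves components of tree-depth at most $k$, are the pairs
\[
(\ell(v),\; C_v),
\]
where $\ell(v)\in\Sigma$ is the label of $v$, and $C_v$ is the multiset of connected components of $G\setminus\{v\}$, each such component viewed as an element of $\treedepth_k(\Sigma\times\bool)$ by augmenting every remaining vertex label with a bit recording whether that vertex was adjacent to $v$ in $G$. By the recursive definition of tree-depth, every $G\in\treedepth_{k+1}\Sigma$ has at least one such ``valid'' vertex, so the output is well-defined and isomorphism-invariant.

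First I would verify that this map is an \fo interpretation. The outer multiset entries are produced by a single component of dimension one, whose universe formula picks out the valid vertices; the inner multiset entries are produced by a component of dimension two, whose first coordinate is a valid vertex $v$ and whose second coordinate is a vertex $u\in G\setminus\{v\}$, grouped by the equivalence ``$u$ and $u'$ lie in the same connected component of $G\setminus\{v\}$''. The remaining relations---the outer and inner multiset equivalences, the edges of the inner $\treedepth_k$-graph, the $\Sigma\times\bool$ vertex labels, and the $\Sigma\times\multisets$ split inside each outer entry---all read off directly from the input graph. The only non-obvious \fo-definability concerns the predicate ``$G\setminus\{v\}$ has tree-depth at most $k$'' and the connectivity relation in $G\setminus\{v\}$; both are \mso-definable for fixed $k$, and hence \fo-definable on graphs of tree-depth at most $k+1$ thanks to the \fo/\mso collapse of Elberfeld, Grohe and Rubin cited in the excerpt.

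Second I would verify injectivity. Suppose two inputs $G,G'\in\treedepth_{k+1}\Sigma$ produce isomorphic output multisets. Since the output is nonempty, pick any entry $(\ell,M)$ that is common to both. Reconstruct a graph $\mathrm{Rec}(\ell,M)$ by taking the disjoint union of the graphs listed in $M$ (forgetting the extra adjacency bit on their vertex labels), adding a fresh vertex with label $\ell$, and connecting this new vertex exactly to those vertices whose bit was $1$. Because the entry $(\ell,M)$ arose in the output of $G$ from some valid vertex $v\in G$, we have $\mathrm{Rec}(\ell,M)\cong G$, and by the same argument applied to $G'$ we get $\mathrm{Rec}(\ell,M)\cong G'$; hence $G\cong G'$, establishing injectivity.

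The main obstacle is the \fo-definability of the predicates used above; once that is granted via the \fo/\mso collapse on bounded tree-depth, the construction goes through. Combining the claim with the inductive hypothesis that $\treedepth_k(\Sigma\times\bool)$ admits an injective interpretation into some multiset type, and observing that $\multisets$ and the product with a finite set preserve injective interpretations, yields the induction step for the injective half of Lemma~\ref{lem:inj-surj}.
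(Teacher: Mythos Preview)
Your proposal is correct and follows essentially the same approach as the paper: the construction, the injectivity argument via reconstruction (which is precisely the inverse of the surjective map from Claim~\ref{claim:surjective-rep}), and the dimension-two encoding all match. The only minor difference is that for the \fo-definability of connectivity and the tree-depth condition you invoke the \fo/\mso collapse, whereas the paper gives the more elementary observation that in graphs of tree-depth $k$ all paths have length at most $2^k$, so connectivity is directly \fo-expressible.
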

 \begin{proof} 
 This interpretation is defined as follows. Consider a vertex $v$ of an input graph $G$, and let $H$ be a connected component of the graph obtained by deleting $v$ in $G$. Similarly to Claim~\ref{claim:surjective-rep}, define $H^v$ to be the graph with labels from $\Sigma \times \bool$ that is obtained from $H$ by adding to the label of each vertex one bit of information that says if that vertex is connected to the deleted vertex $v$ by an edge. The interpretation in the claim maps a graph $G$ to 
 \begin{align*}
 \left\{\!\!\!\left\{\left(\text{label of $v$},
 \left\{\!\!\!\left\{
 H^v
 \middle|
 \begin{tabular}{l}
 $H$ is a connected \\ 
 component of $G$ with \\
 vertex $v$ removed
 \end{tabular}
 \right\}\!\!\!\right\}
 \right)\,\,
 \middle|
 \begin{tabular}{l}
 $v$ is a vertex of $G$ such that after \\
 removing it, every
 connected \\ 
 component  has tree-depth at most $k$
 \end{tabular}
 \right\}\!\!\!\right\}
 \end{align*}
%  \begin{align*}
% \multisetbuild{(\text{label of $v$},
% \multisetbuild{
% H^v
% }
% {
% $H$ is a connected \\ 
% component of $G$ with \\
% vertex $v$ removed
% }
% )
% }
% {
% $v$ is a vertex of $G$ such that after \\
% removing it, every
% connected \\ 
% component  has tree-depth at most $k$
% }
% \end{align*}

 This interpretation is injective, because it inverses the function from Claim~\ref{claim:surjective-rep} in the following sense: if we apply the function from this claim to a graph in $\treedepth_{k+1} \Sigma$, and then apply the interpretation from Claim~\ref{claim:surjective-rep} to each element of the resulting multiset, then we get a multiset that contains several copies of the original input graph. 

 It remains to prove that the function described above is indeed  a \fo interpretation. The main observation is that the criterion for $v$ in the outermost multiset is definable in \fo logic assuming that the input graph has bounded tree-depth. The reason is that, although \fo logic cannot define graph connectivity in general, it can do so for graphs of bounded tree-depth, because in graphs of tree-depth $k$, paths have length at most $2^k$. Therefore, the operation described in this claim can be defined using a \fo interpretation. This interpretation has dimension two, because elements of the output multiset of multisets are represented using pairs $(v,w)$ such that $v$ is a deleted vertex and $w$ is one of the remaining vertices. 
 \end{proof}
Similarly to the surjective case, the injective case is proved by induction on $k$ using the above claim in the induction step. This completes the proof of Lemma~\ref{lem:inj-surj}. 
\end{proof}

\begin{acks}
The idea to consider regular functions on multisets was suggested to us by Marcelo Fiore. We are very grateful to the anonymous reviewers for their numerous insightful comments and corrections.

Miko{\l}aj Boja\'nczyk was supported by the \grantsponsor{grant:ncn}{Polish National Science Centre (NCN)}{} grant ``Polynomial finite state computation’' (\grantnum{grant:ncn}{2022/46/A/ST6/00072}).
\end{acks}

\bibliographystyle{plainnat}
\bibliography{bib}

\newpage
\appendix

\section*{Appendix}

\section{Proof of Lemma~\ref{lem:extend-to-polynomial-trees}}
\label{app:symbolic-pattern-application}

 \begin{proof}%[of Lemma~\ref{lem:extend-to-polynomial-trees}]
Fix a symbolic tree $t$ with variables $X$, and a pattern $\Phi$.

In the proof, we use tree homomorphisms. A homomorphism is defined like an embedding, except that it need not be injective. In other words, a homomorphism maps edges in the input tree to edges in the output tree that have the same colour, with the root being mapped to the root. These are exactly the homomorphisms for trees regarded as relational structures under the edge representation introduced early in Section~\ref{sec:quantifier-free-tree-to-tree}.

When talking about a homomorphism into a symbolic tree, we only care about the colours in the symbolic tree and we ignore the polynomials. Here is an example.
\mypic{21}
The key notion in the proof of Lemma~\ref{lem:extend-to-polynomial-trees}, also used in the proof of Lemma~\ref{lem:patterns-from-polynomials} later, is the notion of a profile. 

\begin{definition}
 A \emph{profile} for a pattern $\Phi$ and a symbolic tree $t$ is a pair consisting of a node $x$ in the pattern and a homomorphism $ h : \Phi(x) \to t$. 
\end{definition}

Once we fix $\Phi$ and $t$, profiles can be organized into a tree, called the \emph{tree of profiles}, with the children of a profile $(x_1,h_1)$ being all profiles $(x_2,h_2)$ such that $x_2$ is a child of $x_1$ in $\Phi$ and $h_2$ extends $h_1$.
The tree of profiles has (at most) the same height as $\Phi$ but can be much wider than $\Phi$, since for each node $x$ there might be multiple possible homomorphisms $h$. However, the root of the tree of profiles is unique, since the root of $\Phi$ is labelled by the one-node tree which admits a unique homomorphism to every tree.

If we apply $\Phi$ to a positive instantiation $t(\bar a)$ of the symbolic tree $t$, then each node of the output tree will have a corresponding profile, as described below. 
 Consider a choice of parameters $\bar a \in \Nat^X$. 
 For each node in the input tree $t (\bar a)$ there is a corresponding node in the symbolic tree $t$ that was used to generate it, which is called its \emph{origin}. The origin function
 \begin{equation}
 \label{eq:origin-map}
 \xymatrix{
 t(\bar a) 
 \ar[r]^{\text{origin}}
 & t,
 }
 \end{equation}
 is a tree homomorphism (but not an embedding, since several nodes from $t(\bar a)$ might have the same origin).
 Note that if two nodes have the same origin then they have isomorphic subtrees in $t(\bar a)$; moreover, there is an automorphism of $t(\bar a)$ that maps one of the nodes to the other.
 
 Let us now look at nodes of the output tree $\Phi(t(\bar a))$. By the semantics of patterns, each such node is created by taking a node $x$ in the pattern and some embedding
 \begin{equation}
 \label{eq:node-embedding}
 \alpha : 
 \Phi(x) \hookrightarrow t (\bar a).
 \end{equation}
 Define the \emph{profile} of such a node in the output tree to be the node $x$ in the pattern together with the homomorphism that is obtained by composing the tree embedding~\eqref{eq:node-embedding} with the origin homomorphism~\eqref{eq:origin-map}. The crucial property of profiles is that they determine subtrees of the output tree.

 \begin{claim}
 If two nodes of the output tree $\Phi(t(\bar a))$ have the same profile, then they have isomorphic subtrees in the output tree.
 \end{claim}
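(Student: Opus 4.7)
The plan is to show that if two output nodes $(x,\alpha_1)$ and $(x,\alpha_2)$ share a profile, one can be mapped to the other by an automorphism of the input $t(\bar a)$, which then induces an isomorphism between the corresponding output subtrees. The strategy is to promote the node-level symmetry of $t(\bar a)$ recalled just above the claim --- that two nodes with the same origin are swappable by an automorphism --- to a symmetry that exchanges entire embeddings.

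The main lemma I would prove first is: whenever $\alpha_1, \alpha_2 : \Phi(x) \hookrightarrow t(\bar a)$ satisfy $\mathrm{origin}\circ\alpha_1 = \mathrm{origin}\circ\alpha_2$, there exists an automorphism $\sigma$ of $t(\bar a)$ with $\sigma \circ \alpha_1 = \alpha_2$. I would proceed by induction on the height of $\Phi(x)$. The root case is trivial because embeddings preserve the root. For the inductive step, look at the edges incident to the root of $\Phi(x)$: the images under $\alpha_1$ and $\alpha_2$ have the same multiset of origins, so they can be matched edge by edge along origins. Since two sibling edges of $t(\bar a)$ with the same origin bound isomorphic subtrees (both are unfoldings of the same subtree of $t$), we may permute them together with their subtrees; this produces an automorphism realising the matching at the top level. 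Inside each matched pair of subtrees, the induction hypothesis supplies a local automorphism, extended by the identity outside. These local automorphisms have pairwise disjoint supports, so their composition is a well-defined automorphism $\sigma$ doing the required job globally.

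With $\sigma$ in hand, define $\Psi(y,\beta) = (y, \sigma\circ\beta)$ on the subtree rooted at $(x,\alpha_1)$. Since $\sigma$ is an automorphism, $\sigma\circ\beta$ is again an embedding, and because $\beta$ extends $\alpha_1$ while $\sigma\circ\alpha_1 = \alpha_2$, the embedding $\sigma\circ\beta$ extends $\alpha_2$; hence $\Psi$ maps into the subtree rooted at $(x,\alpha_2)$. The map $\Psi$ preserves the parent-child relation on the output tree, since that relation depends only on the pattern component $y$ and on the extension order of embeddings, both invariant under post-composition with $\sigma$. The inverse is $(y,\gamma)\mapsto (y,\sigma^{-1}\circ\gamma)$, so $\Psi$ is an isomorphism of subtrees.

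The main obstacle is the inductive construction of $\sigma$, and specifically the care required to ensure that the local automorphisms chosen within each matched subtree have pairwise disjoint supports, so that their composition is a well-defined global automorphism simultaneously realising all the required swaps. This disjointness is automatic once the top-level matching partitions the remainder of $t(\bar a)$ into independent blocks and the induction hypothesis is applied block by block.
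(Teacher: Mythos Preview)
Your proposal is correct and follows essentially the same approach as the paper: find an automorphism $\sigma$ of $t(\bar a)$ with $\sigma\circ\alpha_1=\alpha_2$, and then transport the subtree at $(x,\alpha_1)$ to the subtree at $(x,\alpha_2)$ via $(y,\beta)\mapsto(y,\sigma\circ\beta)$. The paper simply asserts the existence of $\sigma$ (appealing to the earlier remark that nodes with equal origin are related by an automorphism) and then says this automorphism ``witnesses'' the isomorphism; your write-up supplies both missing details, namely the inductive construction of $\sigma$ and the explicit verification that post-composition with $\sigma$ is a subtree isomorphism.
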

 \begin{proof}
 Suppose that both profiles use the same node $x$, and the corresponding embeddings are 
\begin{align*}
\alpha_1,\alpha_2 : \Phi(x) \hookrightarrow t(\bar a).
\end{align*}
Since the profiles of the two embeddings are the same, $\alpha_1$ and $\alpha_2$ are equalised by the origin homomorphism~\eqref{eq:origin-map}. As a result, there is an automorphism $\pi$ of the input tree $t(\bar a)$ that makes the following diagram commute: 
 \[\xymatrix{
 \Phi(x) 
 \ar@{^(->}[r]^{\alpha_1}
 \ar@{^(->}[dr]_{\alpha_2}
 &
 t(\bar a)
 \ar[d]^{\pi}
 \\
 & 
 t(\bar a )
}\]
 This automorphism witnesses isomorphism of the trees in the statement of the claim.
 \end{proof}
 
Thanks to the above claim, for every profile and every choice of parameters $\bar a$ there is a unique tree that arises as a subtree of the output tree that is rooted in a node with the given profile. We will show that this unique tree is in fact obtained by instantiating, with parameters $\bar a$, a certain symbolic tree that does not depend on $\bar a$. This will complete the proof of Lemma~\ref{lem:extend-to-polynomial-trees}, since the entire output tree is obtained from the symbolic tree arising from the root profile.

 The symbolic tree corresponding to a profile is defined by induction on the tree of profiles, starting with the leaves. If a profile is a leaf in the tree of profiles, then the corresponding subtree of the output tree has only one node, and no polynomials are used (since there are no edges).

 Consider now a profile $(x,h)$ that is not a leaf in the tree of profiles. The symbolic tree for the profile $(x,h)$ is defined as follows: for every child $(y,g)$ we take the inductively defined symbolic tree, and connect it to the root by an edge that is labelled by a certain polynomial that counts embeddings:

 \begin{claim}\label{claim:edge-polynomial}
 Let $(y,g)$ be a profile that has a defined parent profile $(x,h)$. There is a polynomial $p \in \Int[X]$ such that for all sufficiently large parameters $\bar a \in \Nat^X$, $p(\bar a)$ is the number of ways in which an embedding $\alpha:\Phi(x)\hookrightarrow t(\bar a)$ with profile $(x,h)$ can be extended to an embedding $\beta:\Phi(y)\hookrightarrow t(\bar a)$ with profile $(y,g)$. 
 \end{claim}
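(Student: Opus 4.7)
My plan is to build $\beta$ on the new elements $E_{\text{new}} := \Phi(y)\setminus\Phi(x)$ one at a time and show that each step contributes a factor that is a polynomial in $\bar a$.

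First, I fix a topological ordering $e_1,\ldots,e_m$ of the edges in $E_{\text{new}}$ so that the parent of $e_i$ in $\Phi(y)$ is either in $\Phi(x)$ or is some $e_j$ with $j<i$ (a BFS from $\Phi(x)$ works). For each $i$, let $v_i:=g(e_i)$ be its origin in $t$, let $u_i$ be the parent of $v_i$ in $t$, let $c_i$ be the colour of $e_i$, and let $p_i\in\mathbb{Z}[X]$ be the polynomial labelling the edge from $u_i$ to $v_i$ with colour $c_i$ in $t$. Call $e'_i$ the parent of $e_i$ in $\Phi(y)$; by the ordering, $\beta(e'_i)$ has been determined either by $\alpha$ (if $e'_i\in\Phi(x)$) or by a previous step. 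By the definition of the unfolding $t(\bar a)$, for any copy of $u_i$ in $t(\bar a)$ the set of its children with colour $c_i$ and origin $v_i$ has exactly $p_i(\bar a)$ elements. Hence a valid choice of $\beta(e_i)$ must be picked from this set of size $p_i(\bar a)$, avoiding whatever has already been used by $\alpha$ or by $\beta(e_1),\ldots,\beta(e_{i-1})$.

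The crux of the proof is to show that the number $N_i$ of already-used edges lying in this set depends only on the combinatorial data $(\Phi(x),\Phi(y),g,h)$ and the chosen ordering, and not on the particular embedding $\alpha$ or the particular earlier choices of $\beta$. Splitting $N_i$ into a contribution from $\alpha(\Phi(x))$ and a contribution from $\{\beta(e_1),\ldots,\beta(e_{i-1})\}$, I will analyse each by cases on whether $e'_i$ lies in $\Phi(x)$. If $e'_i\in\Phi(x)$, then an edge $\alpha(e)$ contributes iff $\text{parent}(e)=e'_i$ in $\Phi(x)$, $\text{colour}(e)=c_i$, and $h(e)=v_i$, all of which are combinatorial conditions. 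If $e'_i\notin\Phi(x)$, then $\beta(e'_i)$ is outside $\alpha(\Phi(x))$ by injectivity of $\beta$, so no $\alpha(e)$ can be a child of it. For the second contribution, a previously chosen $\beta(e_j)$ lies in the relevant set iff $\beta(e'_j)=\beta(e'_i)$ with matching colour and origin; by injectivity of $\beta$, $\beta(e'_j)=\beta(e'_i)$ forces $e'_j=e'_i$, so this also reduces to a combinatorial count that does not depend on the specific embedding.

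Having established that $N_i$ is a fixed non-negative integer, the total number of extensions of $\alpha$ to a $\beta$ with profile $(y,g)$ equals
\[
 p(\bar a)\;=\;\prod_{i=1}^{m}\bigl(p_i(\bar a)-N_i\bigr),
\]
which is a polynomial in $\mathbb{Z}[X]$. For $\bar a$ large enough that $t(\bar a)$ is defined and each $p_i(\bar a)\geq N_i$ (which is available because the polynomials appearing in $t$ are ultimately positive and thus eventually exceed any fixed threshold in the regime where $t(\bar a)$ exists), each factor is a non-negative integer equal to the actual number of remaining choices at step $i$, and the product equals the true number of extensions. The main subtlety I expect is the verification that $N_i$ is indeed a combinatorial invariant, which as sketched above rests on careful use of the injectivity of $\alpha$ and $\beta$ together with the fact that any embedding with profile $(y,g)$ sends $e'_i$ to a point whose origin is $u_i$.
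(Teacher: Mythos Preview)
Your argument is correct and follows the same strategy as the paper: process the edges of $\Phi(y)\setminus\Phi(x)$ one at a time, observing that each contributes a factor of the form $q(\bar a)-N_i$ where $q$ is the polynomial on the target edge in $t$ and $N_i$ is a fixed nonnegative integer depending only on the combinatorics of $h$ and $g$. The one elided step (when $e'_i\notin\Phi(x)$, no $\alpha(e)$ is a child of $\beta(e'_i)$) is fine: the parent of $\alpha(e)$ in $t(\bar a)$ is $\alpha(\mathrm{parent}(e))\in\alpha(\Phi(x))$, whereas $\beta(e'_i)\notin\alpha(\Phi(x))$ by injectivity of $\beta$.

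The paper is terser; it writes the polynomial directly as
\[
p(\bar a)=\prod_{e\in\text{edges}(t)}\ \prod_{i=k_e}^{\ell_e-1}\bigl(q_e(\bar a)-i\bigr),
\]
with $k_e=|h^{-1}(e)|$ and $\ell_e=|g^{-1}(e)|$, grouping the factors by the target edge $e$ in $t$. Your extra care in tracking the parent $e'_i$ is in fact warranted: the paper's closed form is literally correct only when all new edges with a given image under $g$ share a common parent in $\Phi(y)$. For instance, if $\Phi(x)$ is empty, $\Phi(y)$ is the complete binary tree of height two, and $g$ collapses it onto a two-edge path in a free $t$ with edge variables $x,y$, then the two depth-two edges of $\Phi(y)$ sit below distinct copies in $t(\bar a)$ and never collide; the correct count is $x(x-1)y^2$ (your formula) rather than $x(x-1)y(y-1)$ (the paper's). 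This discrepancy does not affect the later arguments in the paper, which only inspect the monomial of maximal degree, and that monomial is $\prod_i p_i=\prod_e q_e^{\ell_e-k_e}$ in either version.
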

 \begin{proof}
 The following diagram illustrates the various tree embeddings and homomorphisms involved:
 \[\xymatrix@R=5pt{
 \Phi(x)\ar@{^(->}[dd]\ar@{^(->}[rd]^\alpha\ar@/^3ex/[rrrd]^h \\
 & t(\bar a)\ar[rr]^(.3){\text{origin}} && t \\
 \Phi(y)\ar@{^(->}[ru]_\beta\ar@/_3ex/[rrru]_g
 }\]
 For every edge $e$ of the symbolic tree $t$, define 
 \begin{align*}
 k_e \quad & \text{the number of edges mapped to $e$ by the homomorphism $h$}\\
 \ell_e \quad & \text{the number of edges mapped to $e$ by the homomorphism $g$}
 \end{align*}
 Note that $k_e\leq\ell_e$. Then put: 
 \begin{align*}
p(\bar a)  =  \prod_{e \in \text{edges of $t$}} \prod_{i=k_e}^{\ell_e-1}((\text{polynomial labelling the edge $e$ in $t$})-i).
 \end{align*}
 \end{proof}

 Using the polynomial from the above claim, it is easy to define the symbolic tree for a profile $(x,h)$; it is the linear combination 
 \begin{align*}
 \sum_{(y,g)} \text{(polynomial from Claim~\ref{claim:edge-polynomial})} \cdot \text{(symbolic tree for $(y,g)$)}.
 \end{align*}

 This completes the proof of Lemma~\ref{lem:extend-to-polynomial-trees}. 
 \end{proof}
 
\section{Proof of Lemma~\ref{lem:patterns-from-polynomials}}
\label{app:pattern-extraction}

In the proof, it will be more convenient to work with a single pattern, and to analyze the different profiles for that pattern. We can merge any two patterns $\Phi_1$ and $\Phi_2$ into a single pattern $\Phi$, by joining them with a common root that is labelled by a singleton tree. In the pattern $\Phi$, the root has two children $x_1$ and $x_2$ that correspond to the patterns $\Phi_1$ and $\Phi_2$. Since each of the nodes $x_i$ is labelled by a singleton tree, it admits a unique homomorphism $h_i$ to the symbolic tree $t$. Consider two profiles $(x_1,h_1)$ and $(x_2,h_2)$ corresponding to these nodes. 
Recall that in the proof of Lemma~\ref{lem:extend-to-polynomial-trees}, we showed that for every profile there is a unique symbolic tree that describes the corresponding subtree in the output tree; we call this tree \emph{the symbolic tree} of the profile. The symbolic trees of the profiles $(x_1,h_1)$ and $(x_2,h_2)$ are the same, respectively, as the symbolic trees of the patterns $\Phi_1$ and $\Phi_2$. At the end of this proof, we will show in Claim~\ref{claim:deep-isomorphism} that if the symbolic trees of the profiles $(x_1,h_1)$ and $(x_2,h_2)$ are equal, then the patterns $\Phi_1$ and $\Phi_2$ are isomorphic, thus completing the proof of the lemma.

\paragraph*{Recovering a tree from a polynomial.} The general idea is that the structure of a profile and its descendants can be determined by a syntactic analysis of the polynomials that appear in its symbolic tree. We begin some simple observations used in this syntactic analysis. 
For a tree homomorphism $h$, define its \emph{multiset} $m(h)$ to be the multiset of edges in the target tree which says how many times each edge of the target tree is produced by some edge in the source tree. Consider the symbolic tree of a profile $(x,h)$ in some pattern $\Phi$, for some symbolic input tree $t$ which is free. Like any symbolic tree, this symbolic tree is a linear combination 
\begin{align*}
\sum_{s \in S} p_s \cdot s,
\end{align*}
where $S$ is a finite set of smaller symbolic trees and each coefficient $p_s$ is a polynomial that uses the same variables as the symbolic input tree; these variables are the edges of the tree $t$. Take some $s \in S$. The polynomial $p_s$ assigns to each monomial some coefficient. Since monomials can be viewed as multisets of variables, and the variables in this case are the edges of the tree $t$, we can view the polynomial $p_s$ as a function from multisets of edges in $t$ to integers. Of course, only finitely many multisets are assigned nonzero coefficients. Furthermore, since the polynomial $p_s$ is ultimately positive, the monomials of maximal degree (which correspond to inclusion-wise maximal multisets of edges in $t$) have positive coefficients. The following claim shows that the coefficients for monomials of maximal degree reveal certain information about the children of a profile. 
\begin{claim}\label{claim:parse-coefficients}
 Let $(x,h)$ be a profile for some pattern $\Phi$ and some free symbolic tree $t$, with the symbolic tree of the profile equal to some
 \begin{align*}
 \sum_{s \in S} p_s \cdot s.
 \end{align*}
 Let $s \in S$ and let $Y$ be a multiset of edges in the tree $t$. In the polynomial $p_s$, the 
 coefficient of (the monomial corresponding to) the multiset of edges $Y$ is equal to the number of profiles $(y,g)$, which: 
 \begin{enumerate}
 \item are children of $(x,h)$ in the profile tree;
 \item have the symbolic tree equal to $s$;
 \item satisfy $Y = m(g) - m(h)$.
 \end{enumerate}
\end{claim}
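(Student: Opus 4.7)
The plan is to make the construction of $p_s$ from the proof of Lemma~\ref{lem:extend-to-polynomial-trees} fully explicit in the free case, then recognise the result in the appropriate basis of $\Int[x_e : e \in \text{edges of } t]$.

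First, since $t$ is free, each edge $e$ of $t$ is labelled by a distinct variable $x_e$. Writing $a_e := m(h)(e)$ and $Y := m(g) - m(h)$ for a child $(y,g)$ of the fixed parent $(x,h)$ in the profile tree, the edge polynomial of Claim~\ref{claim:edge-polynomial} attached to $(y,g)$ simplifies to
\[
\prod_{e}\,\prod_{i = a_e}^{a_e + Y(e) - 1}(x_e - i) \;=\; \prod_{e}\,(x_e - a_e)^{(Y(e))},
\]
where $z^{(k)} := z(z-1)\cdots(z-k+1)$ is the falling factorial. Notice that this contribution depends on $(y,g)$ only through the multiset $Y$, and that $a_e$ is a fixed non-negative integer determined by the parent profile.

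Second, by the recursive definition of the symbolic tree of $(x,h)$, $p_s$ is the sum of these contributions over all children $(y,g)$ of $(x,h)$ in the profile tree whose symbolic tree equals $s$. Grouping those children by the value of $Y$ yields
\[
p_s \;=\; \sum_{Y}\, c_{s,Y} \cdot \prod_{e}(x_e - a_e)^{(Y(e))},
\]
where $c_{s,Y}$ is, by construction, exactly the number of profiles $(y,g)$ satisfying conditions (1)--(3) of the claim.

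Finally, the shifted falling-factorial products $\bigl\{\prod_e (x_e - a_e)^{(n_e)} : n_e \in \Nat\bigr\}$ form a $\Int$-basis of $\Int[x_e : e]$: edge by edge, $(x_e - a_e)^{(n)}$ is a monic polynomial of degree $n$ in $x_e$, so the transition matrix to the standard monomial basis is triangular with unit diagonal and hence invertible over $\Int$; products of $\Int$-bases are $\Int$-bases. Therefore the expansion of $p_s$ above is unique, and $c_{s,Y}$ is precisely the coefficient of the basis element indexed by $Y$. This is the intended reading of ``coefficient of the monomial corresponding to $Y$'', consistent with Example~\ref{ex:symb-patterns1}, where the four summands $x^{(4)}, x^{(2)}, x^{(2)}, x^{(1)}$ of $p(x) = x^4 - 6x^3 + 13x^2 - 7x$ correspond exactly to the four children of the pattern's root. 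The essential point—and what distinguishes this argument from one that only recovers top-degree coefficients—is the change to the falling-factorial basis: in the standard monomial basis, expanding $(x_e - a_e)^{(Y(e))}$ contributes (via Stirling numbers) to many lower-degree monomials, so contributions from distinct children contaminate each other; the falling-factorial basis decouples them, giving a clean count for every $Y$, not merely maximal ones.
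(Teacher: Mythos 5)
Your proof is correct, and at its core it unfolds exactly the construction the paper points to: Claim~\ref{claim:edge-polynomial} specialised to a free symbolic tree, where the edge polynomial of a child $(y,g)$ collapses to $\prod_e(x_e-a_e)^{(Y(e))}$ with $a_e=m(h)(e)$ and $Y=m(g)-m(h)$, and $p_s$ is the sum of these over the children whose symbolic tree is $s$. Where you genuinely depart from the paper is in what ``coefficient of the monomial corresponding to $Y$'' means. The paper reads coefficients in the ordinary monomial basis; its surrounding prose (``the coefficients for monomials of maximal degree reveal\dots'') and its only application (the step ``by maximality of $y_1$'' in the proof of Claim~\ref{claim:deep-isomorphism}) show that the claim is only ever invoked for $Y$ of maximal size among the children's difference multisets, where the contamination you describe cannot occur, since a child with difference $Y'$ contributes to the ordinary monomial of $Y$ only when $Y\subseteq Y'$. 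Read literally for arbitrary $Y$ in the ordinary basis the statement would in fact fail---in Example~\ref{ex:symb-patterns1} the coefficient of $x$ is $-7$ while there is exactly one child with a single edge---so your switch to the shifted falling-factorial basis is not merely stylistic: it turns the claim into a statement true for every $Y$, with a clean uniqueness argument via a triangular change of basis over $\Int$. One point worth adding: your basis depends on the parent profile through the shifts $a_e=m(h)(e)$, so when the claim is used to compare the profiles $(x_1,h_1)$ and $(x_2,h_2)$ one needs them to be expanded in the same basis; this is harmless there because the shallow isomorphism gives $m(h_1)=m(h_2)$ by Claim~\ref{claim:shallow-isomorphism}. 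In short, it is the same computation as the paper's one-line proof, but your reading buys validity for all $Y$ at the price of a parent-dependent notion of ``monomial'', whereas the paper keeps ordinary monomials and tacitly relies on the maximality hypothesis under which the claim is applied.
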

\begin{proof}
Follows from the construction used in the proof of Claim~\ref{claim:edge-polynomial}, for the special case of $t$ being free.
\end{proof}

\paragraph*{Isomorphisms between profiles.} Using Claim~\ref{claim:parse-coefficients} we will recover a profile from its symbolic output tree. This will only be possible up to an isomorphism of profiles, and we begin by making precise the notion of profile isomorphism.

 For two profiles $(x_1,h_1)$ and $(x_2,h_2)$, a \emph{shallow isomorphism} is defined to be an isomorphism between the two trees $\Phi(x_1)$ and $\Phi(x_2)$ that is consistent with the homomorphisms in the sense that the following diagram commutes:
\[\xymatrix{
\Phi(x_1)
\ar[d]_{\text{shallow isomorphism}}
\ar[r]^{h_1} 
& t\\
\Phi(x_2)
\ar[ur]_{h_2}
}\]
If the profile is injective, which means that its homomorphism is injective, then its shallow isomorphism class is determined by the image of its homomorphism, which is an induced subtree of $t$. This image is stored in the mulitiset $m(h)$, which is non-repeating in the case when $h$ is injective. This observation is summarised in the following claim.
\begin{claim}\label{claim:shallow-isomorphism}
 For two injective profiles $(x_1,h_1)$ and $(x_2,h_2)$, there is a shallow homomorphism if and only if $m(h_1)=m(h_2)$. 
\end{claim}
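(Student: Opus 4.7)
The plan is to exploit injectivity to reduce the claim to a statement about images. Since $h_1$ and $h_2$ are injective homomorphisms into the symbolic tree $t$ (with $t$ viewed under the edge representation where the root-containing edges loop on themselves), each $h_i$ is in fact an embedding of $\Phi(x_i)$ as an induced subtree of $t$, and the multiset $m(h_i)$ is a non-repeating multiset that simply enumerates the edges of that image. Thus $m(h_1) = m(h_2)$ will translate into the geometric statement that $h_1$ and $h_2$ have the same image subtree $U \subseteq t$.

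For the forward direction, assume that a shallow isomorphism $\sigma : \Phi(x_1) \to \Phi(x_2)$ exists, so that $h_2 \circ \sigma = h_1$. Since $\sigma$ is a tree isomorphism it restricts to a bijection on edges, and relabelling the edges of the source of a homomorphism via a bijection does not change the multiset of its image values. Hence $m(h_1) = m(h_2 \circ \sigma) = m(h_2)$. This direction is essentially immediate.

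For the backward direction, assume $m(h_1) = m(h_2)$. Injectivity of each $h_i$ makes $m(h_i)$ a non-repeating multiset, equal to the edge set of the image of $h_i$, so $h_1$ and $h_2$ share a common image $U$. Because the edge representation encodes the parent function, $U$ is automatically closed under taking parents and is an induced subtree of $t$. Each $h_i$ then corestricts to a tree isomorphism $\bar h_i : \Phi(x_i) \to U$, and I set
\[
\sigma \;=\; \bar h_2^{-1} \circ \bar h_1 : \Phi(x_1) \to \Phi(x_2).
\]
This $\sigma$ is a tree isomorphism by construction, and it satisfies $h_2 \circ \sigma = h_1$, so it is a shallow isomorphism as required.

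I do not expect a real obstacle: the claim is essentially a bookkeeping observation that under injectivity, the multiset $m(h)$ records exactly the image of $h$ as an induced subtree, and two embeddings with the same image differ by a unique isomorphism of their sources. The only care needed is to verify that the image is indeed closed under the parent function of the edge representation, which is automatic because injective homomorphisms preserve and reflect the functional structure of $t$.
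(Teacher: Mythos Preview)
Your proof is correct and follows the same approach as the paper, which treats the claim as an immediate observation: for an injective profile the multiset $m(h)$ is non-repeating and records exactly the image of $h$ as an induced subtree of $t$, so equality of multisets means equality of images, and two embeddings with the same image differ by an isomorphism of their sources. Your write-up simply spells out these details more explicitly than the paper does.
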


What we actually care about is {deep isomorphism} between profiles, which corresponds to isomorphism of entire patterns. A \emph{deep isomorphism} between two profiles $(x_1,h_1)$ and $(x_2,h_2)$ is defined to be a bijection $f$ between the subtrees of these profiles in the tree of profiles, together with a family of shallow isomorphisms 
\[\xymatrix{
\{(y_1,g_1) 
\ar[r]^-{f_{(y_1,g_1)}}
&
f(y_1,g_1)\}_{(y_1,g_1)},
}\]
indexed by profiles in the subtree of $(x_1,h_1)$, such that the shallow isomorphisms in the family are consistent with the extension structure of profiles. Equipped with this notion of deep isomorphism, we are ready to state the main claim in the proof of Lemma~\ref{lem:patterns-from-polynomials}.

\begin{claim}\label{claim:deep-isomorphism}
 Let $\Phi$ be a pattern, and let $t$ be a free symbolic tree that is large for all trees that appear as node labels in $\Phi$. 
 Let $(x_1,h_1)$ and $(x_2,h_2)$ be two injective profiles that admit a shallow isomorphism. Then the following conditions are equivalent: 
 \begin{enumerate}
 \item there is a deep isomorphism between $(x_1,h_1)$ and $(x_2,h_2)$;
 \item the symbolic trees for $(x_1,h_1)$ and $(x_2,h_2)$ are equal.
 \end{enumerate}
\end{claim}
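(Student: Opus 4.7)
The plan is to prove the two implications separately, with direction (2)$\Rightarrow$(1) being the main work. Direction (1)$\Rightarrow$(2) is by induction on the height of the profile subtree: from the inductive construction of the symbolic tree given in the proof of Lemma~\ref{lem:extend-to-polynomial-trees}, the symbolic tree of a profile is determined by its children in the profile tree together with the edge polynomials from Claim~\ref{claim:edge-polynomial}. A deep isomorphism supplies a bijection between the child profiles of $(x_1,h_1)$ and $(x_2,h_2)$ with compatible shallow isomorphisms, so matched children satisfy $m(g_1)=m(g_2)$ (Claim~\ref{claim:shallow-isomorphism}); together with $m(h_1)=m(h_2)$ coming from the parent shallow isomorphism, the matched edge polynomials coincide. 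By the induction hypothesis the child symbolic trees agree, hence so do the parents.

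For (2)$\Rightarrow$(1) I would induct on the height of $(x_1,h_1)$'s subtree in the profile tree (equal to that of $(x_2,h_2)$). The base case of no children is handled directly by the given shallow isomorphism. For the inductive step I would use Claim~\ref{claim:parse-coefficients}: the polynomial coefficient of each child symbolic tree $s$ in the symbolic tree of $(x_i,h_i)$ counts, monomial by monomial, the children $(y,g)$ with symbolic tree $s$ and delta $Y=m(g)-m(h_i)$. Equality of the two symbolic trees yields equal counts for every pair $(s,Y)$, hence a bijection on the child profiles of $(x_1,h_1)$ and of $(x_2,h_2)$ preserving both $s$ and $Y$. For a matched pair $((y_1,g_1),(y_2,g_2))$, equal $Y$ combined with $m(h_1)=m(h_2)$ (from $\sigma$ via Claim~\ref{claim:shallow-isomorphism}) forces $m(g_1)=m(g_2)$; when the $g_i$ are embeddings, the map $\tau=g_2^{-1}\circ g_1$ is a shallow isomorphism $\Phi(y_1)\to\Phi(y_2)$ whose restriction to $\Phi(x_1)$ equals $h_2^{-1}\circ h_1=\sigma$, since $g_i$ extends $h_i$. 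Applying the induction hypothesis to each matched pair and assembling the resulting deep isomorphisms then yields a deep isomorphism of the parents.

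The main obstacle is that children of injective profiles need not themselves be injective: an extension of an embedding $h$ can fold several new edges of $\Phi(y)$ onto a single edge of $t$, so the inductive hypothesis (stated only for injective profiles) does not immediately apply to such children. This is where largeness of $t$ should enter. The idea is to factor a non-injective child $g:\Phi(y)\to t$ through the quotient of $\Phi(y)$ by the equivalence relation identifying edges with equal $g$-image, obtaining an injective homomorphism of the quotient tree into $t$. Largeness of $t$ guarantees enough room for all such quotients to be realised as injective profiles, while the monomial $Y=m(g)-m(h)$ records the folding multiplicities precisely. The matching furnished by Claim~\ref{claim:parse-coefficients} then encodes both the injective quotient and the folding data, and the injective induction can be bootstrapped to the non-injective case. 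Alternatively, one may strengthen the inductive statement of the claim to cover arbitrary profiles from the outset and repeat the argument uniformly, which appears cleaner in retrospect.
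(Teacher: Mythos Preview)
Your direction (1)$\Rightarrow$(2) is fine. The direction (2)$\Rightarrow$(1), however, has two real gaps.

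First, Claim~\ref{claim:parse-coefficients} does \emph{not} give you equal counts for every pair $(s,Y)$. For a free $t$, the edge polynomial of a single child $(y,g)$ is $\prod_e\prod_{i=k_e}^{\ell_e-1}(x_e-i)$; its expansion contributes to many lower-degree monomials, not just the leading one $\prod_e x_e^{\ell_e-k_e}$. So the coefficient of a non-maximal monomial $Y$ in $p_s$ is contaminated by children with strictly larger $m(g)-m(h)$, and is \emph{not} the number of children with that exact $Y$. The paper's preamble to the claim makes this explicit: only the maximal-degree coefficients are counting. Your ``bijection preserving $(s,Y)$'' therefore does not follow from equality of the symbolic trees.

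Second, you correctly identify the injectivity obstacle but neither proposed fix works as stated. Quotienting a non-injective $g$ changes the tree $\Phi(y)$, but deep isomorphism is about the actual trees in $\Phi$; and strengthening the claim to arbitrary profiles fails because Claim~\ref{claim:shallow-isomorphism} is false for non-injective homomorphisms (equal multisets $m(g_1)=m(g_2)$ do not force $\Phi(y_1)\cong\Phi(y_2)$). The paper resolves both issues with a different induction: lexicographic on (size of the pattern $\Phi$, number of descendants of the profile). In the step, one picks a child $y_1$ of $x_1$ with $|\Phi(y_1)|$ \emph{maximal}; largeness of $t$ then guarantees an injective extension $g_1$ of $h_1$, and maximality ensures the relevant monomial $Y=m(g_1)-m(h_1)$ has top degree in $p_s$, so Claim~\ref{claim:parse-coefficients} legitimately produces a matching child $(y_2,g_2)$ with the same $Y$, hence the same multiplicity-free $m(g_2)=m(g_1)$, hence injective. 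Induction on the descendant count handles this one pair. For the remaining children, one subtracts from the symbolic tree the total contribution $\sum_g(\text{symbolic tree of }(y_1,g))$ of all profiles at node $y_1$; what is left is the symbolic tree of $(x_1,h_1)$ in the strictly smaller pattern $\Phi\setminus y_1$, and the primary induction on pattern size finishes. This one-child-at-a-time peeling, with pattern shrinking, is the missing idea.
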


Before proving the claim, we use it to complete the proof of Lemma~\ref{lem:patterns-from-polynomials}. Consider two patterns $\Phi_1$ and $\Phi_2$ as in the assumption of the lemma, and consider the pattern $\Phi$ constructed by taking the union of two patterns $\Phi_1$ and $\Phi_2$ joined by a common root, as described before. The profiles $(x_1,h_1)$ and $(x_2,h_2)$ that correspond to the roots of $\Phi_1$ and $\Phi_2$ will admit a deep isomorphism if and only if the patterns are isomorphic. Also, the profiles are trivially injective, since the labels of the nodes $x_1$ and $x_2$ are singleton trees; for the same reasons the profiles admit a shallow isomorphism. The symbolic trees of $\Phi_1$ and $\Phi_2$ are the same as the symbolic trees of $(x_1,h_1)$ and $(x_2,h_2)$. If these symbolic trees are equal, then we can use Claim~\ref{claim:deep-isomorphism} to conclude that the patterns are isomorphic. 

It remains to prove Claim~\ref{claim:deep-isomorphism}.

\begin{proof}
 The implication 1 $\Rightarrow$ 2 is easy, so we only prove the converse implication (which is the one we need anyway). The proof is by induction on two parameters, ordered lexicographically:
 \begin{enumerate}
 \item the size of the pattern $\Phi$;
 \item the number of profiles that are descendants of $(x_1,h_1)$. 
 \end{enumerate}

 The induction basis is when both parameters are one, in which case the claim is trivial, since there is only one profile in the pattern.

 Consider now the induction step. 
 Suppose that two profiles $(x_1,h_1)$ and $(x_2,h_2)$ have the same symbolic tree, which we view as a linear combination of smaller symbolic trees:
\begin{align*}
\sum_{s \in S} p_s \cdot s.
\end{align*}
The smaller symbolic trees $s \in S$ correspond to children of the profiles. If the linear combination is zero, then there are no children for the profiles, and shallow isomorphism is the same as deep isomorphism. 

Suppose now that the linear combination is nonzero, which means that the profiles are not leaves in the tree of profiles. Choose some child $y_1$ of $x_1$ in the pattern $\Phi$ which maximises the number of nodes in the tree $\Phi(y_1)$. Because the symbolic input tree is large, we know that the embedding $h_1$ can be extended to an embedding $g_1$ that yields an injective profile $(y_1,g_1)$ which is a child of $(x_1,h_1)$ in the profile tree. Let $s$ be the symbolic tree that corresponds to the profile $(y_1,g_1)$, and consider the polynomial $p_s$ corresponding to this symbolic tree. 
By maximality of $y_1$, we can apply Claim~\ref{claim:parse-coefficients} to conclude that, in the polynomial $p_s$, the 
 coefficient next to the monomial (corresponding to the multiset $Y = m(g_1) - m(h_1)$)
is equal to the number of children of $(x_1,h_1)$ in the profile tree that satisfy the three conditions in the claim. The same coefficients and polynomials are used for $(x_2,h_2)$, and therefore the same number is appropriate for $(x_2,h_2)$, in particular there must be some child profile $(y_2,g_2)$ which has the same symbolic tree $s$ as $(y_1,g_1)$ and such that the following multiset equality holds:
\begin{align}\label{eq:mulitset-differences}
m(g_1) - m(h_1) = m(g_2) - m(h_2).
\end{align}
By Claim~\ref{claim:shallow-isomorphism} and the assumption that $(x_1,h_1)$ admits a shallow isomorphism with $(x_2,h_2)$, we know that $m(h_1)=m(h_2)$. Again by Claim~\ref{claim:shallow-isomorphism} and~\eqref{eq:mulitset-differences}, we conclude that $(y_1,g_1)$ admits a shallow isomorphism with $(y_2,g_2)$. Since the corresponding symbolic trees are also equal, we can apply the inductive assumption to conclude that the children $(y_1,g_1)$ and $(y_2,g_2)$ admit a deep isomorphism.

So far we have succeeded in creating a deep isomorphism for one specific pair of children. To extend this deep isomorphism to the remaining children, we proceed by induction on the size of the pattern as follows. Consider the node $y_1$ in the pattern $\Phi$. Consider the linear combination 
\begin{align}\label{eq:twins-of-the-maximal-one}
\sum_g\text{symbolic tree of $(y_1,g)$}
\end{align}
where the sum ranges over all profiles $g$ in $\Phi$ that use the node $y_1$. This linear combination is uniquely determined by the deep isomorphism class of the profile $(y_1,g)$, and therefore it is determined by the information that we have already. 
If we subtract the linear combination~\eqref{eq:twins-of-the-maximal-one} from the symbolic tree of the profile $(x_1,h_1)$, then we get the symbolic tree of the profile $(x_1,h_1)$, but in a smaller pattern, call it $\Phi - y_1$, which is obtained from $\Phi$ be removing node $y_1$ and its entire subtree. From the inductive hypothesis applied to the smaller pattern we conclude that the patterns $(x_1,h_1)$ and $(x_2,h_2)$ are deeply isomorphic in the smaller pattern $\Phi - y_1$. Combining this with the deep isomorphism between $(y_1,g_1)$ and $(y_2,g_2)$, we get a deep isomorphism between $(x_1,h_1)$ and $(x_2,h_2)$ in the original pattern $\Phi$, thus completing the proof of Claim~\ref{claim:deep-isomorphism}, and of Lemma~\ref{lem:patterns-from-polynomials}.
\end{proof}

\section{Proof of Lemma~\ref{lem:quantifier-elimination}}
\label{app:quantifier-elimination}

 The proof is by induction on the structure of $\Sigma$. 
 In the base case of $\Sigma =1$ there is nothing to do, and the case of coproducts $\Sigma_1 + \Sigma_2$ is also straightforward. We deal with products and multisets below.

\paragraph*{Products.}
Consider a product type $\Sigma_1 \times \Sigma_2$. Apply the induction assumption to $\Sigma_1$ and $\Sigma_2$, and the same quantifier rank and number of free variables, yielding surjective non-copying interpretations 
 \begin{align*}
f_1:\Gamma_1\to\Sigma_1 \qquad\text{and}\qquad f_2:\Gamma_2\to\Sigma_2.
 \end{align*}
 Combine these functions into: 
 \begin{align*}
 f_1\times f_2 : \Gamma_1 \times \Gamma_2 \to \Sigma_1 \times \Sigma_2.
 \end{align*}
 This function is surjective and non-copying. It remains to prove that the $r$-theories of the output are determined by the quantifier-free theories of the input. This follows from the induction assumption, by the compositionality property of $r$-theories: the $r$-theory of a pair structure is uniquely determined by the $r$-theories of the two structures in the pair. This is easily proved by analysing $r$-round Ehrenfeucht-Fra\"iss\'e games on pair structures, using the fact that in a pair structure $(A,B)$ no relations connect elements of $A$ with elements of $B$.

% In terms of diagrams, the compositionality property can be stated as follows: for every classes of structures $\Sigma$ and $\Gamma$, every quantifier rank $r \in \set{0,1,\ldots}$, and every numbers of variables $k, \ell \in \set{0,1,\ldots}$ there is a function $f$ which makes the following diagram commute
% \[
% \begin{tikzcd}[sep=1.7cm]
% (\pointed k \Sigma) \times (\pointed \ell \Gamma) 
% \ar[rrr,"{\text{pairing on structures with distinguished elements}}"] 
% \ar[d,"{(\fotype r, \fotype r)}"]
% &&&
% \pointed {k+\ell} (\Sigma \times \Gamma)
% \ar[d,"\fotype r"']
% \\
% (\pointed k \Sigma)_{/\equiv_r} \times (\pointed \ell \Gamma)_{/\equiv_r} 
% \ar[rrr,"f"] 
% % \ar[d,"{(\fotype r, \fotype r)}"]
% &&&
% \pointed {k+\ell} (\Sigma \times \Gamma) _{/\equiv_r}
% \end{tikzcd}
% \]
 
\paragraph*{Multisets.} For a type $\multisets \Sigma$, we also use a compositionality argument. Consider the set of all possible $r$-theories of structures in $\Sigma$ without distinguished elements:
 \begin{align*}
 I = \setbuild{\text{$r$-theory of $A$}}{$A \in \Sigma$}.
 \end{align*}
By properties of \fo logic over finite relational vocabularies, $I$ is a finite set for every $\Sigma$ and $r$. Now consider the following claim, again proved by analysis of $r$-round Ehrenfeucht-Fra\"iss\'e games:
 \begin{claim}\label{claim:multiset-types}
For every structure $A \in \multisets \Sigma$ with tuple of distinguished elements $\bar a \in A^n$, its $r$-theory is determined uniquely by the following information: 
 \begin{enumerate}
 \item The function 
 $\alpha : 
 I \to \set{0,1,\ldots,r}
 $
 which counts for each $i \in I$ how many structures $B \in \Sigma$ in $A$ have this $r$-theory, with the number cut off at threshold $r$. This function does not depend on $\bar a$.
 \item For every subset $X \subseteq \set{1,\ldots,k}$ of coordinates in the tuple $\bar a$, the following information: (a) whether the tuple $\bar a |_X$ is entirely contained in a single structure $B$ in $A$, and (b) if yes then what is the $r$-theory of $B$ with distinguished elements $\bar a|_X$.
 \end{enumerate}
 \end{claim}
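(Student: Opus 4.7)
The plan is to prove the claim by an Ehrenfeucht-Fra\"iss\'e (EF) game argument, establishing a Feferman-Vaught style compositionality result for the multiset construction. The claim will follow by showing that two structures $(A, \bar a), (A', \bar a') \in \multisets\Sigma$ agree on items (1) and (2) if and only if Duplicator wins the $r$-round EF game between them, which is equivalent to their having the same $r$-theory.

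The easy direction — that the $r$-theory determines the listed data — is by explicit definability. The root is picked out by the quantifier-free formula $\lnot(x\sim x)$; being in a common constituent is expressed by $x \sim y$; and any $\Sigma$-formula can be relativised to a single $\sim$-class by quantifier-guarding. The threshold-$r$ counts in item (1) are expressible as Boolean combinations of existentially quantified rank-$r$ formulas picking out distinct constituent representatives via the $\sim$-relation, at the cost of $r$ extra quantifiers over representatives (which, together with the rank-$r$ formula fixing the theory of each representative, stays within what the lemma allows).

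For the hard direction I would give Duplicator's strategy explicitly. Each side decomposes as its root plus constituents $B_1,\ldots,B_m \in \Sigma$ glued by $\sim$, and the tuple $\bar a$ induces a partition of coordinates by co-constituent membership. By item (2) the touched constituents on the two sides come in a bijection preserving the $r$-theory of each constituent together with its distinguished sub-tuple. Duplicator initialises by matching roots with roots and touched constituents across the two sides via this bijection. In each round with $r'$ rounds remaining, Duplicator preserves the invariant that matched constituents share their $r'$-theory on distinguished elements played so far: if Spoiler plays in an already matched constituent, Duplicator responds via a winning strategy in the local $r'$-round $\Sigma$-game (which exists because matching rank-$r$ theories imply matching rank-$r'$ theories); if Spoiler plays in an untouched constituent, Duplicator picks an untouched constituent with the same $r'$-theory on the other side, which exists by item (1).

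The main obstacle is the threshold bookkeeping: we must verify that the threshold $r$ on the initial counts suffices throughout. After $r - r'$ rounds, at most $r - r'$ constituents have been freshly touched on each side, so the untouched counts on the two sides still agree up to threshold $r - (r - r') = r'$, which is precisely what Duplicator needs in order to mirror Spoiler's move into a fresh constituent during the remaining $r'$ rounds. A minor point is that rank-$r$ theories determine their rank-$r'$ restrictions for all $r' \le r$, so the fixed set $I$ of rank-$r$ theories on $\Sigma$ remains a sufficient classifier as the game rank decreases. Assembling these two observations with the per-constituent game inductively yields the EF win for Duplicator, and hence the claim.
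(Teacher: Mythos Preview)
Your approach via Ehrenfeucht--Fra\"iss\'e games is exactly what the paper invokes in its one-line proof, and your write-up is considerably more detailed than what the paper provides.

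There is, however, a real gap in the threshold bookkeeping. You assert that after $r-r'$ rounds the untouched-constituent counts agree up to threshold $r'$, implicitly assuming that \emph{initially} (before any round is played) the untouched counts already agree up to threshold $r$. But item (1) only gives \emph{total} counts up to threshold $r$; once you set aside the constituents initially touched by $\bar a$ (whose theories are pinned down by item (2)), the untouched counts for a theory $i$ agree only up to threshold $r - t_i$, where $t_i$ is the number of initially touched constituents of that theory. Concretely, for $r=n=1$ take $A = \multiset{B_1, B_2}$ and $A' = \multiset{B_1'}$ with all three constituents sharing the same rank-$1$ theory, and $a_1 \in B_1$, $a_1' \in B_1'$ having the same pointed rank-$1$ theory. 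Then items (1) and (2) agree, yet the rank-$1$ formula $\exists x.\,(x\sim x \wedge \neg(x\sim a_1))$ holds in $(A,a_1)$ and fails in $(A',a_1')$. So the claim with threshold exactly $r$ is in fact false once $n \geq 1$; the paper's informal justification (a rank-$r$ formula ``will not be able to make any use of more than $r$'' constituents of a given theory) overlooks the free variables. Since the paper explicitly remarks that any finite threshold suffices for the downstream use, both the claim and your Duplicator strategy are repaired by taking the threshold in item (1) to be $r+n$ rather than $r$; with that change your bookkeeping goes through as written.

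A minor aside: the ``easy direction'' you sketch is not part of the claim, and your rank estimate there is off anyway (expressing the threshold-$r$ count of constituents with a fixed rank-$r$ theory needs $r$ outer quantifiers on top of a rank-$r$ subformula, so rank $2r$ rather than $r$). This has no bearing on the direction the claim actually states.
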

 For the following it is not important that the threshold in (1) is $r$; any finite number would suffice. However, $r$ works for the Claim to hold. Indeed, if a multiset contains many elements with the same $r$-theory, we may just as well consider a sub-multiset with only $r$ of these elements, because a formula with at most $r$ nested quantifiers will not be able to make any use of more than $r$ of them.
 
 Now apply the induction assumption from the lemma to the multiset type $\Sigma$, yielding a function $f : \Gamma \to \Sigma$. We assume that the domain $\Gamma$ is partitioned along possible $r$-theories, i.e. 
 \begin{align*}
 \textstyle\Gamma = \coprod_{i \in I} \Gamma_i
 \end{align*}
 and the function $f$ restricted to $\Gamma_i$ produces only structures with $r$-theory $i$. This stronger assumption can be pulled through the induction in the lemma; we gloss over it to keep notation light. 
 
Define the \emph{profile} of a multiset $A \in \multisets \Sigma$ to be the function $\alpha$ from Claim~\ref{claim:multiset-types}. For a profile $\alpha : I \to \set{0,1,\ldots,r}$, define $\Gamma_\alpha$ to be the type 
 \begin{align}\label{eq:hugeproduct}
\textstyle \prod_{i \in I} (\Gamma_i)^{\alpha(i)} \quad \times \quad \prod_{i \in I \text{ s.t. } \alpha(i)= r} \multisets \Gamma_i,
 \end{align}
 and define 
 $
 g_\alpha : \Gamma_\alpha \to \multisets \Sigma
 $
 to be the function that applies $f$ to all elements of $\Gamma$ that appear in its input, and then puts them together into a common multiset. This is a non-copying interpretation. By construction, all outputs of $g_\alpha$ have profile $\alpha$; furthermore, thanks to the presence of the second component in~\eqref{eq:hugeproduct}, $g_\alpha$ is surjective in the sense that every multiset in $\multisets \Sigma$ of profile $\alpha$ is in the image of this function. By (both parts of) Claim~\ref{claim:multiset-types} and the induction assumption, we know that for every $A \in \Gamma_\alpha$ and every tuple of $k$ distinguished elements $\bar a \in A^k$, the $r$-theory of $g_\alpha(A, \bar a)$ is determined uniquely by the quantifier-free theory of $A, \bar a$. 

 To conclude the lemma, consider the interpretation
 \begin{align*}
\textstyle g : (\coprod_\alpha \Gamma_\alpha) \to M\Sigma,
 \end{align*}
 obtained by combining the functions $g_\alpha$ for all the finitely many possible profiles. 

\label{app:quantifier-elimination}

\end{document}